\newcommand{\bra}[1]{{\left\langle{#1}\right\vert}}
\newcommand{\ket}[1]{{\left\vert{#1}\right\rangle}}
\newcommand{\qw}[1][-1]{\ar @{-} [0,#1]}
\newcommand{\qwx}[1][-1]{\ar @{-} [#1,0]}
\newcommand{\cw}[1][-1]{\ar @{=} [0,#1]}
\newcommand{\cwx}[1][-1]{\ar @{=} [#1,0]}
\newcommand{\gate}[1]{*+<.6em>{#1} \POS ="i","i"+UR;"i"+UL **\dir{-};"i"+DL **\dir{-};"i"+DR **\dir{-};"i"+UR **\dir{-},"i" \qw}
\newcommand{\measureD}[1]{*{\xy*+=<0em,.1em>{#1}="e";"e"+UR+<0em,.25em>;"e"+UL+<-.5em,.25em> **\dir{-};"e"+DL+<-.5em,-.25em> **\dir{-};"e"+DR+<0em,-.25em> **\dir{-};{"e"+UR+<0em,.25em>\ellipse^{}};"e"+C:,+(0,1)*{} \endxy} \qw}
\newcommand{\control}{*!<0em,.025em>-=-<.2em>{\bullet}}
\newcommand{\ctrl}[1]{\control \qwx[#1] \qw}
\newcommand{\targ}{*+<.02em,.02em>{\xy ="i","i"-<.39em,0em>;"i"+<.39em,0em> **\dir{-}, "i"-<0em,.39em>;"i"+<0em,.39em> **\dir{-},"i"*\xycircle<.4em>{} \endxy} \qw}
\newcommand{\multigate}[2]{*+<1em,.9em>{\hphantom{#2}} \POS [0,0]="i",[0,0].[#1,0]="e",!C *{#2},"e"+UR;"e"+UL **\dir{-};"e"+DL **\dir{-};"e"+DR **\dir{-};"e"+UR **\dir{-},"i" \qw}
\newcommand{\ghost}[1]{*+<1em,.9em>{\hphantom{#1}} \qw}
\newcommand{\gategroup}[6]{\POS"#1,#2"."#3,#2"."#1,#4"."#3,#4"!C*+<#5>\frm{#6}}
\newcommand{\rstick}[1]{*!L!<-.5em,0em>=<0em>{#1}}
\newcommand{\lstick}[1]{*!R!<.5em,0em>=<0em>{#1}}
\newcommand{\Qcircuit}{\xymatrix @*=<0em>}
\newcommand{\proj}[2]{\ket{#1}\!\bra{#2}}
\newcommand{\braket}[2]{\left \langle #1 | #2 \right \rangle}
\newcommand{\norm}[1]{\left|\left|#1\right|\right|}
\newcommand{\cclass}[1]{\mathsf{#1}}
\newcommand{\ctl}[1]{\text{CTRL-}#1}
\newtheorem{definition}{Definition}
\newtheorem{theorem}{Theorem}
\newtheorem{lemma}{Lemma}
\newtheorem{corollary}{Corollary}
\newtheorem{procedure}{Procedure}
\title{Interactive proofs for $\cclass{BQP}$ via self-tested graph states}
\author{
Matthew McKague \\ 
Jack Dodd Centre for Quantum Technology \& \\ 
Department of Computer Science \\
University of Otago \\\
\url{matthew.mckague@otago.ac.nz}}
\begin{document}
\maketitle
\begin{abstract}
Using the measurement-based quantum computation model, we construct interactive proofs with non-communicating quantum provers and a classical verifier.  Our construction gives interactive proofs for all languages in $\cclass{BQP}$ with a polynomial number of quantum provers, each of which, in the honest case, performs only a \emph{single measurement.}

Our techniques use self-tested graph states.  In this regard we introduce two important improvements over previous work.  Specifically, we derive new error bounds which scale polynomially with the size of the graph compared with exponential dependence on the size of the graph in previous work.  We also extend the self-testing error bounds on measurements to a very general set which includes the adaptive measurements used for measurement-based quantum computation as a special case.
\end{abstract}

\section{Introduction}
We seek to find interactive proofs between quantum provers and classical verifiers, both limited to polynomial-time calculations.  That is to say, we would like to have a procedure where a classical computer (the ``verifier''), limited to a polynomial number of operations, can query a quantum computer (the ``prover''), also limited to a polynomial number of operations, and tap into its resources in order to perform some computation.  Additionally, if the verifier unhappily interacts with a malicious quantum computer it should be able to detect this and abort the calculation, even if the prover has unlimited computational resources.  To make the challenge less trivial, there should exist interactive proofs for problems that are harder than the verifier could solve by itself and ideally there should exist interactive proofs for any problem that the prover can solve by itself.

This problem is interesting for a variety of reasons.  First, as a complexity theoretic question it has obvious value in further developing the theory of how powerful quantum computers are.  From a practical computing point of view, it would be nice to know whether it would be possible to have cheap classical computers interact with large (and presumably more expensive) quantum ``servers,'' paying for services as required.  Of course the users would like to know that they get their money's worth, and interactive computations can confirm this.  As well, from an experimental point of view, interactive proofs can be used to verify the operation of some experimental apparatus.  This is of particular importance for quantum experiments since it may well be that, for large experiments, it is impossible (in practical terms) to classically compute what the predictions of the quantum model are, leading to questions about the falsifiability of the quantum formalism  \cite{Aharonov:2012:Is-Quantum-Mech}. 

Very little is known about this problem as stated.  Clearly the set of languages recognizable by a poly-time classical verifier and poly-time quantum prover lies somewhere between $\cclass{P}$ and $\cclass{BQP}$ since on one hand the verifier can ignore the prover, and on the other hand the verifier and honest prover together form a poly-time quantum machine.  As well, there do exist interactive proofs for all of $\cclass{BQP}$ since $\cclass{BQP} \subseteq \cclass{PSPACE}$ and $\cclass{PSPACE} = \cclass{IP}$ \cite{Shamir:1992:IP--PSPACE,Lund:1990:Algebraic-metho}, but the known constructions require the prover to solve $\cclass{PSPACE}$-complete problems.  Constructions for particular problems are known (\cite{McKague:2010:Interactive-pro} for example) and of course anything in $\cclass{NP}$ has a trivial interactive proof, a general construction has not yet been found.  

Current techniques 
\cite{
	Acin:2007:Device-Independ,
	Bardyn:2009:Device-independ,
	Magniez:2006:Self-testing-of,
	Mayers:2004:Self-testing-qu,
	Mayers:1998:Quantum-Cryptog,
	McKague:2012:Robust-self-tes,
	McKague:2010:Self-testing-gr,
	Miller:2012:Optimalrobustquantum,
	Pironio:2009:Device-independ,
	Pironio:2010:Random-numbers-,
	Reichardt:2012:A-classical-lea}
for probing the behaviour of adversarial quantum systems all rely on entanglement and hence in order to make use of them we must introduce more provers.  Reichardt et al.\ \cite{Reichardt:2012:A-classical-lea} considered the case of two provers.  Here we will consider the case of a polynomial number of provers, but each limited to a single operation, and show that we can recognize all of $\cclass{BQP}$ with this model.

Our construction uses two major components.  One is self-testing and the other is measurement-based quantum computation.  Self-testing allows us to confirm that the provers hold on to a graph state and perform certain measurements on this state when instructed to do so.  Measurement-based quantum computation allows us to use these verified resources to perform the desired calculation.

\subsection{Previous work}

Self-testing was introduced by Mayers and Yao \cite{Mayers:1998:Quantum-Cryptog, Mayers:2004:Self-testing-qu}.  Their goal was to establish that a pair of devices share a maximally entangled pair of qubits, and that the devices implement some specific measurements, all while making a minimum of assumptions on the devices.  Most importantly they make no assumptions about the dimension of the Hilbert space associated with the devices.  Meanwhile, van Dam et al.\ \cite{van-Dam:1999:Self-Testing-of} considered testing gates in the context of known Hilbert space dimension.  Magniez et al.\ \cite{Magniez:2006:Self-testing-of} combined the two approaches, allowing testing of entire quantum circuits.  Further refinements, including simpler proof techniques and extension to complex measurements appear in \cite{McKague:2010:Quantum-Informa} and \cite{McKague:2011:Generalized-Sel}.  Self-testing of graph states, critical for our application, appears in \cite{McKague:2010:Self-testing-gr}.  Miller and Shi \cite{Miller:2012:Optimalrobustquantum} also give a general construction for self-testing states based on any XOR game.

These previous works all require additional assumptions.  In particular, they assume that devices can be used repeatedly in an independent and identical manner in order to gather necessary statistics.  As well, \cite{Magniez:2006:Self-testing-of} assumes that certain states are in a product form.  McKague and Magniez (in preparation) remove these assumptions for quantum circuits using techniques similar to those used here.

Stemming from a different heritage, Broadbent et al.\ \cite{Broadbent:2008:Universal-blind} considered a semi-quantum verifier who only prepares single qubit states, and a fully quantum prover.  They give a construction for an interactive proof for any language in $\cclass{BQP}$.  Additionally, they describe (without rigorous proof) an extension using two quantum provers and a classical verifier.  Their construction uses measurement-based quantum computation.  Aharonov et al.\ \cite{Aharonov:2008:Interactive-Pro} also describe a semi-quantum protocol using a constant sized quantum verifier and a polynomial-time quantum prover.

In the context of quantum cryptography, Ac\'{i}n et al.\ \cite{Acin:2007:Device-Independ} introduced \emph{device independent quantum key distribution.}  This model is very similar to that used here.  However, rather than computation the goal is to expand a private shared key.  From a physics perspective, Bardyn et al.\ \cite{Bardyn:2009:Device-independ} and McKague et al.\ \cite{McKague:2012:Robust-self-tes} consider self-testing type entanglement tests from the perspective of Bell inequalities.

Most recently, Reichardt et al.\ \cite{Reichardt:2012:A-classical-lea} proved a very general result allowing two non-communicating quantum provers along with a classical verifier to recognize all of $\cclass{BQP}$.  The core of their result is a self-test, using only two provers, for multiple EPR pairs and measurements. Using this tool they show how to test individual gates and perform measurements via teleportation.  Finally, they combine the results to give an interactive proof for entire quantum circuits.

Measurement-based quantum computation, also known as one-way quantum computation or graph state computation, was introduced by Raussendorf and Briegel \cite{Raussendorf:2001:A-One-Way-Quant, Raussendorf:2003:Measurementbasedquantum}.  In this model of computation we begin with a  graph state and perform measurements on each vertex, with the sequence of vertices and the measurement bases used determined by the calculation we wish to make.  The outcome of the calculation is then derived from the measurement outcomes.  One important aspect of the measurements is that they are adaptive - the measurement basis for a particular vertex can depend on the outcomes of measurements on previous vertices.  This allows us to perform any calculation in $\cclass{BQP}$.  The particular variety of graph-state computation that we use is due to Mhalla and Perdrix \cite{Mhalla:2012:Graph-States-Pi}. The advantage of this model is that it requires measurements in the $X$-$Z$ plane only.

\subsection{Contributions}

We make several important contributions.  First, we modify the proof for the graph state self-test from \cite{McKague:2010:Self-testing-gr}, allowing a tighter error analysis.  For graphs on $n$ vertices the error in the state is upper bounded by $O(\sqrt{n})\epsilon^{\frac{1}{4}}$ (where $\epsilon$ bounds the noise in the experimental outcomes) rather than $O(2^{\frac{n}{2}}) \epsilon^{\frac{1}{2}}$ as in \cite{McKague:2010:Self-testing-gr}.  This exponential improvement in the error scaling in $n$ makes it possible to self-test with a polynomial number of trials to achieve a constant error.  We also analyse the error in the case of adaptive measurements, which are required for measurement-based quantum computing.  Additionally we extend the graph state test to $X$-$Z$ plane measurements  in order to achieve universal computation.  Finally we show how to use the self-test in order to test the provers for honesty in the interactive proof scenario.  Combining this test for honesty with measurement-based quantum computation we achieve the following theorem:

\begin{theorem}
\label{theorem:bqpinteractiveproofs}
For every language $L \in \cclass{BQP}$ and input $x$ there exists a poly$(|x|)$-time verifier $V$ which interacts with a poly$(|x|)$ number of non-communicating quantum provers such that
\begin{itemize}
	\item If $x \in L$ then there exists\footnote{
			The honest provers and the verifier are, of course, members of a uniform set, i.e. a description of the verifier and provers can be generated by a polynomial-time Turing machine.
		} 
		a set of honest quantum provers, each of which performs a single operation, for which $V$ accepts with probability at least $c=2/3$.
	\item If $x \notin L$ then, for any set of provers, $V$ accepts with probability no more than $s=1/3$.
\end{itemize}
\end{theorem}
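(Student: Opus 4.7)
The plan is to combine the Mhalla--Perdrix MBQC construction with the improved graph-state self-test in a randomised ``test or compute'' protocol. Given $L \in \cclass{BQP}$ and input $x$, I first amplify so that one obtains a quantum circuit deciding $x \in L$ with error much smaller than $1/6$, and then compile this circuit into the Mhalla--Perdrix form. This yields a graph $G$ on $n = \mathrm{poly}(|x|)$ vertices together with an adaptive pattern of $X$--$Z$ plane measurements whose joint outcome decides $x \in L$. For a single round of the protocol the verifier uses one fresh prover per vertex, each holding one qubit of $\ket{G}$ and, in the honest case, performing a single measurement.

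In each round $V$ independently chooses between a test round and a computation round. In a test round, $V$ sends each prover a setting drawn from the distribution used in the extended graph-state self-test (now including $X$--$Z$ plane measurements and enjoying the improved polynomial error scaling). In a computation round, $V$ sequentially sends the MBQC pattern, adaptively choosing the basis for each prover from the outcomes already reported by earlier provers. The two distributions are designed so that, from the marginal view of any single prover, the measurement instruction received is identically distributed in both modes; therefore no strategy of the provers can behave differently on the two modes without being detected as a discrepancy in the test statistics.

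Completeness is immediate: honest provers produce exactly the correlations required by the self-test and, up to the inherent MBQC randomness, yield the correct output of the amplified circuit. Repeating the whole protocol $\mathrm{poly}(|x|)$ times and combining a Chernoff bound on test statistics with a majority over computation outcomes gives acceptance probability at least $2/3$. For soundness, suppose $V$ accepts with probability exceeding $1/3$ on some $x \notin L$. Then across repetitions the provers must be passing test rounds at close to the honest rate, say up to some deviation $\epsilon$. Applying the new self-testing theorem yields a local isometry under which the joint prover state and their measurement operators are $O(\sqrt{n})\epsilon^{1/4}$-close to the honest graph state and the honest $X$--$Z$ plane measurements. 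By the indistinguishability of the two modes this closeness transfers to the computation rounds, so the distribution on MBQC outcomes is $\mathrm{poly}(n)\epsilon^{1/4}$-close to that of the correctly executed circuit; choosing parameters so that this error is below the amplification margin contradicts acceptance probability $>1/3$.

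The main obstacle is the error propagation step: one must show that a self-testing bound proved for test-round statistics really does control the joint behaviour of the provers in the adaptive computation rounds, where the measurement basis assigned to one prover depends on outcomes already returned by others. This is precisely where the polynomial-in-$n$ error scaling and the adaptive $X$--$Z$ plane extension developed earlier in the paper are essential; with the earlier exponential-in-$n$ bound the aggregated deviation across the $n$ vertices would be uselessly large. A secondary difficulty is engineering the test and computation distributions so that individual provers cannot tell them apart, which constrains the admissible MBQC patterns and motivates using only $X$--$Z$ plane measurements.
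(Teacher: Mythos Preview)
Your outline follows the paper's approach---compile into Mhalla--Perdrix MBQC on a triangular cluster state, use a randomised test-or-compute protocol, then amplify---but two points are stated incorrectly and a third is organised differently from the paper.

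First, the claim that the marginal distribution of queries to a single prover is \emph{identical} in the two modes is false and is not what the paper uses. In a computation round prover $v$ always receives one of $R_v(\pm\theta_v)$; in a test round it may receive $X$, $Z$, $R_v(\pm\theta_v)$, or nothing. What the paper actually exploits (section~\ref{sec:interactiveproof} and the discussion in section~1.3) is weaker and sufficient: each prover's strategy is a fixed map from the four possible query labels to observables, and \emph{every query that appears in the computation also appears in the test}. The self-test then certifies each observable $R'_v(\pm\theta_v)$ directly (Lemma~\ref{lemma:xzplanemeasurements}, Lemma~\ref{lemma:oneshottest}), and Corollary~\ref{cor:adaptivemeasurement} propagates this to adaptive measurements. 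No distributional indistinguishability needs to be engineered, so your ``secondary difficulty'' does not arise.

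Second, the bound $O(\sqrt{n})\epsilon^{1/4}$ you quote is only for the $X$/$Z$ observables (Theorem~\ref{theorem:graphstatetest}); for $R(\theta)$ the paper gets $O(n^{1/4})\epsilon^{1/8}$ via Lemma~\ref{lemma:xzplanemeasurements}, and the full one-shot analysis in Lemma~\ref{lemma:oneshottest} and Corollary~\ref{cor:oneshottestadaptivemeasurements} yields a gap of order $\delta^{8}/n^{11}$. Still polynomial, but your stated scaling is off.

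Third, the paper does not separately track ``test statistics'' and ``computation outcomes'' as you propose. Instead it establishes a \emph{per-round} gap $c_{ip}-s_{ip}$ by a two-case analysis (Lemma~\ref{lemma:interactiveproofoneshot}): either the provers' fixed observables in that round pass the test with high probability, in which case the computation is accurate, or they do not, in which case the test rejects often. The choice of $q$ is optimised to equalise the two cases. Amplification (Procedure~\ref{procedure:gapamplification}) then simply counts total accepts over $N$ independent rounds and applies Hoeffding, which works even though dishonest provers may vary their strategy round to round. Your aggregate-statistics argument would need additional care to handle such variation; the paper's per-round approach sidesteps this.
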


Along the way we we also prove several results which may be of independent interest.  In particular our error analysis for triangular cluster states can be applied to general graph states and stabilizer states enabling self-testing of these states with robust error bounds.  As well, our error bounds for adaptive measurements are quite general, applying to general quantum circuits which incorporate the untrusted measurements performed by the provers.

Compared with the result of Reichardt et al.\ \cite{Reichardt:2012:A-classical-lea} our contribution is to provide a different construction with different underlying computational model, that of measurement-based quantum computation.  While they use a constant number of provers, each of which runs in polynomial time, we use a polynomial number of provers, each of which runs in constant time (indeed, each prover only performs a single measurement).  The advantage of our technique is that, since only measurements are used, there is no need for any process tomography.  As well, the provers are very easy to implement, requiring only the ability to measure in four different bases (once an appropriate graph state is prepared).  Finally, there is a very nice conceptual advantage, which is that \emph{the measurement-based calculation that is performed is exactly what would be done with trusted devices}, whereas the Reichardt et al.\ construction requires qubits to be teleported between the two provers at each gate.

\subsection{Overview of construction}

We can divide our interactive proof into two distinct units: the calculation and the test for honesty.  The calculation is exactly the same measurement-based quantum computation that would be performed for trusted devices.  The test for honesty is derived from self-testing.

We give some technical details of measurement-based quantum computation in section~\ref{sec:mbqcoverview}.  The procedure can be summarized as:
\begin{procedure}\ 
	\begin{enumerate}
		\item Prepare a universal graph state
		\item Perform measurements to obtain a computation-specific graph state
		\item Measure vertices in sequence, adapting bases according to outcomes from previous measurements
		\item Calculate the final outcome
	\end{enumerate}
\end{procedure}
In order to perform the computation we need the provers to share a graph state and be able to measure vertices.  The verifier performs all the classical computation, including deriving the measurement patterns, the required graph state, and the final outcome.

Our main contributions lie in constructing a test for honesty.  Here we must define some test such that if the provers were to cheat on the calculation then they will fail the test.  Our test for honesty is based on the graph state self-test, originally presented in \cite{McKague:2010:Self-testing-gr}.  It allows the verifier to establish that the provers have access to high quality copies of the desired graph state and $X$ and $Z$ Pauli measurements.  We give details for this test, including our improved proof in section~\ref{sec:graphstatetest}.  

In addition, for the measurement-based quantum computation we also need measurements covering the entire $X$-$Z$ plane.  This is a simple extension of the graph-state test, which we present in section~\ref{sec:xyplanemeasurements}.

The graph-state test, with extensions, define a set of subtests, each of which the provers must pass.  To administer the entire test, the verifier just chooses one of these subtests at random.  If the provers actually hold the required graph state and perform the measurements faithfully then they will pass the test with high probability, and if their behaviour deviates too much from the honest provers then they will pass with a lower probability.  The gap is $1/poly(n)$ for a constant error bound and is calculated in section~\ref{sec:oneshottest}.

With all of this in place we obtain a simple statement:  if the provers deviate from the honest behaviour by more than $\delta$ (see section~\ref{sec:defclose} for a definition), then they will pass the test with probability at most $c_{test} - \epsilon$, where $\epsilon$ is a function of $\delta$ and $c_{test}$ is the probability of honest provers passing the test.  Hence if the provers attempt to cheat we will catch them.  The details are given in section~\ref{sec:oneshottest}.

Having shown how to test whether the provers are honest, and how to perform the desired calculation, we must put these two components together to form the interactive proof.  The structure is as follows: randomly either check for honesty or perform the calculation.  The critical observation is that the queries to  an individual prover look the same whether the verifier is testing or calculating.  More specifically, every query that appears as part of a calculation also appears as part of the test for honesty.  Hence \emph{provers who attempt to cheat on the calculation can be caught by the test for honesty.}

The final technical piece of the puzzle is to determine with what probability to test for honesty.  We give the derivation in section~\ref{sec:interactiveproof}.

\section{Technical introduction}
\label{sec:technicalintro}
In this section we present some notation and definitions used in the construction and proof.  Further technical results are collected in appendix~\ref{appendix:technicallemmas} for convenience.

\subsection{Measurement-based Quantum Computation}
\label{sec:mbqcoverview}

Here we give a general overview of measurement-based quantum computation. Our goal is to provide sufficient background for readers to understand the major features of measurement-based quantum computation (MBQC).  For more detail we refer the reader to \cite{Raussendorf:2001:A-One-Way-Quant, Raussendorf:2003:Measurementbasedquantum}.

To understand how MBQC works, we will show how to turn a simple teleportation circuit into a circuit that applies a gate encoded in a measurement angle.  Let us start with a basic teleportation circuit as in figure~\ref{fig:teleportqubit}.  Rather than performing entanglement swapping with an EPR pair held in memory, as in the usual case, we entangle the input and output qubits directly using a $\ctl{X}$ gate.  The classical result of the measurement in the $X$ basis is used to control a $Z$ gate, which applies a necessary correction.  Direct calculation shows that the input state appears in the output register after the circuit is applied.  In the second circuit in figure~\ref{fig:teleportqubit}, we convert the $\ctl{X}$ gate to a $\ctl{Z}$ gate and two Hadamard gates.  In the third circuit in figure~\ref{fig:teleportqubit}, the left Hadamard simply changes the initial state from $\ket{0}$ to $\ket{+}$.  We move the right Hadamard past the $Z$ correction, which then becomes an $X$ correction gate.

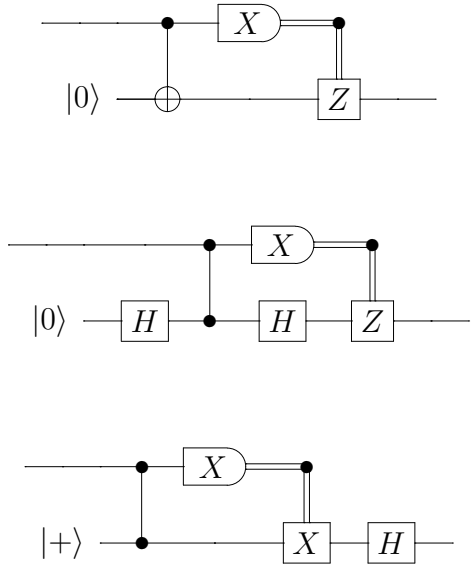
\begin{figure}[h]
\[
	\Qcircuit @C=0.5cm @R=0.5cm{
		& \qw & \qw &  \ctrl{1} & \measureD{X} & \control \cw \cwx[1] \\
		& & \lstick{\ket{0}}  & \targ \qw & \qw & \gate{Z} & \qw & \qw 
	}
\]
\vspace{1cm}
\[
	\Qcircuit @C=0.5cm @R=0.5cm{
		& \qw & \qw & \qw &  \ctrl{1} & \measureD{X} & \control \cw \cwx[1] \\
		&  & \lstick{\ket{0}} & \gate{H}  & \control \qw & \gate{H} & \gate{Z} & \qw & \qw 
	}
\]
\vspace{1cm}
\[
	\Qcircuit @C=0.5cm @R=0.5cm{
		& \qw & \qw &  \ctrl{1} & \measureD{X} & \control \cw \cwx[1] \\
		& & \lstick{\ket{+}}  & \control \qw & \qw & \gate{X} & \gate{H} & \qw 
	}
\]
\caption{Three equivalent basic teleportation circuits.  In the second circuit the $\ctl{X}$ gate is replaced with a $\ctl{Z}$ gate sandwiched between two Hadamard gates.  In the third circuit the left Hadamard gate changes $\ket{0}$ to $\ket{+}$ and the right Hadamard gate moves past the $Z$ correction, changing it to an $X$.}
\label{fig:teleportqubit}
\end{figure}

Now suppose that we apply a unitary $U$ to the qubit as in figure~\ref{fig:teleportqubitunitary}.  For this construction we suppose that $U(\theta) = \exp(\frac{i \theta Z}{2})$ so that it commutes with the $\ctl{Z}$ as in the second circuit of figure~\ref{fig:teleportqubitunitary}.  Now we can see $U$ as a modification of the measurement basis as in the final circuit.  Since we originally measured in the $X$ basis the new measurement basis will be in the $X$-$Y$ plane of the Bloch sphere:  $U^\dagger XU = R(\theta) = \cos \theta\, X + \sin \theta\, Y$.  

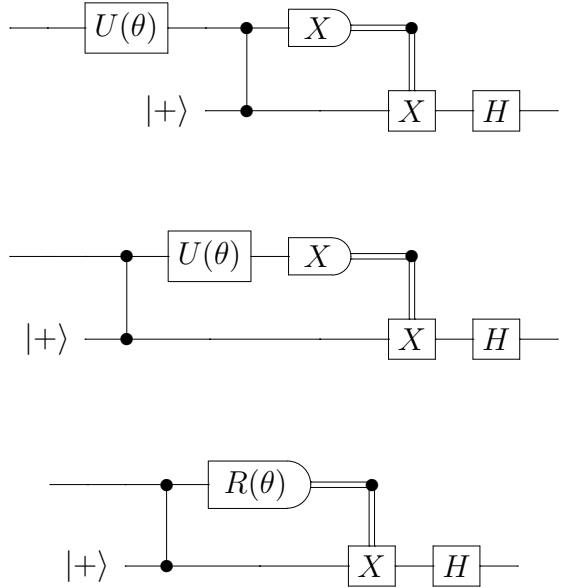
\begin{figure}[h]
\[
	\Qcircuit @C=0.5cm @R=0.5cm{
		& \qw & \gate{U(\theta)} & \qw &  \ctrl{1} & \measureD{X} & \control \cw \cwx[1] \\
		& & & \lstick{\ket{+}}  & \control \qw & \qw & \gate{X} & \gate{H} & \qw 
	}
\]
\vspace{1cm}
\[
	\Qcircuit @C=0.5cm @R=0.5cm{
		& \qw  & \qw &  \ctrl{1} & \gate{U(\theta)} & \measureD{X} & \control \cw \cwx[1] \\
		& & \lstick{\ket{+}}  & \control \qw & \qw & \qw & \gate{X} & \gate{H} & \qw 
	}
\]
\vspace{1cm}
\[
	\Qcircuit @C=0.5cm @R=0.5cm{
		& \qw & \qw &  \ctrl{1} & \measureD{R(\theta)} & \control \cw \cwx[1] \\
		& & \lstick{\ket{+}}  & \control \qw & \qw & \gate{X} & \gate{H} & \qw 
	}
\]
\caption{Three equivalent circuits combining a unitary with teleportation.  In the second circuit the fact that $U(\theta) = \exp{i \frac{\theta Z}{2}}$ is diagonal means that it commutes with the $\ctl{Z}$ gate.  In the third circuit the $U(\theta)$ gate has modified the measurement basis to $R(\theta) = \cos \theta \, X + \sin \theta \, Y$.}
\label{fig:teleportqubitunitary}
\end{figure}

\begin{figure}[h!]
\[
	\Qcircuit @C=0.5cm @R=0.5cm{
		& \qw & \qw &  \ctrl{1} & \measureD{R(\theta_1)} & \control \cw \cwx[1] \\
		& &  \lstick{\ket{+}}  & \control \qw & \qw & \gate{X} & \ctrl{1} & \measureD{R(\theta_2)} & \control \cw \cwx[1] \\
		& & & & & \lstick{\ket{+}}  & \control \qw & \qw & \gate{X} & \qw  \\ 
	}
\]
\vspace{1cm }
\[
	\Qcircuit @C=0.5cm @R=0.5cm{
		& \qw & \qw &  \ctrl{1} & \measureD{R(\theta_1)} & \control \cw \cwx[1] \\
		& &  \lstick{\ket{+}}  & \control \qw & \ctrl{1} & \gate{X} \cwx[1]  & \measureD{R(\theta_2)} & \control \cw \cwx[1] \\
		& &  \lstick{\ket{+}}  & \qw & \control \qw & \gate{Z}	& \qw  &  \gate{X} & \qw  \\ 
	}
\]
\vspace{1cm }
\[
	\Qcircuit @C=0.5cm @R=0.5cm{
		& \qw & \qw &  \ctrl{1} & \measureD{R(\theta_1)} & \control \cw \cwx[2] & \control \cw \cwx[1] \\
		& &  \lstick{\ket{+}}  & \control \qw & \ctrl{1} & \qw  & \measureD{R(\pm \theta_2)} & \control \cw \cwx[1] \\
		& &  \lstick{\ket{+}}  & \qw & \control \qw & \gate{Z}	& \qw  &  \gate{X} & \qw  \\ 
	}
\]
\caption{Two cascaded teleportations.  The first circuit teleports the first qubit to the third, applying $H U(\theta_2) H U(\theta_1)$.  In the second circuit we have moved the $\ctl{Z}$ to the left past the $X$ correction, inducing a $Z$ correction on the third qubit, but allowing all the $\ctl{Z}$ gates to be applied before any measurements are made.  Finally, since $XR(\theta)X = R(-\theta)$ the $X$ correction can be omitted in favour of a change of measurement basis.}
\label{fig:cascadedteleportunitary}
\end{figure}
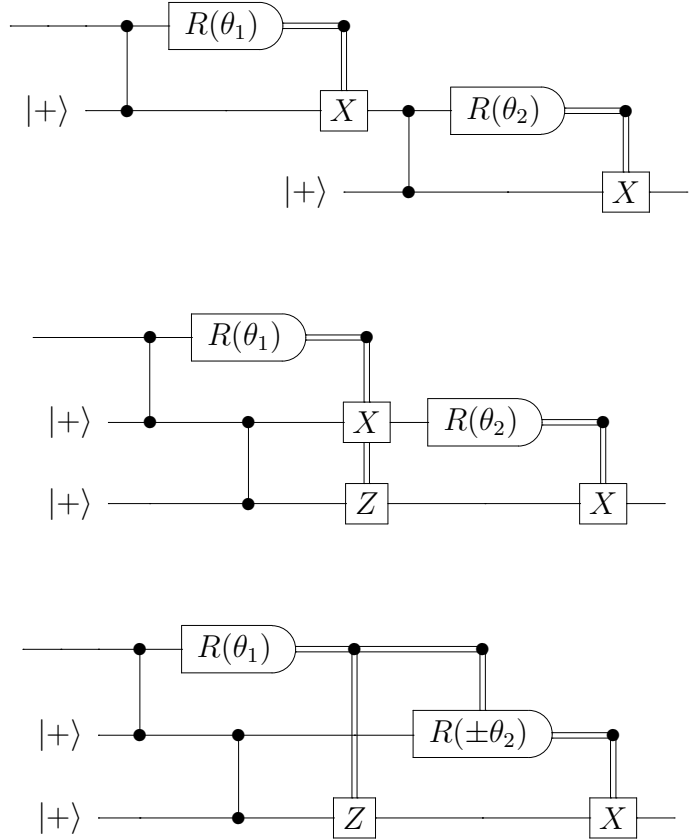

Next we consider how multiple teleportations work together.  First we consider the case of two cascaded teleportations as in figure~\ref{fig:cascadedteleportunitary}.  Using measurement angles $\theta_1$ and $\theta_2$, the overall unitary applied by the circuit is $H U(\theta_2) H U(\theta_1)$.  In the second circuit of figure~\ref{fig:cascadedteleportunitary} we have moved the second $\ctl{Z}$ gate, used to entangled the second and third qubits together, to the left past the $X$ correction on the second qubit.  This induces a $Z$ correction on the third qubit, controlled along with the $X$ correction.  Finally, in the third circuit we incorporate the $X$ correction into the measurement angle on the second qubit.  Indeed, since $XR(\theta)X = R(-\theta)$, the angle $\theta_2$ becomes $-\theta_2$ whenever an $X$ correction is needed.

We have seen how to convert $X$ corrections into changes in the measurement angle.  $Z$ corrections are even easier to apply.  Since $ZR(\theta)Z = - R(\theta)$, a $Z$ correction corresponds to simply inverting the output of a measurement.  Figure~\ref{fig:xzcorrectionmeasurement} shows how $X$ and $Z$ corrections together modify the behaviour of the measurement.

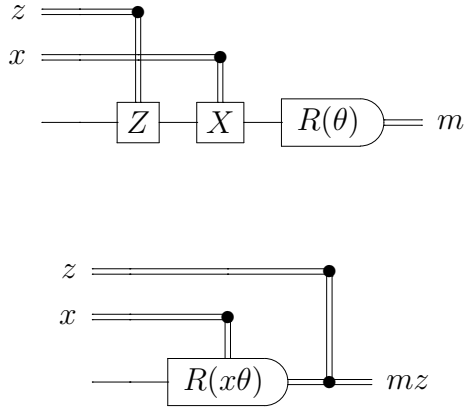
\begin{figure}[h!]
\[
	\Qcircuit @C=0.5cm @R=0.5cm{
		\lstick{z} & \cw & \control \cw \cwx[2]  \\
		\lstick{x} & \cw & \cw & \control \cw \cwx[1] \\
		& \qw & \gate{Z} & \gate{X} & \measureD{R(\theta)} &  \rstick{m} \cw
	}
\]
\vspace{1cm}
\[
	\Qcircuit @C=0.5cm @R=0.5cm{
		\lstick{z} & \cw & \cw & \control \cw \cwx[2]  \\
		\lstick{x} & \cw & \control \cw \cwx[1] \\
		& \qw & \measureD{R(x \theta)} & \control \cw &  \rstick{mz} \cw
	}
\]
\caption{Incorporating $X$ and $Z$ corrections into measurements.  We have $X$ and $Z$ corrections according to some previous measurement results $x,z \in \{\pm 1\}$.  The $X$ correction is incorporated into the measurement as a change in the angle.  The $Z$ correction is incorporated by flipping the outcome of the measurement.}
\label{fig:xzcorrectionmeasurement}
\end{figure}

So far our construction has the following features: we can apply a sequence of unitaries $HU(\theta_n) \dots HU(\theta_1)$ to a qubit by repeatedly teleporting the qubit and varying the measurement angle used in the teleportation.  The necessary corrections from the teleportation can be incorporated into subsequent measurement angles and outcomes, and all the entangling $\ctl{Z}$ gates can be pushed to the start of the procedure.  Hence we can perform a single qubit circuit by first building a large entangled state using $\ket{+}$ states and $\ctl{Z}$ gates, and then measuring the qubits in sequence, adapting measurement angles as we go.  Note that the gates $HU(\theta)$ form a universal set.

In order to perform general circuits we need one more piece of the puzzle, which is two-qubit gates.  In this case we obtain universality by including $\ctl{Z}$ gates.  These can be applied at any time during the circuit and appear as additional $\ctl{Z}$ gates on target qubits when we translate into the teleportation scheme.  These can be treated similarly to the $\ctl{Z}$ gates which are used to entangle input and output qubits for teleportation.  In particular, we can push the $\ctl{Z}$ gates back to the beginning of the circuit, past $X$ and $Z$ corrections.  This induces extra corrections which must be taken into account on subsequent measurements.

Now we have the complete picture.  A calculation begins by preparing many $\ket{+}$ states and entangling them with $\ctl{Z}$ gates.  Then they are measured one at a time, and measurements are adjusted to incorporate $X$ and $Z$ corrections as required.

The initial state, prepared by applying $\ctl{Z}$ gates to qubits in the $\ket{+}$ state, is called a \emph{graph state} and will play an important role in our results here.

Our construction will use a slightly different model of measurement-based quantum computation.  Although the usual and most easily understood method utilises measurements in the $X$-$Y$ plane, we will instead use a different model, due to Mahalla and Perdrix \cite{Mhalla:2012:Graph-States-Pi}, which requires only $X$-$Z$ plane measurements.  In particular they prove the following theorem:

\begin{theorem}[Mahalla and Perdrix \cite{Mhalla:2012:Graph-States-Pi}] \label{theorem:graphstatecomputation}
Triangular cluster states are universal resources for measurement-based computation based on $X$-$Z$ plane measurements.
\end{theorem}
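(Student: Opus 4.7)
The plan is to treat the statement in two parts: first identify what operations a single adaptive $X$-$Z$ plane measurement actually implements, and then show that the triangular lattice provides enough structural freedom to make a universal gate set accessible. A short warm-up motivates the choice of graph. Teleporting through a cluster-state edge with an $X$-$Z$ measurement at angle $\theta$ implements $HR_Y(\theta)$ on the incoming logical qubit (since $\cos\theta\,X + \sin\theta\,Z = R_Y(-\theta) X R_Y(\theta)$, so the measurement is equivalent to rotating by $R_Y(\theta)$ and then measuring $X$), and two such measurements in series compose to $R_Y(\theta_1-\theta_2)$. A linear chain is therefore stuck inside a one-parameter subgroup, which shows that the diagonal edges of the triangular lattice are doing real work rather than just bookkeeping.

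My first step is to phrase determinism in the language of generalised flow (gflow), specialised to the $X$-$Z$ measurement plane. For each non-output vertex $v$ I would exhibit a correcting set of later vertices so that the $X$- and $Z$-byproduct operators produced by the $X$-$Z$ measurement at $v$ can be propagated forward and absorbed into later adaptive angles and sign flips. The key technical claim is that the triangular adjacency pattern satisfies the parity conditions for an $X$-$Z$-only gflow: the three-fold connectivity of each elementary triangle provides precisely the extra stabiliser the square lattice lacks, namely a way to route a $Z$-byproduct orthogonally to its natural $X$-byproduct partner. I would prove this by choosing correcting sets supported on the two ``downstream'' corners of each triangle and checking the parity conditions of Browne--Danos--Kashefi--Perdrix directly on a fundamental tile, then inducting over the lattice.

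My second step is to compile an explicit universal scheme against this gflow. A horizontal strip of triangles serves as a logical wire along which apex and flat vertices alternately produce $HR_Y(\theta)$ factors; routing a logical qubit through a triangular bend induces an effective basis change that, together with the $R_Y$ rotations, generates a dense subgroup of $SU(2)$ (for example by exhibiting $H$ and an irrational-angle rotation). Inter-strip edges are used exactly as in the standard cluster-state construction to implement $CZ$ between logical wires, yielding universality. The main obstacle is the gflow verification of the second paragraph: I expect this combinatorial parity check to be the genuinely new ingredient, since it is precisely where the square lattice fails and the triangular enrichment becomes essential, whereas the universality compilation in the third paragraph is a relatively routine Solovay--Kitaev style argument once the flow is in hand.
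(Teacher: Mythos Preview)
The paper does not actually prove this theorem; it is cited from Mhalla and Perdrix, and the paper only summarises their argument in one sentence: ``showing that triangular cluster states can be converted into other graph states using measurements alone, and showing that $X$-$Z$ measurements suffice for universal computation.'' That is a reduction strategy: carve the triangular lattice down (via suitable Pauli measurements) to a graph state already known to be universal, and separately establish that the $X$-$Z$ plane is a viable measurement plane. Your proposal instead works directly on the triangular lattice throughout, establishing an $X$-$Z$-plane gflow on it and then compiling a universal gate set against that flow. Both routes are legitimate; the reduction lets one defer universality to known resource states, while your direct approach keeps the resource fixed and makes the role of the triangular connectivity explicit. Your observation that a linear chain with $X$-$Z$ measurements is trapped in the $R_Y$ one-parameter subgroup is a nice way to see why some two-dimensional enrichment is forced.

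Two points to tighten. First, the gflow conditions for the $X$-$Z$ plane differ from the familiar $X$-$Y$ ones (in particular one requires $v \in g(v)$), so you should state them explicitly before checking parities on a fundamental tile; the induction over the lattice is only as good as that base case. Second, your single-qubit universality claim (``$H$ and an irrational-angle rotation'') presumes a continuum of angles. In the setting of this paper each vertex carries a \emph{finite} basis list, so density in $SU(2)$ genuinely goes through a Solovay--Kitaev argument on a discrete generating set such as $\{H, R_Y(\pi/4)\}$; you acknowledge this at the end, but it should be the stated mechanism rather than an afterthought.
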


Triangular cluster states are graph states where the underlying graph is a triangular lattice.  As we shall see, these particular graph states are particularly easy to self-test since every vertex is in a triangle.  The proof of the above theorem consists of two parts: showing that triangular cluster states can be converted into other graph states using measurements alone, and showing that $X$-$Z$ measurements suffice for universal computation.  The details of the proof are not important for our results here.  What is important is that the overhead introduced by the construction is small, so that a given quantum circuit gets translated into a graph state with size polynomial in the size of the original circuit.

\subsection{Operators, isometries, bit strings}

We will frequently deal with a tensor product of operators over several subsystems.  To make this easier we use the following notation:  

\begin{definition}
Given some collection of operators $\{M_{j}: j = 1 \dots n\}$ with $M_{j}$ operating on the $j$-th subsystem, and a vector $x \in \{0,1\}^{n}$ define
\begin{equation}
	M^{x} = \bigotimes_{j=1}^{n} M_{j}^{x_{j}}.
\end{equation}
\end{definition}

This notation is quite frequently used with Pauli operators, but here we do not assume that the $M_{j}$ operators are all the same.  Instead, we merely suppose that there is some common label ``$M$'', which may refer to different operators on different subsystems.

Another set of objects that we will deal with frequently is \emph{isometries}.
\begin{definition}
An \emph{isometry} is a linear operator $\Phi: \mathcal{X} \rightarrow \mathcal{Y}$ that preserves inner products.
\end{definition}

Isometries are a natural generalization of unitaries where the image space of $\Phi$ is not necessarily the same as $\mathcal{X}$, and may in general have a larger dimension.  As a concrete and pertinent example, adding an ancilla prepared in a particular state and applying a unitary are both isometries, as is their composition.  Isometries are naturally extended to the dual space by $\Phi(\bra{\psi}) = \Phi(\ket{\psi})^{\dagger}$ and to operators by $\Phi(\proj{x}{y}) = \Phi(\ket{x})\Phi(\bra{y})$, combined with linearity.

As we shall see, we will need to address the state spaces of provers individually, se we will need the concept of a \emph{local} isometry.

\begin{definition}
A \emph{local isometry} on $n$ subsystems is an isometry of the form
\begin{equation}
	\Phi = \Phi_{1} \otimes \dots \otimes \Phi_{n}
\end{equation}
where $\Phi_{j}$ operates on the $j$-th subsystem only.
\end{definition}
Here a tensor product of isometries is evaluated in a way analogous to how a tensor product of unitaries is applied:  decompose the state into a sum of product states and apply the operator to the appropriate vector in the tensor product.  That is to say,
\begin{equation}
	\Phi_{1} \otimes \Phi_{2}
	\left(
		\sum_{j}
			\ket{x_j}_1 
			\ket{y_j}_2
	\right) 
	= 
	\sum_{j}
	\Phi_1
	\left(
		\ket{x_j}_1
	\right) 
	\otimes 
    \Phi_2
    \left(
        \ket{y_j}_2.
    \right)
\end{equation}

By convention, we take $\Phi_1$ to mean $\Phi_1 \otimes I_2$ when applied to a state in $\mathcal{H}_1 \otimes \mathcal{H}_2$, and analogously for other product spaces.

From this it is easy to derive the following properties of local isometries.

\begin{lemma}
\label{lemma:isometryproperites}
Let $\Phi = \Phi_1 \otimes \Phi_2$ be a local isometry, $\ket{\psi}_{1,2}$ be a bipartite state, and $M_1$ be a local operator on the first subsystem.  Then
\begin{eqnarray}
	\Phi(M_1 \ket{\psi_{1,2}}) = \Phi(M_1) \Phi(\ket{\psi_{1,2}})
\\	
	\Phi(M_1 \ket{\psi_{1,2}}) = \Phi_1(M_1) \Phi(\ket{\psi_{1,2}})
\end{eqnarray}

\end{lemma}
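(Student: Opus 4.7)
The plan is to reduce both equalities to a single core identity on the first subsystem, namely $\Phi_1(M_1\ket{x}) = \Phi_1(M_1)\Phi_1(\ket{x})$ for any vector $\ket{x}$, and then tensor with $\Phi_2$ acting trivially on the other factor. The first equation differs from the second only in whether we write $\Phi(M_1)$ or $\Phi_1(M_1)$, and under the conventions set up just above the lemma (where $\Phi_1$ acting on a bipartite space is identified with $\Phi_1 \otimes I_2$ on the source side, and $\Phi_1(M_1)$ on the target is identified with $\Phi_1(M_1)\otimes I$ on the image of $\Phi_2$), these two expressions act identically on any vector in the image of $\Phi$. So once the second equation is established, the first follows by verifying that $\Phi(M_1)$ and $\Phi_1(M_1)$ agree when multiplied against $\Phi(\ket{\psi}_{1,2})$.

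First I would verify the single-system identity. Writing $M_1 = \sum_{x,y} c_{xy}\proj{x}{y}$ and $\ket{\phi} = \sum_z d_z \ket{z}$, the definition $\Phi_1(\proj{x}{y}) = \Phi_1(\ket{x})\Phi_1(\bra{y})$ combined with linearity and the inner-product-preserving property $\Phi_1(\bra{y})\Phi_1(\ket{z}) = \braket{y}{z}$ gives
\begin{equation}
\Phi_1(M_1)\Phi_1(\ket{\phi}) = \sum_{x,y,z} c_{xy} d_z \Phi_1(\ket{x})\braket{y}{z} = \sum_{x,y} c_{xy} d_y \Phi_1(\ket{x}) = \Phi_1(M_1\ket{\phi}),
\end{equation}
so the identity holds on a single factor.

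Next I would expand the bipartite state as $\ket{\psi}_{1,2} = \sum_j \ket{x_j}_1 \ket{y_j}_2$ (any product-state decomposition, e.g.\ Schmidt), apply $M_1 = M_1 \otimes I_2$, and compute
\begin{equation}
\Phi(M_1\ket{\psi}_{1,2}) = \sum_j \Phi_1(M_1\ket{x_j})\otimes \Phi_2(\ket{y_j}) = \sum_j \Phi_1(M_1)\Phi_1(\ket{x_j}) \otimes \Phi_2(\ket{y_j}),
\end{equation}
using the single-system identity. Pulling $\Phi_1(M_1)\otimes I$ out of the sum (valid because it acts only on the first factor) gives $\Phi_1(M_1)\Phi(\ket{\psi}_{1,2})$, which is the second equation.

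For the first equation I would then observe that $\Phi(M_1)$ means $\Phi$ applied to the operator $M_1 \otimes I_2$, which by the operator-definition $\Phi(\proj{u}{v}) = \Phi(\ket{u})\Phi(\bra{v})$ and linearity equals $\Phi_1(M_1)\otimes \Phi_2(I_2)$. Here $\Phi_2(I_2) = \sum_k \Phi_2(\ket{k})\Phi_2(\bra{k})$ is the projector onto the image of $\Phi_2$, and since every $\Phi_2(\ket{y_j})$ already lies in that image we have $\Phi_2(I_2)\Phi_2(\ket{y_j}) = \Phi_2(\ket{y_j})$. Thus $\Phi(M_1)\Phi(\ket{\psi}_{1,2}) = \Phi_1(M_1)\Phi(\ket{\psi}_{1,2})$, completing the reduction. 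The main obstacle is purely notational rather than mathematical: one must be careful that the symbol $\Phi_1(M_1)$ makes sense as an operator on the full bipartite target space, and that the identity on the image of $\Phi_2$ suffices even though $\Phi_2$ need not be surjective; after disentangling these conventions the argument is essentially a line of tensor-product algebra.
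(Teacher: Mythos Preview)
Your proposal is correct. The paper does not actually supply a proof of this lemma; it simply states that the properties are ``easy to derive'' from the definitions of how a local isometry acts on product states and on operators, and then moves on. Your argument is precisely the natural expansion of that remark: you use the operator extension $\Phi_1(\proj{x}{y}) = \Phi_1(\ket{x})\Phi_1(\bra{y})$ together with the inner-product-preserving property to get the single-factor identity, then tensor it up using the product-state decomposition of $\ket{\psi}_{1,2}$, and finally handle the distinction between $\Phi(M_1)$ and $\Phi_1(M_1)$ by noting that $\Phi_2(I_2)$ is the projector onto the image of $\Phi_2$ and hence acts trivially on $\Phi_2(\ket{y_j})$. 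That last point is the only place where any care is needed, and you have identified and resolved it correctly.
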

We make extensive use of bit strings.  For an $n$-bit string $t$ the $j$-th bit is $t_{j}$.  Inner products of bit strings are given by
\begin{equation}
	s \cdot t = \sum_{j = 1}^{n} s_{j} t_{j}.
\end{equation}
We will, at times, consider the inner product as an integer, and at other times as a bit (i.e. over $\mathbb{Z}$ or $\mathbb{Z}_{2}$).  Where the difference is important we will specify.  For example, $t \cdot t$ taken over $\mathbb{Z}$ gives the number of ones in $t$ but when taken over $\mathbb{Z}_2$ it is the parity of the number of ones. 

Finally, we define the bit string $1_{v}$ to have a $1$ only in the $v$ position and zeros elsewhere, i.e. $(1_v)_j = \delta_{vj}$.  

\subsection{Graph states}
We assume that the reader is familiar with the basics of graph theory.  A good resource is \cite{Diestel:2010:Graph-Theory}.  We now fix some notation for our convenience.  Let $G = (V, E)$ be a graph, $n = |V|$ and $u, v \in V$.  The \emph{adjacency matrix} $\mathbf{A}$ of $G$ is a $\{0,1\}$ matrix with $\mathbf{A}_{u,v} = 1$ whenever $(u,v) \in E$ and 0 elsewhere.  Note that $\mathbf{A}1_{v}$ is a vector containing a 1 in position $u$ for each $(u,v) \in E$, and is hence the characteristic vector of the neighbourhood of $v$.  A \emph{subgraph} of $G$ is a graph with vertices $V^\prime \subseteq V$ and edges $E^\prime \subseteq E$ such that all edges in $E^\prime$ go between vertices of $V^\prime$. Finally, the \emph{induced subgraph} on a subset $S \subseteq V$ is the graph on vertices $S$ which has edges $\{(u,v) | u,v \in S, (u,v) \in E\}$.  In other words, the induced subgraph is the \emph{maximal subgraph} of $G$ on vertices in $S$.  A \emph{triangle} is a set of three vertices which are pairwise adjacent.

The graph state $\ket{G}$ is an $n$-qubit state, with qubits labelled by vertices, which is stabilized\footnote{
See~\cite{Gottesman:1997:Stabilizer-Codes-and-Quantum-Error-Correction} for more information on the stabilizer formalism.
} by the operators
\begin{equation}
	S_{v} = X_{v} Z^{\mathbf{A}1_{v}}
\end{equation}
where
\begin{equation}
    X = 
    \left(
        \begin{matrix}
            0 & 1 \\
            1 & 0                     
    \end{matrix}
    \right)
    ,
    \,\, \, 
        Z = 
    \left(
        \begin{matrix}
            1 & 0 \\
            0 & -1                     
    \end{matrix}
    \right).
\end{equation}
That is, $S_{v}$ has $X$ on vertex $v$ and $Z$ on each of its neighbours and
\begin{equation}
	S_{v} \ket{G} = \ket{G}.
\end{equation}
Equivalently,
\begin{equation}
	\ket{G} = 
	\frac{1}{\sqrt{2^{n}}}
	\sum_{x \in\{0,1\}^{n}} 
		(-1)^{\frac{1}{2}x \cdot \mathbf{A}x}\ket{x}
\end{equation}
with the inner product over $\mathbb{Z}$.  To explain, let us write
\begin{equation}
	x \cdot \mathbf{A}x = \sum_{\substack{u,v \\ x_u = 1 = x_v }} 1_u \cdot \mathbf{A} 1_v = \sum_{\substack{u,v \\ x_u = 1 = x_v}} \mathbf{A}_{u,v}
\end{equation}
Now since $\mathbf{A}_{u,v} = \mathbf{A}_{u,v}=1$ whenever $(u,v) \in E$, we are counting edges.  The summation and $\mathbf{A}$ are symmetric, so we are double counting and we always get an even number (hence the $\frac{1}{2}$ appearing in the exponent above).  Let $T_x = \{v | x_v = 1\}$, then we are summing over all the vertices in $T_x$, double counting the edges in the induced subgraph on $T_x$.
 
For completeness we show that the above two definitions are equivalent by showing that $\ket{G}$ is stabilized by $S_v$:
\begin{eqnarray}
	X_{v} Z^{\mathbf{A}1_{v}} \ket{G} & = &
	\frac{1}{\sqrt{2^{n}}} 
	\sum_{x} 
		(-1)^{\frac{1}{2}x \cdot \mathbf{A}x}
		(-1)^{ x \cdot \mathbf{A}1_{v}} 
		\ket{x \oplus 1_{v}} \\
	& = &
	\frac{1}{\sqrt{2^{n}}} 
	\sum_{x} 
		(-1)^{
			\frac{1}{2}
			(x  \oplus 1_{v})
			\cdot \mathbf{A}
			(x  \oplus 1_{v})
		\pm
			(x  \oplus 1_{v} )
			\cdot \mathbf{A}1_{v}
		} 
		\ket{x}
\end{eqnarray}
where we have re-indexed the summation by $x \rightarrow x \oplus 1_{v}$.  The $\pm$ in the exponent of the $-1$ represents the fact that we only care about the parity of the exponent, so we can add or subtract as we please.

Now $
	\frac{1}{2}
		(x  \oplus 1_{v})
		\cdot \mathbf{A}
		(x  \oplus 1_{v})
$
is the number of edges in the induced subgraph on $T_{x \oplus 1_v}$.  Meanwhile
$
	(x  \oplus 1_{v} )
	\cdot \mathbf{A}1_{v} = x \cdot \mathbf{A}1_v
$
since $1_v \cdot \mathbf{A} 1_v = 0$ (no vertex is adjacent to itself)  and 
$x \cdot \mathbf{A}1_v$ counts the neighbours of $v$ that are in $S_{x}$. 

There are two cases.  First, if $v \in T_x$ then $T_{x \oplus 1_v}$ does not contain $v$.  The subgraph on $T_{x}$ is obtained from the induced subgraph on $T_{x \oplus 1_v}$ by adding $v$ and all the associated edges -  $x \cdot \mathbf{A}1_v$ of them - and the total number of edges in the induced subgraph on $T_x$ is
\begin{equation}
\label{eq:adjmatrixadd}
		\frac{1}{2}
			(x  \oplus 1_{v})
			\cdot \mathbf{A}
			(x  \oplus 1_{v})
		+
			(x  \oplus 1_{v} )
			\cdot \mathbf{A}1_{v}
		=
			\frac{1}{2}
			x \cdot \mathbf{A} x.
\end{equation}
In the other case $v \notin T_x$, so we obtain $T_x$ by removing $v$ and all associated edges from $T_{x \oplus 1_v}$, so
\begin{equation}
\label{eq:adjmatrixadd2}
		\frac{1}{2}
			(x  \oplus 1_{v})
			\cdot \mathbf{A}
			(x  \oplus 1_{v})
		-
			(x  \oplus 1_{v} )
			\cdot \mathbf{A}1_{v}
		=
			\frac{1}{2}
			x \cdot \mathbf{A} x.
\end{equation}
Hence
\begin{eqnarray}
	X_{v} Z^{\mathbf{A}1_{v}} \ket{G} & = &
	\frac{1}{\sqrt{2^{n}}} 
	\sum_{x} 
		(-1)^{
			\frac{1}{2}
			(x  \oplus 1_{v})
			\cdot \mathbf{A}
			(x  \oplus 1_{v})
		\pm
			(x  \oplus 1_{v} )
			\cdot \mathbf{A}1_{v}
		} 
		\ket{x} \\
	& = &
		\frac{1}{\sqrt{2^{n}}} 
		\sum_{x}
			(-1)^{
				\frac{1}{2}
				x \cdot \mathbf{A} x
			}
		\ket{x} \\
	& = &
		\ket{G}
\end{eqnarray}

We have shown that the operators $S_v$ stabilize $\ket{G}$.  It is also easy to see that the $S_v$ operators are independent: any one cannot be obtained by multiplying together others.  They also commute with each other.  We then have $n$ independent, commuting $n$-qubit Pauli operators which stabilize a 1-dimensional space \cite{Gottesman:1997:Stabilizer-Codes-and-Quantum-Error-Correction}.

Operationally, graph states are constructed by beginning with the qubits in the state $\ket{+}^{\otimes n}$ and applying $\ctl{Z}$ gates on vertices $u,v$ whenever $(u,v) \in E$.

The above reasoning will be important later on.  In particular, we can apply~\eqref{eq:adjmatrixadd} and~\eqref{eq:adjmatrixadd2} repeatedly over all $v$ such that $y_v = 1$ to prove the following lemma.

\begin{lemma}
\label{lemma:adjmatrixadd}
\begin{equation}
		(-1)^{\frac{1}{2}
			(x  \oplus y)
			\cdot \mathbf{A}
			(x  \oplus y)
		+
			(x  \oplus y )
			\cdot \mathbf{A}y
			}
		=
		(-1)^{
			\frac{1}{2}
			x \cdot \mathbf{A} x
			}.
\end{equation}
\end{lemma}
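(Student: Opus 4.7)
The plan is to prove Lemma~\ref{lemma:adjmatrixadd} by iterating the single-vertex relations~\eqref{eq:adjmatrixadd} and~\eqref{eq:adjmatrixadd2} once for each $v \in T_y = \{v : y_v = 1\}$. Fix an arbitrary ordering $v_1, \dots, v_k$ of $T_y$ and define intermediate bit-strings $x^{(0)} = x$ and $x^{(i)} = x^{(i-1)} \oplus 1_{v_i}$, so that $x^{(k)} = x \oplus y$. At each step, whichever of the two single-vertex equations applies (depending on the value of $(x^{(i-1)})_{v_i}$), both reduce modulo $2$ to the common identity
\begin{equation*}
\tfrac{1}{2}\, x^{(i)} \cdot \mathbf{A} x^{(i)} \equiv \tfrac{1}{2}\, x^{(i-1)} \cdot \mathbf{A} x^{(i-1)} + x^{(i)} \cdot \mathbf{A} 1_{v_i} \pmod{2},
\end{equation*}
since the $\pm$ on the correction term is invisible to $(-1)^{(\cdot)}$.

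Telescoping these $k$ identities causes the intermediate quadratic forms $\tfrac{1}{2}\, x^{(i)} \cdot \mathbf{A} x^{(i)}$ to cancel, leaving
\begin{equation*}
\tfrac{1}{2}(x \oplus y) \cdot \mathbf{A}(x \oplus y) \equiv \tfrac{1}{2}\, x \cdot \mathbf{A} x + \sum_{i=1}^{k} x^{(i)} \cdot \mathbf{A} 1_{v_i} \pmod{2}.
\end{equation*}
It then remains to identify the accumulated correction $\sum_{i=1}^{k} x^{(i)} \cdot \mathbf{A} 1_{v_i}$ with $(x \oplus y) \cdot \mathbf{A} y$ modulo $2$. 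I would do this by writing $x^{(i)} = (x \oplus y) \oplus (1_{v_{i+1}} \oplus \cdots \oplus 1_{v_k})$ and expanding using bilinearity of $(\cdot)\,\mathbf{A}\,(\cdot)$. The term $(x \oplus y) \cdot \mathbf{A}\sum_i 1_{v_i} = (x \oplus y) \cdot \mathbf{A} y$ drops out immediately, and the residual $\sum_{i < j} \mathbf{A}_{v_i, v_j}$ counts edges of the induced subgraph on $T_y$.

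The main obstacle will be handling this leftover edge-count term: one needs to argue that, after all contributions are collected, the correction is exactly $(x\oplus y)\cdot \mathbf{A} y$ modulo $2$. The key tool is the same double-counting symmetry that produces the factor of $\tfrac{1}{2}$ in the stabiliser derivation, together with the identity $y \cdot \mathbf{A} y = 2 |E(T_y)|$, which is even. Once the correction is shown to reduce correctly, exponentiating with $(-1)$ yields the claimed identity. An alternative route, which avoids the bookkeeping of intermediate orderings entirely, is to directly expand $(x\oplus y)_i = x_i + y_i - 2 x_i y_i$ as an integer identity, compute $\tfrac{1}{2}(x\oplus y)^T \mathbf{A}(x\oplus y)$ as a quadratic form in $x$, $y$, and $x \odot y$, and reduce modulo $2$ using that $\mathbf{A}$ has zero diagonal; both paths should lead to the same conclusion, with the iterative approach giving more direct contact with the earlier stabiliser computation.
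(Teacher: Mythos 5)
Your telescoping setup is exactly the paper's intended argument (the paper's entire ``proof'' is the one-line remark that one applies~\eqref{eq:adjmatrixadd} and~\eqref{eq:adjmatrixadd2} repeatedly over all $v$ with $y_v=1$), and your bookkeeping up to the identification of the accumulated correction is correct. The gap is in the last step. The correction you derive is $(x\oplus y)\cdot\mathbf{A}y + \sum_{i<j}\mathbf{A}_{v_i,v_j} \pmod 2$, and the residual $\sum_{i<j}\mathbf{A}_{v_i,v_j}$ equals $|E(T_y)| = \tfrac{1}{2}\,y\cdot\mathbf{A}y$ --- not $y\cdot\mathbf{A}y$. Your appeal to ``$y\cdot\mathbf{A}y = 2|E(T_y)|$ is even'' therefore does not dispose of it: what you would need is that $\tfrac{1}{2}\,y\cdot\mathbf{A}y$ is even, i.e.\ that the induced subgraph on $T_y$ has an even number of edges, and that is false in general.

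This gap cannot be closed, because the lemma as stated is false. Take $G$ to be a single edge on vertices $\{1,2\}$, $x=(0,0)$, $y=(1,1)$: the left-hand exponent is $\tfrac{1}{2}\cdot 2 + 2 = 3$, giving $-1$, while the right-hand side is $(-1)^{0}=+1$. Both your iterative route and your proposed alternative (expanding $(x\oplus y)_a = x_a + y_a - 2x_ay_a$ and reducing the quadratic form modulo $2$ using the zero diagonal of $\mathbf{A}$) lead to the corrected identity, in which the right-hand side carries an extra factor $(-1)^{\frac{1}{2}y\cdot\mathbf{A}y}$. So your derivation, carried out honestly, exposes an error in the paper rather than a flaw peculiar to your argument. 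The error is harmless where the lemma is invoked in the proof of Theorem~\ref{theorem:graphstatetest}: there $y=s$ is a summation variable belonging to the ``junk'' factor, so the spurious phase $(-1)^{\frac{1}{2}s\cdot\mathbf{A}s}$ is absorbed into $\ket{junk}$ and the factorization still goes through --- but you should state and prove the corrected identity rather than the one printed.
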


The graph that we will mostly be concered with is a \emph{triangular lattice} graph.  Roughly speaking, a triangular lattice graph is a planar graph where every face is a triangle.  A small example is given in figure~\ref{fig:triangularLattice}

\begin{figure}
\begin{center}
\begin{tikzpicture}
\SetGraphUnit{2.7}
\GraphInit[vstyle=Simple]
\tikzset{VertexStyle/.style = {shape = circle,fill = black,minimum size = 5pt,inner sep=0pt}}
\Vertices[x=1,y=5]{line}{A,B,C,D}
\Vertices[x=2.4,y=3]{line}{E, F, G, H}
\Vertices[x=1,y=1]{line}{I, J, K, L}
\Edges(A,B,C,D)
\Edges(E,F,G,H)
\Edges(I,J,K,L)
\Edges(A,E,I)
\Edges(B,E,J,F,B)
\Edges(C,F,K,G,C)
\Edges(D,G,L,H,D)
\end{tikzpicture}
\end{center}
\label{fig:triangularLattice}
\caption{Triangular lattice graph}
\end{figure}
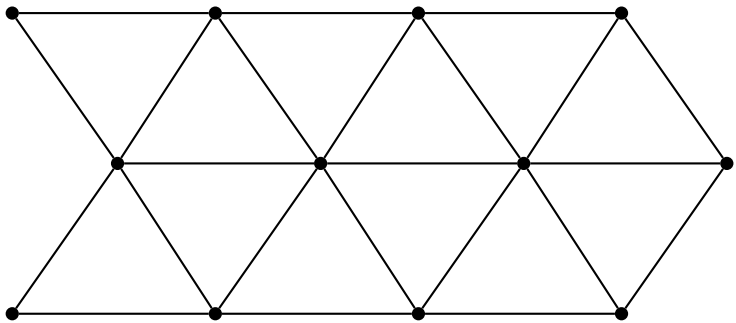

\subsection{Definition for ``closeness''}
\label{sec:defclose}
We will need to establish that the state held by the provers is ``close to'' a given graph state and that the measurements they perform are ``close to'' the ideal $X$-$Z$ plane observables.  However, there are many transformations that the provers can apply to both states and measurements which are invisible to the verifier.  In particular, the provers may add an ancilla or apply a local change of basis (simultaneously to both the state and measurements).  In fact, we will see that for the states and observables we use these are the \emph{only} undetectable transformations that they can apply\footnote
{
	The provers can also perform complex conjugation on all their states and observables, and this would also be invisible to the verifier.  However, in our case all ideal states and measurements are real, so the complex conjugation does nothing.
}.
We can account for such transformations by allowing an arbitrary isometry which undoes these transformations and presents us with the required graph state plus some arbitrary ancilla state.  We also allow for some noise by comparing states in the usual vector norm.

\begin{definition}\footnote{It is easy to see that this relation is (in the exact case) transitive and reflexive, but it is also clearly not symmetric.  Thus it is not a true equivalence relation.  However the terminology has stuck.}
We say that a multi-partite state $\ket{\psi^{\prime}}$ and observables $\{M^{\prime}\}$ are $\epsilon$-\emph{equivalent} to $\ket{\psi}$ and $\{M\}$ if there exists a local isometry $\Phi$ and a state $\ket{junk}$ such that for every $M$
\begin{equation}
	\norm{
		\Phi(M^{\prime}\ket{\psi^{\prime}}) - 
		\ket{junk}M\ket{\psi}
	}_{2} 
	\leq \epsilon.
\end{equation}
\end{definition}

Here we are thinking of ``$M$'' as both the ideal operation on $\ket{\psi}$ and as a label for the operation $M^{\prime}$.

Evidently this definition guarantees that the two systems behave like each other since isometries preserve inner products, and hence outcome probabilities.  As we shall see, it is also a necessary condition for states and measurements to behave close to the ideal graph states and $X$-$Z$ plane measurements.  Hence any other definition we could choose is at most a different characterization of the errors and in the exact case is equivalent.  The error bound used here has an operational meaning since we can quickly bound the error in outcome distributions from it.
 
There is one shortcoming of this definition, which is that it is impossible to test states or operators which contain any imaginary component in the ideal case (this restriction does not apply to the states and operators held by the provers, only to the ideal that we compare them to.)  The simple reason is that the provers may apply a complex conjugation to everything without changing the distribution of their responses to the verifier.  This transformation is not an isometry, and hence it is impossible to conclude that any system satisfies the above definition based on classical interaction alone.  It is, however, possible to extend the definition to account for this case \cite{McKague:2011:Generalized-Sel}.  We do not need to use this extended definition here since all our ideal operators and states are real.

\subsection{Modelling the provers}
An important argument in our work is that we can model the provers, even in the dishonest case, by a pure joint state held by the provers, and a collection of observables for each prover, one per possible query to that prover.  

First, it should be clear that it is not a restriction to consider pure states.  Any mixed state can be purified and the purification given to any one of the provers.  This only increases the power of the provers by giving them additional information held in the purification.

Next, since our provers will only receive one query and respond with one message, we can model their actions by a measurement.  Any pre-processing done before the measurement can be incorporated into the choice of measurement as can any post-processing.  Further, since we are not making any assumptions on the dimension of the state held by the provers, their measurements can be taken to be projective and, since the provers will always respond with $\pm 1$ the projectors can be combined into an observable without any loss of information or generality.  

Finally, we must consider how the provers will behave knowing that some of the time they will be tested and some of the time they will be asked to perform the calculation.  As well, in certain cases the provers will know for certain that they are being tested, although they will never be able to conclude that they are certainly taking part in the calculation.  The provers know in advance the list of possible query strings (there are only four) and whatever their strategy, they use some physical processes to decide on their output.  We then roll this process into the measurement observable, so  that each possible query string corresponds to a single observable which represents the entire strategy of the prover.

\section{Test for honesty}
\label{sec:selftesting}
In order to develop a test for honesty we go through several steps.  The first step is to develop a test for graph states.  This is the foundation on which we build the test for honesty.  After showing how we can verify that the provers hold onto a particular graph state we then show how to test measurements in the $X$-$Z$ plane.  Adaptive measurements built on measurements in the $X$-$Z$ plane are the next step.  Finally, we put all of the tests together into a single test and show how the probability of passing this test relates to the amount of error in an adaptive measurement performed on the same state and using the same measurements.

\subsection{Self-test for triangular cluster states}
\label{sec:graphstatetest}

In this section we develop a self-test for triangular cluster states.  The techniques used are similar to those in \cite{McKague:2010:Self-testing-gr}.  However, we make some modifications which allow for a tighter error analysis and clearer notation.  Although we give the construction for triangular cluster states only, the same techniques can be extended to work with any stabilizer state, as in \cite{McKague:2010:Self-testing-gr}.

\begin{theorem}\label{theorem:graphstatetest}
Let $G$ be a triangular lattice graph on $n$ vertices with adjacency matrix $\mathbf{A}$ and let $\epsilon > 0$. Further, suppose that for an $n$-partite state $\ket{\psi^{\prime}}$ with local measurements $X^{\prime}_{v}$ and $Z^{\prime}_{v}$ we have for each $v \in V$
\begin{equation}
	\label{eq:stabilizercondition}
	\bra{\psi^{\prime}}
		S^{\prime}_{v} 
	\ket{\psi^{\prime}} 
	\geq 
	1 - \epsilon
\end{equation}
(where $S^{\prime}_{v} = X^{\prime}_{v} Z^{\prime \mathbf{A}1_{v}}$) and for each triangle $T \subseteq V$ with characteristic vector $\tau$
\begin{equation}
	\label{eq:trianglecondition}
	-\bra{\psi^{\prime}}
		X^{\prime \tau} Z^{\prime \mathbf{A} \tau} 
	\ket{\psi^{\prime}} 
	\geq 
	1- \epsilon 
\end{equation}
then there exists a local isometry $\Phi$ and state $\ket{junk}$ such that
\begin{equation}
\label{eq:isometrycondition}
	\norm{
		\Phi\left(
			X^{\prime q} Z^{\prime p}
			\ket{\psi^{\prime}}
		\right) 
		- 
		\ket{junk} X^{q} Z^{p} \ket{G}
	} \leq 
	\left(
		2\sqrt{p \cdot p} +
		2\sqrt{2n} +
		\sqrt{|E| + n}
	\right)
	(2 \epsilon)^\frac{1}{4}
\end{equation}
for all $p,q \in \{0,1\}^{n}$.
\end{theorem}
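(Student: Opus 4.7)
The plan is to follow the standard self-testing paradigm: construct a local isometry $\Phi = \bigotimes_v \Phi_v$ built from the untrusted observables $X'_v, Z'_v$ that, applied to $\ket{\psi'}$, produces $\ket{junk} \otimes \ket{G}$ on an ancillary register, and more generally maps $X'^q Z'^p \ket{\psi'}$ to $\ket{junk} \otimes X^q Z^p \ket{G}$. The novelty is to carry out the analysis with errors that grow only polynomially in $n$.

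First I would convert the expectation-value hypotheses into operator-level approximate identities: \eqref{eq:stabilizercondition} yields $\norm{(I - S'_v)\ket{\psi'}} \leq \sqrt{2\epsilon}$ for every vertex, and \eqref{eq:trianglecondition} yields $\norm{(I + X'^\tau Z'^{\mathbf{A}\tau})\ket{\psi'}} \leq \sqrt{2\epsilon}$ for every triangle. A direct computation in the ideal Pauli algebra gives $S_u S_v S_w = -X^\tau Z^{\mathbf{A}\tau}$ for a triangle $\{u,v,w\}$: every triangle vertex has exactly two neighbours inside the triangle so the internal $Z$'s cancel modulo $2$, and a single minus sign is produced by the anti-commutations induced by the three triangle edges. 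Combining the two hypotheses therefore forces an approximate anti-commutation $X'_u Z'_v \ket{\psi'} \approx -Z'_v X'_u \ket{\psi'}$ on every edge $(u,v)$ that lies in a triangle, and in a triangular lattice every edge lies in a triangle. On non-adjacent vertices the observables commute exactly because they act on distinct subsystems.

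Next I would introduce for each vertex $v$ an ancillary qubit initialised in $\ket{+}$, and define $\Phi_v$ as the standard self-testing SWAP gadget (controlled-$Z'_v$, Hadamard on the ancilla, controlled-$X'_v$), which in the ideal Pauli case swaps the qubit at $v$ with the ancilla. To verify~\eqref{eq:isometrycondition} I would expand $\ket{G}$ in the computational basis using the explicit form $\tfrac{1}{\sqrt{2^n}}\sum_x (-1)^{\frac{1}{2} x \cdot \mathbf{A} x}\ket{x}$, expand the action of $\Phi$ on $X'^q Z'^p \ket{\psi'}$ as a superposition indexed by the bit string $x$ recording which controlled-$X'$s fired onto the ancillas, and reduce the result to the target form using the approximate Pauli algebra together with lemma~\ref{lemma:adjmatrixadd} (which exactly reproduces the phases $(-1)^{\frac{1}{2} x \cdot \mathbf{A} x}$ needed to assemble $\ket{G}$). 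Pushing the prefactor $X'^q Z'^p$ through the isometry generates at most $p \cdot p$ anti-commutation moves, one per $Z'$ that must be carried past an $X'$, accounting for the $\sqrt{p \cdot p}$ term in the bound.

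The main obstacle, and the source of the improvement over \cite{McKague:2010:Self-testing-gr}, is to prevent the error from compounding exponentially in $n$. Each application of a stabilizer or anti-commutation relation contributes an additive error of order $\sqrt{\epsilon}$, and a naive term-by-term induction over the $2^n$ summands in the expansion of $\ket{G}$, combined with $\Theta(|E|)$ operator manipulations per term, would yield an exponential bound. The strategy is to keep the superposition over $x$ coherent throughout the argument so that each approximate relation is invoked once at the whole-state level, with the triangle inequality accumulating errors linearly in the number of relations used; this produces the $\sqrt{|E| + n}$ and $\sqrt{2n}$ factors directly. The $(2\epsilon)^{1/4}$ rather than $(2\epsilon)^{1/2}$ scaling arises when the state-level approximate identities are converted into the operator-norm anti-commutation estimates required to commute $X'^q Z'^p$ through the isometry, via a Cauchy--Schwarz step that effectively takes a further square root of the basic $\sqrt{2\epsilon}$ bounds.
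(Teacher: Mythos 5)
Your overall architecture matches the paper's: state-level approximate stabilizer and triangle relations, a per-vertex SWAP-type isometry built from $X^{\prime}_v$ and $Z^{\prime}_v$, expansion against the computational-basis form of $\ket{G}$ with the phases supplied by Lemma~\ref{lemma:adjmatrixadd}, and an error analysis meant to avoid summing $2^n$ individual branch errors. But one step is misstated and one essential idea is missing. The misstatement: $X^{\prime}_u$ and $Z^{\prime}_v$ for $u \neq v$ act on different tensor factors and commute \emph{exactly}, whether or not $(u,v)$ is an edge. What the triangle condition actually buys (Lemma~\ref{lem:xzanticommute}) is that $X^{\prime}_v$ and $Z^{\prime}_v$ \emph{on the same vertex} approximately anti-commute on $\ket{\psi^{\prime}}$, for every vertex contained in a triangle; the relevant covering fact is that every \emph{vertex} of a triangular lattice lies in a triangle, not every edge.

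The missing idea is the quantitative heart of the theorem. You adopt the standard gadget with a single $\ket{+}$ ancilla and assert that keeping the superposition coherent makes errors accumulate linearly, but the relations you must invoke (e.g.\ $X^{\prime t}\ket{\psi^{\prime}} \approx (-1)^{\frac{1}{2} t\cdot \mathbf{A}t} Z^{\prime \mathbf{A}t}\ket{\psi^{\prime}}$, Lemma~\ref{lemma:changexz}) carry errors that depend on the branch label $t$ and reach $O((|E|+n)\sqrt{\epsilon})$ on the worst branch, so ``invoking each relation once at the whole-state level'' is not available as stated. The paper's mechanism is different: it modifies the gadget so the ancilla is half of a maximally entangled pair $\ket{\phi_+}$ (with an extra initial $\ctl{X^{\prime}}$), which makes every intermediate state $\ket{\psi_1},\dots,\ket{\psi_5}$ exactly normalized and, crucially, makes the inner product $\braket{\psi_j}{\psi_{j+1}}$ collapse --- via orthogonality of the retained half of the ancilla pair --- to an \emph{average} over branches of the per-branch errors, killing the cross terms $\bra{\psi^{\prime}}X^{\prime s}\cdots X^{\prime s^{\prime}}\ket{\psi^{\prime}}$ with $s\neq s^{\prime}$ that would otherwise appear. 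The average of $4(p\cdot s)\sqrt{2\epsilon}$ over $s$ is $2(p\cdot p)\sqrt{2\epsilon}$, and $\norm{\ket{\psi_j}-\ket{\psi_{j+1}}} = \sqrt{2-2\,\mathrm{Re}\braket{\psi_j}{\psi_{j+1}}}$ then yields the $(2\epsilon)^{1/4}$ scaling; that identity applied to normalized states, not a Cauchy--Schwarz step on operator norms, is where the fourth root comes from. With a single-$\ket{+}$ ancilla the cross terms do not vanish, and the paper explicitly notes the tighter analysis is not available for that circuit, so as written your argument would fall back to the exponential bound of the earlier work.
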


We may interpret Theorem~\ref{theorem:graphstatetest} as follows:  for each triangular cluster state there exists a set of non-local correlations that uniquely identifies that graph state and $X$ and $Z$ measurements, up to local unitaries and additional ancillas. 

The proof can be divided into several sections.  The final goal is to construct an isometry $\Phi$ and prove that it takes the state $\ket{\psi^{\prime}}$ close to the desired graph state.  The construction for the isometry is given in terms of the $X^{\prime}$ and $Z^{\prime}$ operators on each vertex.  To bound the error we need to know how these operators behave and in particular whether they approximately anti-commute.  This is done in Lemma~\ref{lem:xzanticommute} and corollary~\ref{cor:xzanticommute}.  In the ideal case we can use the  stabilizers to show $X_{v}\ket{G} = Z^{\mathbf{A} 1_v} \ket{G}$.  In Lemma~\ref{lemma:changexz} we show that this is approximately true for the $X$'s, which will allow us to convert $X^{\prime}$s into $Z^{\prime}$s.  With these estimations in place we the proceed with the proof of Theorem~\ref{theorem:graphstatetest}.

\subsubsection{Preliminary technical estimations}
Our graph $G$ is a triangular lattice, so every vertex lies in a triangle.  For self-testing this gives a nice advantage, since it is particularly easy to show that $X^\prime$ and $Z^\prime$ anti-commute for vertices in a triangle.

\begin{lemma}\label{lem:xzanticommute}
Let $v \in V$ be a vertex in a triangle.  Under the conditions of Theorem~\ref{theorem:graphstatetest},
\begin{equation}
	\norm{
		X^{\prime}_{v} Z^{\prime}_{v} 
		\ket{\psi^{\prime}} 
		+ 
		Z^{\prime}_{v} X^{\prime}_{v}
		\ket{\psi^{\prime}}
	} \leq 
	4\sqrt{2\epsilon}.
\label{eq:xzanticommutebound}
\end{equation}

\end{lemma}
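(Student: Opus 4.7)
The plan is to derive an operator identity showing that $S^{\prime}_u S^{\prime}_v S^{\prime}_w + X^{\prime \tau} Z^{\prime \mathbf{A}\tau}$ equals a unitary times the anticommutator $\{X^{\prime}_v, Z^{\prime}_v\}$ for a triangle $T = \{u,v,w\}$ containing $v$ (with characteristic vector $\tau$), and then bound the norm of the left side using the stabilizer conditions at $u, v, w$ together with the triangle condition at $T$.

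First I would fix such a triangle and expand $S^{\prime}_u S^{\prime}_v S^{\prime}_w$ vertex by vertex, using that $(X^{\prime}_k)^2 = (Z^{\prime}_k)^2 = I$ (since they are $\pm 1$ observables) and that operators on distinct vertices commute. On vertex $u$ the $X^{\prime}_u$ from $S^{\prime}_u$ sits to the left of two $Z^{\prime}_u$ factors contributed by $S^{\prime}_v$ and $S^{\prime}_w$ (because $u \sim v$ and $u \sim w$), and those two $Z^{\prime}_u$'s cancel; likewise on $w$ the two $Z^{\prime}_w$ factors sit to the left of $X^{\prime}_w$ and cancel. On $v$, however, the $X^{\prime}_v$ from $S^{\prime}_v$ is sandwiched between the $Z^{\prime}_v$ from $S^{\prime}_u$ and the $Z^{\prime}_v$ from $S^{\prime}_w$, leaving $Z^{\prime}_v X^{\prime}_v Z^{\prime}_v$. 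The remaining $Z^{\prime}$ factors (those supported outside $T$) assemble into $Z^{\prime \mathbf{A}\tau}$, using $(\mathbf{A}\tau)_u = (\mathbf{A}\tau)_v = (\mathbf{A}\tau)_w \equiv 0 \pmod 2$ by mutual adjacency within $T$. Hence
\begin{equation*}
S^{\prime}_u S^{\prime}_v S^{\prime}_w \;=\; X^{\prime}_u\,(Z^{\prime}_v X^{\prime}_v Z^{\prime}_v)\, X^{\prime}_w\, Z^{\prime \mathbf{A}\tau}.
\end{equation*}
Adding $X^{\prime \tau} Z^{\prime \mathbf{A}\tau} = X^{\prime}_u X^{\prime}_v X^{\prime}_w Z^{\prime \mathbf{A}\tau}$ and using $(Z^{\prime}_v)^2 = I$ to rewrite $Z^{\prime}_v X^{\prime}_v Z^{\prime}_v + X^{\prime}_v = Z^{\prime}_v (X^{\prime}_v Z^{\prime}_v + Z^{\prime}_v X^{\prime}_v)$ yields the key identity
\begin{equation*}
\bigl(S^{\prime}_u S^{\prime}_v S^{\prime}_w + X^{\prime \tau} Z^{\prime \mathbf{A}\tau}\bigr) \ket{\psi^{\prime}} \;=\; X^{\prime}_u X^{\prime}_w Z^{\prime \mathbf{A}\tau} Z^{\prime}_v \{X^{\prime}_v, Z^{\prime}_v\} \ket{\psi^{\prime}}.
\end{equation*}
Since the prefactor $X^{\prime}_u X^{\prime}_w Z^{\prime \mathbf{A}\tau} Z^{\prime}_v$ is a unitary, taking norms gives $\norm{\{X^{\prime}_v, Z^{\prime}_v\}\ket{\psi^{\prime}}} = \norm{(S^{\prime}_u S^{\prime}_v S^{\prime}_w + X^{\prime \tau} Z^{\prime \mathbf{A}\tau})\ket{\psi^{\prime}}}$.

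To bound this I would apply the triangle inequality to control the right-hand side by $\norm{(I - S^{\prime}_u S^{\prime}_v S^{\prime}_w)\ket{\psi^{\prime}}} + \norm{(I + X^{\prime \tau} Z^{\prime \mathbf{A}\tau})\ket{\psi^{\prime}}}$. Whenever $M$ is Hermitian with $M^2 = I$, the identity $\norm{(I - M)\ket{\psi^{\prime}}}^2 = 2 - 2\bra{\psi^{\prime}} M \ket{\psi^{\prime}}$ converts each stabilizer condition $\bra{\psi^{\prime}} S^{\prime}_k \ket{\psi^{\prime}} \geq 1 - \epsilon$ into $\norm{(I - S^{\prime}_k)\ket{\psi^{\prime}}} \leq \sqrt{2\epsilon}$; unitarity of each $S^{\prime}_k$ then lets me telescope (via $\norm{(I - AB)\ket{\psi^{\prime}}} \leq \norm{(I-A)\ket{\psi^{\prime}}} + \norm{(I-B)\ket{\psi^{\prime}}}$ for unitary $A$) to $\norm{(I - S^{\prime}_u S^{\prime}_v S^{\prime}_w)\ket{\psi^{\prime}}} \leq 3\sqrt{2\epsilon}$. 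The triangle condition supplies $\norm{(I + X^{\prime \tau} Z^{\prime \mathbf{A}\tau})\ket{\psi^{\prime}}} \leq \sqrt{2\epsilon}$ by the same identity applied to $M = -X^{\prime \tau} Z^{\prime \mathbf{A}\tau}$. Summing yields the claimed bound $4\sqrt{2\epsilon}$.

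The main obstacle is the careful vertex-by-vertex bookkeeping in the expansion of $S^{\prime}_u S^{\prime}_v S^{\prime}_w$: one must check that the position of $X^{\prime}_v$ strictly between the two $Z^{\prime}_v$ contributions (because $S^{\prime}_v$ is the middle factor in the product) is precisely what generates the anticommutator, while the analogous arrangements on $u$ and $w$ collapse to the identity thanks to the symmetric placement of the $X^{\prime}$ at the extreme left or right. Once this operator identity is isolated, the remaining estimates reduce to standard triangle-inequality manipulations of the hypotheses.
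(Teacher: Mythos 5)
Your proposal is correct and follows essentially the same route as the paper: both combine the three stabilizer conditions at the triangle's vertices with the triangle correlation, observe that everything cancels except the anticommutator on $v$, and arrive at the same error budget of $3\sqrt{2\epsilon}+\sqrt{2\epsilon}$. The only difference is organizational — you isolate an exact operator identity and then apply one triangle inequality, whereas the paper multiplies the approximate relations together and accumulates the error step by step.
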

\begin{proof}
First, let $T=\{u,v,w\}$ be a triangle containing $v$.  The first part of Lemma~\ref{lemma:normbounds}, together with the conditions of Theorem~\ref{theorem:graphstatetest}, tell us 
\begin{equation}
	\norm{
		S^\prime_x \ket{\psi^\prime} 
		- 
		\ket{\psi^\prime}
	} 
	\leq 
	\sqrt{
		2 \epsilon
	}
\end{equation}
for $x \in \{u,v,w\}$, and from triangle $\tau$
\begin{equation}
	\norm{
		X^{\prime}_{u} 
		X^{\prime}_{v} 
		X^{\prime}_{w} 
		Z^{\prime \mathbf{A}1_{u}} 
		Z^{\prime \mathbf{A}1_{v}} 
		Z^{\prime \mathbf{A}1_{w}} 
		\ket{\psi^{\prime}}
		+ 
		\ket{\psi^{\prime}}
	}
	\leq
	\sqrt{
		2 \epsilon
	}.
\end{equation}
Applying the second part of Lemma~\ref{lemma:normbounds} three times to combine these, we find
\begin{equation}
	\norm{
		S^{\prime}_{u} 
		S^{\prime}_{v} 
		S^{\prime}_{w} 
		X^{\prime}_{u} 
		X^{\prime}_{v} 
		X^{\prime}_{w} 
		Z^{\prime \mathbf{A}1_{u}} 
		Z^{\prime \mathbf{A}1_{v}} 
		Z^{\prime \mathbf{A}1_{w}} 
		\ket{\psi^{\prime}}
		+ 
		\ket{\psi^{\prime}}
	} \leq 
	4\sqrt{2\epsilon}.
\end{equation}
The $Z^{\prime}$s operating on vertices outside $T$ all cancel since they appear in $S^\prime_x$ and in $Z^{\prime \mathbf{A}1_{x}}$ for some $x \in \{u,v,w\}$ and there are no $X^\prime$ operators outside the triangle. We are left with 
\begin{equation}
	\norm{
		(
			X^{\prime}_{u} 
			Z^{\prime}_{v} 
			Z^{\prime}_{w}
		)(
			Z^{\prime}_{u} 
			X^{\prime}_{v} 
			Z^{\prime}_{w}
		)(
			Z^{\prime}_{u} 
			Z^{\prime}_{v} 
			X^{\prime}_{w}
		)(
			X^{\prime}_{u}
			X^{\prime}_{v} 
			X^{\prime}_{w}
		)
		\ket{\psi^{\prime}} +
		\ket{\psi^{\prime}}
	} \leq 
	4\sqrt{2\epsilon} .
\end{equation}
By commuting operators on different subsystems past each other, we can pair up and cancel the $X^\prime_x$ and $Z^\prime_x$ for $x \in\{u,w\}$, resulting in
\begin{equation}
	\norm{
		X^{\prime}_{v} 
		Z^{\prime}_{v} 
		X^{\prime}_{v} 
		Z^{\prime}_{v} 
		\ket{\psi^{\prime}} 
		+ 
		\ket{\psi^{\prime}}
	} \leq 
	4\sqrt{2\epsilon}.
\end{equation}
Rearranging by multiplying by $Z^\prime_v X^\prime_v$, we obtain our result.
\end{proof}

Note that it is sufficient to consider a set of triangles that covers the set of vertices and hence Theorem~\ref{theorem:graphstatetest} holds for all graphs in which each vertex is contained in a triangle.  In fact, as in \cite{McKague:2010:Self-testing-gr}, it is sufficient to consider one triangle or just one edge in a connected graph, but this will give a less robust result.  Lemma~2 in \cite{McKague:2010:Self-testing-gr} shows that if $X^\prime_v$ and $Z^\prime_v$ approximately anti-commute, then so do $X^\prime_u$ and $Z^\prime_u$ for some neighbour $u$ of $v$.  Using this one can induct along paths to all vertices in a connected component. For our purposes this is unnecessary since all vertices lie in at least one triangle.

The above lemma can be generalized to products of operators, as in the following corollary.

\begin{corollary}\label{cor:xzanticommute}
Let $s,t \in \{0,1\}^{n}$.  Under the conditions of Theorem~\ref{theorem:graphstatetest}, 
\begin{equation}
	\norm{
		X^{\prime t} 
		Z^{\prime s} 
		\ket{\psi^{\prime}} 
		- 
		(-1)^{s \cdot t} 
		Z^{\prime s} 
		X^{\prime t}
		\ket{\psi^{\prime}}
	} \leq 
	4 (s \cdot t) \sqrt{2\epsilon}.
\end{equation}
where $s \cdot t$ is taken over $\mathbb{Z}$.
\end{corollary}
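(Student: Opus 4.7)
The plan is to reduce the general commutation bound to a sum of $s\cdot t$ single-vertex anti-commutation applications of Lemma~\ref{lem:xzanticommute}, one for each vertex where both $s$ and $t$ are simultaneously~$1$.

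First I would partition the vertices into three disjoint sets according to $s$ and $t$: the \emph{overlap} set $V_\cap = \{v : s_v = t_v = 1\}$, together with the sets where only one of $s_v, t_v$ is~$1$. Single-vertex operators on different vertices commute exactly, so $X^{\prime t}$ and $Z^{\prime s}$ factor as products in which everything outside $V_\cap$ already commutes past everything else without sign. After pulling these outer factors out on the left (they are products of Hermitian $\pm 1$-eigenvalue operators, hence unitary, and therefore norm-preserving), the problem reduces to bounding
\begin{equation}
    \bigl\| \bigl(A_{v_1}\cdots A_{v_k} - (-1)^k B_{v_1}\cdots B_{v_k}\bigr)\ket{\psi'} \bigr\|,
\end{equation}
where $V_\cap = \{v_1,\dots,v_k\}$, $A_v = X^{\prime}_v Z^{\prime}_v$, $B_v = Z^{\prime}_v X^{\prime}_v$, and $k = s\cdot t$.

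Next I would expand this difference by a standard telescoping identity,
\begin{equation}
    A_{v_1}\cdots A_{v_k} - (-1)^k B_{v_1}\cdots B_{v_k}
    = \sum_{j=1}^{k} (-1)^{j-1}\, B_{v_1}\cdots B_{v_{j-1}}\,(A_{v_j} + B_{v_j})\, A_{v_{j+1}}\cdots A_{v_k},
\end{equation}
and apply the triangle inequality. In the $j$-th term, the leading block $B_{v_1}\cdots B_{v_{j-1}}$ is unitary and drops out of the norm, while the trailing block $A_{v_{j+1}}\cdots A_{v_k}$ acts on vertices distinct from $v_j$, so it commutes past $A_{v_j}+B_{v_j}$ and, being unitary, also drops out after commuting to the left. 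This leaves exactly $\|(X^{\prime}_{v_j} Z^{\prime}_{v_j} + Z^{\prime}_{v_j} X^{\prime}_{v_j})\ket{\psi'}\|$, which Lemma~\ref{lem:xzanticommute} bounds by $4\sqrt{2\epsilon}$. Summing over the $k$ terms yields the claimed $4(s\cdot t)\sqrt{2\epsilon}$.

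The main obstacle to anticipate is keeping the error scaling linear in $s\cdot t$ rather than exponential. A naive inductive ``swap one pair at a time'' argument would apply Lemma~\ref{lem:xzanticommute} to states of the form $[\text{operators}]\ket{\psi'}$ rather than to $\ket{\psi'}$ itself, and since the lemma is only stated for $\ket{\psi'}$, each step would require a fresh triangle inequality, potentially multiplying error bounds. The telescoping identity above, combined with the observation that all the operators sandwiching $A_{v_j}+B_{v_j}$ either commute with it (acting on different vertices) or are unitary, is what allows every term in the sum to be reduced directly to an application of Lemma~\ref{lem:xzanticommute} at $\ket{\psi'}$, delivering the tight linear bound.
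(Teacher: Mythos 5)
Your proof is correct and takes essentially the same route as the paper's one-sentence argument: the paper likewise reduces the bound to one application of Lemma~\ref{lem:xzanticommute} per vertex with $s_v = t_v = 1$, absorbing all other factors by exact commutation across subsystems and unitarity, and accumulates the errors with the triangle inequality. Your telescoping identity is just the explicit bookkeeping behind that sketch, and it correctly ensures each anticommutator is evaluated directly on $\ket{\psi^{\prime}}$, yielding the linear bound $4(s\cdot t)\sqrt{2\epsilon}$.
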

This can be seen by repeatedly applying Lemma~\ref{lem:xzanticommute}, once for every $v$ such that $s_v = 1 = t_v$, and using the triangle inequality.  If $s_x = 1$ but $t_x = 0$, or vice versa,  for some $x \in V$, then the single operator on vertex $x$ commutes with all other operators.

Now we consider the physical ``stabilizer generators'' $S^{\prime}_{v} = X^{\prime}_{v} Z^{\prime \mathbf{A}1_{v}}$.  The conditions of Theorem~\ref{theorem:graphstatetest} establish that they really are (close to) stabilizers of $\ket{\psi^{\prime}}$.  Next we consider products of these generators and show that they too almost stabilize $\ket{\psi^\prime}$.

\begin{lemma}\label{lemma:changexz}
Let $t \in \{0,1\}^{n}$.  Under the conditions of Theorem~\ref{theorem:graphstatetest},
\begin{equation}
	\norm{
		X^{\prime t}
		\ket{\psi^{\prime}}
		- 
		(-1)^{\frac{1}{2}t \cdot \mathbf{A}t}
		Z^{\prime \mathbf{A}t} 
		\ket{\psi^{\prime}}
	} \leq 
	\left(
		2(t \cdot \mathbf{A}t) + t \cdot t
	\right)
	\sqrt{2\epsilon}.
\end{equation}
where $t \cdot \mathbf{A}t$ and $t \cdot t$ are evaluated over $\mathbb{Z}$.

\end{lemma}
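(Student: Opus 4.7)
The plan is to prove Lemma~\ref{lemma:changexz} by induction on the Hamming weight $k = t \cdot t$ of $t$. The base case $k = 0$ is immediate: both sides reduce to $\ket{\psi^\prime}$ and the stated bound is $0$. For the inductive step, fix any $v \in V$ with $t_v = 1$, decompose $t = t^\prime + 1_v$ so that $t^\prime \cdot t^\prime = k-1$, and write $X^{\prime t}\ket{\psi^\prime} = X^{\prime t^\prime} X^\prime_v \ket{\psi^\prime}$. I will estimate this state in three moves and then verify that the accumulated sign and error match the claim.

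The three ingredients are as follows. \textbf{(i)} The stabilizer condition~\eqref{eq:stabilizercondition} together with the first part of Lemma~\ref{lemma:normbounds} gives $\norm{S^\prime_v \ket{\psi^\prime} - \ket{\psi^\prime}} \leq \sqrt{2\epsilon}$; since $X^\prime_v$ is unitary and commutes with $Z^{\prime \mathbf{A} 1_v}$ (because there are no self-loops), multiplying both sides by $X^\prime_v$ yields
\[
\norm{X^\prime_v \ket{\psi^\prime} - Z^{\prime \mathbf{A} 1_v}\ket{\psi^\prime}} \leq \sqrt{2\epsilon}.
\]
\textbf{(ii)} Corollary~\ref{cor:xzanticommute} applied with $s = \mathbf{A} 1_v$ lets us commute $X^{\prime t^\prime}$ past $Z^{\prime \mathbf{A} 1_v}$, introducing the sign $(-1)^a$ where $a = t^\prime \cdot \mathbf{A} 1_v$ (the integer count of neighbours of $v$ inside $T_{t^\prime}$), at an additive cost of $4a\sqrt{2\epsilon}$. \textbf{(iii)} The inductive hypothesis, propagated through the unitary $Z^{\prime \mathbf{A} 1_v}$, provides
\[
\norm{Z^{\prime \mathbf{A} 1_v} X^{\prime t^\prime}\ket{\psi^\prime} - (-1)^{\frac{1}{2} t^\prime \cdot \mathbf{A} t^\prime} Z^{\prime \mathbf{A} 1_v} Z^{\prime \mathbf{A} t^\prime}\ket{\psi^\prime}} \leq \bigl(2(t^\prime \cdot \mathbf{A} t^\prime) + k-1\bigr)\sqrt{2\epsilon}.
\]

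Chaining these three estimates by the triangle inequality, and using the operator identity $Z^{\prime \mathbf{A} 1_v} Z^{\prime \mathbf{A} t^\prime} = Z^{\prime \mathbf{A} t}$ (valid because each $Z^\prime_w$ squares to $I$, so integer and mod-$2$ exponents coincide), I arrive at
\[
\norm{X^{\prime t}\ket{\psi^\prime} - (-1)^{a + \frac{1}{2} t^\prime \cdot \mathbf{A} t^\prime} Z^{\prime \mathbf{A} t}\ket{\psi^\prime}} \leq \bigl(1 + 4a + 2(t^\prime \cdot \mathbf{A} t^\prime) + k-1\bigr)\sqrt{2\epsilon}.
\]
Bilinearly expanding $t \cdot \mathbf{A} t = (t^\prime + 1_v) \cdot \mathbf{A}(t^\prime + 1_v) = t^\prime \cdot \mathbf{A} t^\prime + 2a$ (using symmetry of $\mathbf{A}$ and $1_v \cdot \mathbf{A} 1_v = 0$) simultaneously collapses the sign exponent to $\frac{1}{2} t \cdot \mathbf{A} t$ and the error bound to $(2(t \cdot \mathbf{A} t) + k)\sqrt{2\epsilon}$, matching the claim.

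The main obstacle I anticipate is keeping the sign bookkeeping straight: the parity contributed by ingredient (ii) together with the inductive sign in (iii) must telescope to $\frac{1}{2} t \cdot \mathbf{A} t$, and this relies on the convention (stated in the lemma) that $t \cdot \mathbf{A} t$ and $a$ are evaluated over $\mathbb{Z}$ rather than $\mathbb{Z}_2$. Once this convention is respected, the recurrence closes cleanly and the multiplicative unitary propagation leaves the norm estimates unchanged, so the combined bound is tight in exactly the form stated.
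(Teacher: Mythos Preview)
Your inductive proof is correct and yields exactly the stated bound; the sign and error bookkeeping both telescope cleanly via $t \cdot \mathbf{A}t = t^\prime \cdot \mathbf{A}t^\prime + 2a$, as you note. One small remark: in step~(i) the commutativity of $X^\prime_v$ with $Z^{\prime \mathbf{A}1_v}$ is not actually needed, since $X^\prime_v S^\prime_v = (X^\prime_v)^2 Z^{\prime \mathbf{A}1_v} = Z^{\prime \mathbf{A}1_v}$ already from $(X^\prime_v)^2 = I$.

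The paper takes a different, non-inductive route: it applies all the stabilizers $S^\prime_v$ with $t_v = 1$ at once, writes the resulting product as $Z^{\prime \mathbf{A}^U t}\, X^{\prime t}\, Z^{\prime \mathbf{A}^L t}$ by splitting the adjacency matrix into its upper- and lower-triangular parts according to a fixed vertex ordering, and then invokes Corollary~\ref{cor:xzanticommute} a single time to move $Z^{\prime \mathbf{A}^L t}$ past $X^{\prime t}$. The identity $t \cdot \mathbf{A}^L t = \tfrac{1}{2}(t \cdot \mathbf{A}t)$ then delivers the same bound directly. Your induction avoids the triangular-decomposition device and is perhaps conceptually cleaner; the paper's version makes it explicit in one stroke that the $2(t \cdot \mathbf{A}t)$ contribution is precisely the anticommutation cost across all edges of the induced subgraph on $T_t$.
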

\begin{proof}
First, by Lemma~\ref{lemma:normbounds} we find
\begin{equation}
	\norm{
		\ket{\psi^{\prime}} 
		- 
		\prod_{\substack{
				v \in V \\
				t_{v} = 1
			}} 
			S^{\prime}_{v} 
			\ket{\psi^{\prime}}
	} \leq 
	(t \cdot t) \sqrt{2 \epsilon}.
	\label{eq:lemma2first}
\end{equation}
The right term in the norm can be expanded as
\begin{equation}
	\prod_{\substack{
			v \in V \\
			t_{v} = 1
		}} 
		S^{\prime}_{v} 
		\ket{\psi^{\prime}}
	= 
	\prod_{\substack{
			v \in V \\
			t_{v} = 1
		}} 
		X^{\prime v} 
		Z^{\prime \mathbf{A}1_{v}} 
		\ket{\psi^{\prime}}.
\end{equation}
We fix an ordering $<$ on $V$, and evaluate the product according to that ordering.  Thus if $t_{v} = t_{u} = 1$ and $u < v$ then $S^{\prime}_{u}$ appears in the product to the left of $S^{\prime}_{v}$.  Now suppose that $\mathbf{A}_{uv} = 1$.  Then $Z^{\prime}_{u}$ in $S^\prime_v$ appears to the right of the only occurrence of $X^{\prime}_{u}$ in $S^\prime_u$.  We may commute $Z^{\prime}_{u}$ to the right past all remaining operators on the $u$ system, so that $Z^{\prime}_{u}$ appears to the right of all $X^{\prime}$ operators.  The opposite is true if $v > u$, in which case we may commute $Z^{\prime}_{u}$ to the left, and it appears to the left of all $X^{\prime}$ operators.  Thus we may write the above as
\begin{equation}
    	\prod_{\substack{
			v \in V \\
			t_{v} = 1
		}} 
		S^{\prime}_{v} 
		\ket{\psi^{\prime}}
	= 
	\prod_{\substack{
			t_{u} = 1 \\
			u > v}} 
		Z^{\prime \mathbf{A}_{v,u}}_{v}  
		X^{\prime t} 
	\prod_{\substack{
			t_{u} = 1 \\ v > u
			}} 
		Z^{\prime \mathbf{A}_{v,u}}_{v} 
		\ket{\psi^{\prime}}.
\end{equation}
Let $\mathbf{A}^{L}$ be the lower triangular part of $\mathbf{A}$ (with 0s elsewhere) and $\mathbf{A}^{U}$ the upper triangular part.  Then we may rewrite the above as
\begin{equation}
	\prod_{\substack{
			v \in V \\
			t_{v} = 1
		}} 
		S^{\prime}_{v} 
		\ket{\psi^{\prime}}
	= 
	Z^{\prime A^{U} t} 
	X^{\prime t} 
	Z^{\prime A^{L}t} 
	\ket{\psi^{\prime}}.
\end{equation}
Using corollary~\ref{cor:xzanticommute} with $s = \mathbf{A}^{L}t$ and multiplying on the left by the unitary $Z^{\prime \mathbf{A}^{U} t}$ we obtain 
\begin{equation}
	\norm{
		Z^{\prime \mathbf{A}^{U} t} 
		X^{\prime t} 
		Z^{\prime \mathbf{A}^{L}t}  
		\ket{\psi^{\prime}}
		- 
		(-1)^{t \cdot \mathbf{A}^{L} t }
		Z^{\prime \mathbf{A}^{U} t} 
		Z^{\prime \mathbf{A}^{L}t} 
		X^{\prime t} 
		\ket{\psi^{\prime}}
	} \leq 
	4 (t \cdot \mathbf{A}^{L} t)
	\sqrt{2\epsilon}.
\end{equation}
Noting that 
$
	\mathbf{A}^{U}t + 
	\mathbf{A}^{L}t 
	= 
	\mathbf{A} t
$
and 
$
	t \cdot \mathbf{A}^{L} t 
	= 
	\frac{1}{2} 
	(t \cdot \mathbf{A} t)
$
since $A$ is symmetric, this becomes
\begin{equation}
	\norm{
	\prod_{\substack{
			v \in V \\
			t_{v} = 1
		}} 
		S^{\prime}_{v} 
		\ket{\psi^{\prime}}
		- 
		(-1)^{
			\frac{1}{2}
			(t \cdot \mathbf{A} t)
		}
		Z^{\prime \mathbf{A}t} 
		X^{\prime t} 
		\ket{\psi^{\prime}}
	} \leq 
	4 (t \cdot \mathbf{A}^{L} t)
	\sqrt{2\epsilon}.
\end{equation}
Finally we apply the triangle inequality along with~\eqref{eq:lemma2first} to find
\begin{equation}
	\norm{
		\ket{\psi^{\prime}} 
		- 
		(-1)^{
			\frac{1}{2}
			(t \cdot \mathbf{A} t)
		}
		Z^{\prime \mathbf{A}t} 
		X^{\prime t} 
		\ket{\psi^{\prime}}
	} \leq  
	\left(
		2(t \cdot \mathbf{A}t) + t \cdot t
	\right) 
	\sqrt{2\epsilon}
\end{equation}
which is transformed into the desired result by multiplying by $
    (-1)^{
        \frac{1}{2}
        (t \cdot \mathbf{A} t)
    }
    Z^{\prime \mathbf{A}t} 
$.
\end{proof}
%

\subsubsection{Proof of Theorem~\ref{theorem:graphstatetest}}

We are now in a position to prove Theorem~\ref{theorem:graphstatetest}.  This is done by giving a construction for $\Phi$ and using the above lemmas to prove that it has the necessary properties.

\begin{proof}
\begin{figure}[h]
\[
\Qcircuit @C=0.5cm @R=0.5cm{
	 & & \qw & \qw &\qw  & \qw & \qw & \qw \\
	\lstick{\ket{\phi_{+}}} \ar@{-}[ur] \ar@{-}[dr] & & & & \\
	 & & \ctrl{1} & \gate{H} & \ctrl{1}  & \gate{H} & \ctrl{1} &\qw \\
	\lstick{\ket{input}_{v}} &  & \gate{X^{\prime}_{v}} & \qw & \gate{Z^{\prime}_{v}} & \qw & \gate{X^{\prime}_{v}} & \qw\\
	}
\]
\caption{Circuit for $\Phi_{v}$}
\label{fig:epr_local_unitary_circuit}
\end{figure}
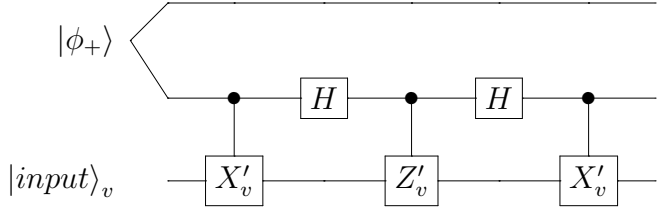

We will use $\Phi_{v}$ as defined in figure~\ref{fig:epr_local_unitary_circuit}.  The circuit is modified from that used in \cite{McKague:2010:Self-testing-gr,McKague:2012:Robust-self-tes} and earlier works, differing in the state of the ancilla.  Whereas we use an entangled pair of qubits $\ket{\phi_+}$, previous works used $\ket{0}$.  We also add an initial $\ctl{X}$ gate which was not needed when the initial state was $\ket{0}$.  When $X^{\prime}_{v}$ and $Z^{\prime}_{v}$ are the Pauli $X$ and $Z$ gates the circuit is clearly a SWAP gate. The idea is to swap the hidden qubit in the input wire with an explicit qubit.  As we shall see, the use of a maximally entangled pair of qubits in the ancilla wires allows for a tighter robustness analysis than is possible with the earlier version of the circuit.

The structure of the proof is a sequence of chained inequalities between states $\ket{\psi_{j}}$  and $\ket{\psi_{j+1}}$ for $j =1 \dots 4$ defined below.  We then use the triangle inequality to find the total distance.

We define $\Phi = \bigotimes_{v \in V} \Phi_v$, and $\ket{\psi_{1}} = \Phi\left(X^{\prime q}Z^{\prime p}\ket{\psi^{\prime}}\right)$ to be the state after the isometry $\Phi$ is applied.  Before the circuit is applied the state is 
\begin{equation}    
    X^{\prime q} 
    Z^{\prime p}
    \ket{\psi^{\prime}}
    n\ket{\phi_+}^{\otimes n} 
    = 
    \frac{1}{\sqrt{2^n}}
    \sum s 
        X^{\prime q}
        Z^{\prime p}         
        \ket{\psi^{\prime}}\ket{ss}.
\end{equation}
Applying the first $\ctl{X^\prime}$ gate yields
\begin{equation}
    \frac{1}{\sqrt{2^n}}
    \sum_s
        X^{\prime s}
        X^{\prime q} 
        Z^{\prime p} 
        \ket{\psi^{\prime}}\ket{ss}.
\end{equation}
Next we multiply by the Hadamard gates and apply Lemma~\ref{lemma:hadamardn} to obtain
\begin{equation}
    \frac{1}{\sqrt{2^{2n}}}
    \sum_{s,t} 
        (-1)^{t \cdot s}
        X^{\prime s \oplus q}
        Z^{\prime p} 
        \ket{\psi^{\prime}}\ket{st}.
\end{equation}
The $\ctl{Z^\prime}$ gate produces
\begin{equation}
    \frac{1}{\sqrt{2^{2n}}}
    \sum_{s,t} 
        (-1)^{t \cdot s}
        Z^{\prime t}
        X^{\prime s \oplus q}
        Z^{\prime p} 
        \ket{\psi^{\prime}}\ket{st}
\end{equation}
then another round of $\ctl{X^\prime}$ and Hadamard gates yields
\begin{equation}
	\ket{\psi_{1}}
	= 
	\frac{1}{\sqrt{2^{3n}}}
	\sum_{s,t,u}
		(-1)^{t \cdot(s \oplus u)}
		X^{\prime u}
		Z^{\prime t} 
		X^{\prime s \oplus q} 
		Z^{\prime p} 
		\ket{\psi^{\prime}} 
		\ket{s u}
\end{equation}
where $s,t,u \in \{0,1\}^{n}$.  The next step is to move $Z^{\prime p}$ to the left using corollary~\ref{cor:xzanticommute}, obtaining 
\begin{equation}
	\ket{\psi_{2}} 
	= 
	\frac{1}{\sqrt{2^{3n}}}
	\sum_{s,t,u}
		(-1)^{t \cdot (s \oplus u)}
		(-1)^{p \cdot (s \oplus q)}
		X^{\prime u}
		Z^{\prime t \oplus p} 
		X^{\prime s \oplus q} 
	\ket{\psi^{\prime}} 
	\ket{s u}.
\end{equation}
Next we move the combined $Z^{\prime}$s back to the right using corollary~\ref{cor:xzanticommute} again:
\begin{equation}
	\ket{\psi_{3}} 
	= 
	\frac{1}{\sqrt{2^{3n}}}
	\sum_{s,t,u}
		(-1)^{t \cdot (q \oplus u)}
		X^{\prime u \oplus s \oplus q} 
		Z^{\prime t \oplus p} 
		\ket{\psi^{\prime}} 
		\ket{s u}.
\end{equation}
Moving $Z^{\prime}$ to the left past the combined $X^\prime$s using corollary~\ref{cor:xzanticommute} one last time, we define $\ket{\psi_4}$ by
\begin{equation}
	\frac{1}{\sqrt{2^{3n}}}
	\sum_{s,t,u}
		(-1)^{t \cdot s}
		(-1)^{p \cdot (u \oplus s \oplus q)}
		Z^{\prime t \oplus p}	
		X^{\prime u \oplus s \oplus q} 
		\ket{\psi^{\prime}} 
		\ket{s u}.	
\end{equation}
Now we make two changes of variable, $t \mapsto t \oplus p$ and $u \mapsto u \oplus q$, arriving at
\begin{equation}
	\ket{\psi_{4}} 
	= 
	\frac{1}{\sqrt{2^{3n}}}
	\sum_{s,t,u}
		(-1)^{t \cdot s}
		(-1)^{p \cdot u }
		Z^{\prime t} 
		X^{\prime u \oplus s} 
		\ket{\psi^{\prime}} 
		\ket{s}
		\ket{u \oplus q}.
\end{equation}
Next we replace the $X^{\prime}$s with $Z^{\prime}$s using Lemma~\ref{lemma:changexz}
\begin{equation}
	\ket{\psi_5}
	=
	\frac{1}{\sqrt{2^{3n}}}
	\sum_{s,t,u}
		(-1)^{t \cdot s}
		(-1)^{p \cdot u }
		(-1)^{\frac{1}{2}
			(u \oplus s) \cdot \mathbf{A} (u \oplus s)}
		Z^{\prime t \oplus A(u \oplus s)} 
		\ket{\psi^{\prime}} 
		\ket{s}
		\ket{u \oplus q}.
\end{equation}
Changing variable $t \mapsto t \oplus \mathbf{A}(u \oplus s)$ we get
\begin{equation}
	\frac{1}{\sqrt{2^{3n}}}
	\sum_{s,t,u}
		(-1)^{t \cdot s}
		(-1)^{p \cdot u }
		(-1)^{s \cdot \mathbf{A}(u \oplus s)}
		(-1)^{\frac{1}{2}
			(u \oplus s)\cdot \mathbf{A} (u \oplus s)}
		Z^{\prime t} 
		\ket{\psi^{\prime}} 
		\ket{s}
		\ket{u \oplus q}
\end{equation}
and applying Lemma~\ref{lemma:adjmatrixadd} cleans this up to 
\begin{equation}
	\frac{1}{\sqrt{2^{3n}}}
	\sum_{s,t,u}
		(-1)^{t \cdot s}
		(-1)^{p \cdot u }
		(-1)^{\frac{1}{2}
			u \cdot \mathbf{A} u}
		Z^{\prime t} 
		\ket{\psi^{\prime}} 
		\ket{s}
		\ket{u \oplus q}
\end{equation}
after which the state factorizes:
\begin{eqnarray}
	\nonumber
	\ket{\psi_{5}} 
	& = & 
		\frac{1}{\sqrt{2^{3n}}}
		\sum_{s,t}
			(-1)^{t \cdot s}
			\ket{\psi^{\prime}} 
			\ket{s}
		\sum_{u}
			(-1)^{p \cdot u}
			(-1)^{\frac{1}{2} u \cdot \mathbf{A}u}
			\ket{u \oplus q}
	 \\
	& = & 
	\left(
		\frac{1}{2^{n}}
		\sum_{s,t}
			(-1)^{t \cdot s}
			Z^{\prime t}
			\ket{\psi^{\prime}} 
			\ket{s}
	\right)
	X^q Z^p 
	\ket{G}
\end{eqnarray}

In the case where $\epsilon = 0$ the above essentially gives the entire proof.  The remainder of the proof estimates the error at each step.

First we note that all the states above are normalized.  In all cases this is easy to prove.  We show the calculation for $\ket{\psi_{2}}$, with the others proceeding similarly.  We find
\begin{multline}
	\braket{\psi_{2}}{\psi_{2}} 
	= 
	\frac{1}{2^{3n}}
	\sum_{\substack{
		s, s^\prime \\
		t, t^{\prime} \\
		u, u^\prime}}
		(-1)^{
			t  \cdot (s \oplus u)
		}
		(-1)^{
			t^\prime  \cdot (s^\prime \oplus u^\prime)
		}
        (-1)^{
            p \cdot (s \oplus s^\prime)
        }
\\
		\bra{\psi^{\prime}}
		X^{\prime s \oplus q} 
		Z^{\prime t \oplus p} 
		X^{\prime u}
		X^{\prime u^\prime} 
		Z^{\prime t^{\prime} \oplus p}
		X^{\prime s^\prime \oplus q} 
		\ket{\psi^{\prime}}
		\braket{su}{s^\prime u^\prime}.
\end{multline}
The $\braket{su}{s^{\prime}u^{\prime}}$ factor implies that $u^{\prime} = u$ and $s^{\prime} = s$ for all non-zero terms so
\begin{multline}
	\braket{\psi_{2}}{\psi_{2}} 
	= 
	\frac{1}{2^{3n}}
	\sum_{s, t, t^{\prime} u}
		(-1)^{
			(t  \oplus t^\prime) 
			\cdot 
			(s \oplus u)
		}
		\bra{\psi^{\prime}}
		X^{\prime s \oplus q} 
		Z^{\prime t \oplus p} 
		X^{\prime u}
		X^{\prime u} 
		Z^{\prime t^{\prime} \oplus p}
		X^{\prime s \oplus q} 
		\ket{\psi^{\prime}}.
\end{multline}
The $X^{\prime u}$ operators square to the identity, and subsequently so do the $Z^{\prime p}$s.  
\begin{equation}
	\frac{1}{2^{3n}}
	\sum_{s, t, t^{\prime} u}
		(-1)^{
			(t  \oplus t^\prime) 
			\cdot 
			(s \oplus u)
		}
		\bra{\psi^{\prime}}
		X^{\prime s \oplus q} 
		Z^{\prime t \oplus t^{\prime}}
		X^{\prime s \oplus q} 
		\ket{\psi^{\prime}}.
\end{equation}
We then make a change of variable $t^{\prime} \mapsto t^{\prime} \oplus t$ and break $(-1)^{t \cdot (s \oplus u)}$ into $(-1)^{t \cdot s} (-1)^{t \cdot u}$.  The summand no longer depends on  $t^\prime$ so we can omit it from the summation, multiplying by $2^n$ instead.  We also bring the summation over $u$ inside, forming an inner sum.
\begin{equation}
	\frac{1}{2^{2n}}
	\sum_{s,t}
		(-1)^{t \cdot s}
		\left(
			\sum_{u}
			(-1)^{t \cdot u}
		\right)
		\bra{\psi^{\prime}}
			X^{\prime s \oplus q}
			Z^{\prime t}
			X^{\prime s \oplus q}
		\ket{\psi^{\prime}}.
\end{equation}
Lemma~\ref{lemma:sumoverstrings} says that the inner sum is 0 except when $t = 0$, so we can drop the $Z^{\prime t}$s and the summation over $t$, and $t \cdot s = 0$.   Then the $X^{\prime s \oplus q}$s then square to the identity.  We are left with $\frac{1}{2^{n}}\sum_{s} \braket{\psi^{\prime}}{\psi^{\prime}} = 1$.

Now we estimate the distances between successive states starting with $\ket{\psi_{1}}$ and $\ket{\psi_{2}}$.  From the definition of the 2-norm,
\begin{equation}
\norm{\ket{\psi_{1}} - \ket{\psi_{2}}} = \sqrt{2 - 2 \text{Re} \braket{\psi_{1}}{\psi_{2}}}.
\end{equation}
Next we determine $\braket{\psi_{1}}{\psi_{2}}$:
\begin{multline}
	\braket{\psi_{1}}{\psi_{2}}
	= 
	\frac{1}{
		\sqrt{2^{3n}}
	}
	\sum_{\substack{
			s, s^\prime \\
			t, t^\prime \\
			u, u^\prime
		}}
		(-1)^{
				t \cdot( s \oplus u)
			}
		(-1)^{
				t^\prime \cdot (s^\prime \oplus u^\prime)
			}
		(-1)^{
				p \cdot (s^\prime \oplus q)
			}
\\
		\bra{\psi^{\prime}}
			Z^{\prime p}
			X^{\prime q \oplus s}
			Z^{\prime t}
			X^{\prime u}
			X^{\prime u^\prime}
			Z^{\prime t^{\prime} \oplus p}
			X^{\prime s^\prime \oplus q}
		\ket{\psi^{\prime}}
		\braket{su}{s^\prime u^\prime}.
\end{multline}
From the $\braket{su}{s^\prime u^\prime}$ term, we see than $s = s^\prime$ and $u = u^\prime$ in all non-zero terms.  This allows us to cancel $X^{\prime u} X^{\prime u^\prime}$ and remove the $u^{\prime}$ and $s^{\prime}$ variables.  We also pull the sum over $u$ in as an inner sum
\begin{multline}
	\braket{\psi_{1}}{\psi_{2}}
	= 
	\frac{1}{2^{3n}}
	\sum_{s,t,t^{\prime}}
	\left(
		\sum_u
		(-1)^{
			(t \oplus t^\prime) \cdot u
			}
	\right)
		(-1)^{
			(t\oplus t^{\prime})
			\cdot s
		}
		(-1)^{p \cdot(s\oplus q)}
\\        
		\bra{\psi^{\prime}}
			Z^{\prime p}
			X^{\prime q \oplus s}
			Z^{\prime t \oplus t^{\prime}}
			Z^{\prime p}
			X^{\prime s \oplus q}
		\ket{\psi^{\prime}}.
\end{multline}
Next we use Lemma~\ref{lemma:sumoverstrings} to see that the inner sum is zero except when $t \oplus t^\prime = 0$.  The terms $(-1)^{(t \oplus t^\prime) \cdot s)}$ and $Z^{\prime t \oplus t^\prime}$ then become $1$ and the identity, leaving the summand independent of $t$ and $t^\prime$.  We remove them from the sum, multiplying by $2^n$ instead.  Finally, we make the change of variable $s \mapsto s \oplus q$ to get
\begin{equation}
	\braket{\psi_{1}}{\psi_{2}}
	= 
	\frac{1}{2^{n}}
	\sum_{s}
		(-1)^{p \cdot s}
		\bra{\psi^{\prime}} 
			Z^{\prime p}
			X^{\prime s}
			Z^{\prime p}
			X^{\prime s}
		\ket{\psi^{\prime}}.
\end{equation}
Next set 
$
	\epsilon_{p, s} 
	= 	
	(-1)^{p \cdot s}
	\bra{\psi^{\prime}} 
		Z^{\prime p}
		X^{\prime s}
		Z^{\prime p}
		X^{\prime s}
	\ket{\psi^{\prime}} 
	- 
	\bra{\psi^{\prime}} 
		Z^{\prime p}
		X^{\prime s}
		X^{\prime s}
		Z^{\prime p}
	\ket{\psi^{\prime}}
$, with the second term becoming just $\braket{\psi^\prime}{\psi^\prime}$, so that the above becomes
\begin{equation}
	\braket{\psi_{1}}{\psi_{2}} 
	=
	\frac{1}{2^{n}}
	\sum_{s}
		\left(
			\braket{\psi^{\prime}}{\psi^{\prime}}
			+ 
			\epsilon_{p, s}
		\right)
	=
	1 + \frac{1}{2^{n}} 
	\sum_{s} \epsilon_{p,s}.
\end{equation}
Corollary~\ref{cor:xzanticommute} and the third part of Lemma~\ref{lemma:normbounds} give $\left|\epsilon_{p, s}\right| \leq 4(p \cdot s) \sqrt{2 \epsilon}$ and the triangle inequality then gives us
\begin{equation}
	\left| 
		\braket{\psi_{1}}{\psi_{2}} - 1 
	\right| 
	\leq 
	1 + \frac{1}{2^{n}} 
	\sum_{s}
        4(p \cdot s) \sqrt{2 \epsilon}.
\end{equation}  
Lemma~\ref{lemma:averageinnerproduct} tells us how to deal with the sum over $s$, and we write
\begin{equation}
	\left| 
		\braket{\psi_{1}}{\psi_{2}} - 1 
	\right| 
	\leq 
	2(p \cdot p) 
	\sqrt{2 \epsilon}
\end{equation}
and plugging this back into the definition of the 2-norm gives
\begin{equation}
	\norm{
		\ket{\psi_{1}} - \ket{\psi_{2}}
	} \leq 
	\sqrt{4 (p \cdot p)
	\sqrt{2 \epsilon}}.
\end{equation}
Using similar estimates we find
\begin{eqnarray}
	\norm{
		\ket{\psi_{2}} 
	- 
		\ket{\psi_{3}}
	} & \leq & 
	\sqrt{2 n \sqrt{2 \epsilon}} 
\\
	\norm{
		\ket{\psi_{3}}
	 - 
	 	\ket{\psi_{4}}
	} & \leq & 
	\sqrt{2n \sqrt{2 \epsilon}}
\end{eqnarray}
where the final sum is over $s$ and $t$ rather than $s$ with the fixed string $p$.  We appeal to the second part of Lemma~\ref{lemma:averageinnerproduct} as the last step.

The remaining distance to calculate is $\norm{\ket{\psi_4} - \ket{\psi_5}}$.  Again we proceed by way of the definition of $\norm{\cdot}$ and the inner product.
\begin{multline}
	\braket{\psi_4}{\psi_5}
	=
	\frac{1}{2^{3n}}
	\sum_{\substack{
			s, s^\prime \\
			t, t^\prime \\
			u, u^\prime}
		}
		(-1)^{t \cdot s}
		(-1)^{t^\prime \cdot s^\prime}
		(-1)^{p \cdot (u \oplus u^\prime)}
		(-1)^{\frac{1}{2} 
			(u^\prime \oplus s^\prime) \cdot
			\mathbf{A}
			(u^\prime \oplus s^\prime)
			}
\\
		\bra{\psi^\prime}
			X^{\prime u \oplus s}
			Z^{\prime t}
			Z^{\prime t^\prime \oplus \mathbf{A}(u^\prime \oplus s^\prime )}
		\ket{\psi^\prime}
		\braket{s, u\oplus q}{s^\prime, u^\prime \oplus q}	
\end{multline}
For non-zero terms $s = s^\prime$ and $u = u^\prime$.  Re-indexing by $s \mapsto s \oplus u$ we find that the above is equal to
\begin{equation}
	\frac{1}{2^{3n}}
	\sum_{s,t,t^\prime}
		\left(
			\sum_u
				(-1)^{(t \oplus t^\prime) \cdot u}
		\right)
		(-1)^{(t \oplus t^\prime) \cdot s}
		(-1)^{\frac{1}{2} 
			s \cdot
			\mathbf{A}
			s
			}
		\bra{\psi^\prime}
			X^{\prime s}
			Z^{\prime t \oplus t^\prime \oplus \mathbf{A}s}
		\ket{\psi^\prime}
\end{equation}
where we have pulled all the terms dependent on $u$ into the inner sum.  Lemma~\ref{lemma:sumoverstrings} says that this inner sum is zero except where $t \oplus t^\prime = 0$ when it is $2^n$.  Substituting these in, the above becomes
\begin{equation}
	\frac{1}{2^{n}}
	\sum_{s}
		(-1)^{\frac{1}{2} 
			s \cdot
			\mathbf{A}
			s
			}
		\bra{\psi^\prime}
			X^{\prime s}
			Z^{\prime \mathbf{A}s}
		\ket{\psi^\prime}.
\end{equation}

Now let us bound the inner product:
\begin{eqnarray}
	\left|
		1 
		-
		\braket{\psi_4}{\psi_5}
	\right| 
	& = & 
	\left|
		1
		-
		\frac{1}{2^{n}}
		\sum_{s}
			(-1)^{\frac{1}{2} 
				s \cdot
				\mathbf{A}
				s
				}
			\bra{\psi^\prime}
				X^{\prime s}
				Z^{\prime \mathbf{A}s}
			\ket{\psi^\prime}
	\right| 
\\
	& \leq &
		\frac{1}{2^{n}}
		\sum_{s}
			\left|
				1 
				-
				(-1)^{\frac{1}{2} 
					s \cdot
					\mathbf{A}
					s
					}
				\bra{\psi^\prime}
					X^{\prime s}
					Z^{\prime \mathbf{A}s}
				\ket{\psi^\prime}
			\right|
\\ 		
	& \leq &
		\frac{\sqrt{2 \epsilon}}{2^n}
		\sum_s
			2 ( s \cdot \mathbf{A} s)
			+
			(s \cdot s)
\\
	& \leq &
		\frac{|E| + n}{2}\sqrt{2 \epsilon}
\end{eqnarray}
To obtain the second line above we have used the triangle inequality.  The third line comes from taking Lemma~\ref{lemma:changexz}, multiplying on the left by $\bra{X^{\prime s}}$ and applying Lemma~\ref{lemma:normbounds}.  We then use Lemma~\ref{lemma:averageinnerproduct} to obtain the last line.  Finally we find
\begin{equation}
	\norm{\ket{\psi_4} - \ket{\psi_5}} \leq \sqrt{(|E| + n)\sqrt{2 \epsilon}}
\end{equation}

Adding all the bounds using the triangle inequality we obtain
\begin{eqnarray}
	\norm{\ket{\psi_1} - \ket{\psi_5}}
	& \leq & 
	\left(
		2\sqrt{p \cdot p} +
		2\sqrt{2n} +
		\sqrt{|E| + n}
	\right)
	(2 \epsilon)^\frac{1}{4}
\end{eqnarray}
\end{proof}

\subsection{Error bounds for non-Pauli measurements}
\label{sec:xyplanemeasurements}
In order to achieve universal computation we need to have measurements other than just $X$ and $Z$.  It suffices to have $X$-$Z$ plane measurements.  Let us define
\begin{equation}
	R_{v}(\theta) 
	= 
	\cos\theta\, X_{v} + 
	\sin\theta\, Z_{v}.
\end{equation}
We use the symbol $R^{\prime}_{u}(\theta)$ to denote the $\pm 1$ eigenvalue observable that the prover uses when queried with the angle $\theta$.  We do not make any prior assumption on how $R^\prime_u(\theta)$ is related to $X^\prime_u$ or $Z^\prime_u$.  Instead we will derive said relationship via the graph-state test and further measurements.

\begin{lemma}\label{lemma:xzplanemeasurements}
Under the conditions of Theorem~\ref{theorem:graphstatetest}, if we have measurements $R^{\prime}_{v}(\theta)$ and an edge $(u,v)$ such that
\begin{equation}\label{eq:xyplanecondition}
	\bra{\psi^{\prime}} 
		R^{\prime}_{v}(\theta) 
		\left( 
			\cos\theta
			Z^{\prime \mathbf{A} 1_{v}} 
			+ 
			\sin\theta
			X^{\prime}_{u}
			Z^{\prime \mathbf{A} 1_{u} \oplus 1_{v}}
		 \right)
	\ket{\psi^{\prime}}
	\geq 
	1 - \epsilon
\end{equation}
then with $\Phi$ and $\ket{junk}$ set to those in Theorem~\ref{theorem:graphstatetest},
\begin{equation}
	\norm{
		\Phi(R^{\prime}_{v}(\theta) \ket{\psi^{\prime}}) 
	- 
		\ket{junk} R_{v}(\theta)\ket{\psi}
	} \leq
	\sqrt{2(\epsilon + 2\delta)}
\end{equation}
where $\delta$ is the bound in Theorem~\ref{theorem:graphstatetest}
\end{lemma}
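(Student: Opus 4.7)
The plan is to bound the target quantity by expanding the squared norm and reducing everything to expectation values already controlled by Theorem~\ref{theorem:graphstatetest} together with the hypothesis \eqref{eq:xyplanecondition}. First I would verify the ideal-case identity
\[
R_v(\theta)\ket{G} \;=\; \cos\theta\,Z^{\mathbf{A}1_v}\ket{G} + \sin\theta\, X_u Z^{\mathbf{A}1_u\oplus 1_v}\ket{G},
\]
which follows from $S_v\ket{G}=\ket{G}$ (giving $X_v\ket{G}=Z^{\mathbf{A}1_v}\ket{G}$) and $S_u\ket{G}=\ket{G}$ (giving $Z_v\ket{G}=X_u Z^{\mathbf{A}1_u\oplus 1_v}\ket{G}$, using that $(u,v)\in E$ and $X_u$ commutes with $Z^{\mathbf{A}1_u\oplus 1_v}$ since $u$ is not a neighbour of itself). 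This shows that the primed ``replacement'' operator in \eqref{eq:xyplanecondition} is exactly the right thing to reproduce $R_v(\theta)$ on the ideal graph state.

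Next I would use that $\Phi$ is an isometry, that $R_v(\theta)^2=I$, and that $\ket{junk}$ is a unit vector to write
\[
\bigl\|\Phi(R^{\prime}_v(\theta)\ket{\psi^{\prime}})-\ket{junk}R_v(\theta)\ket{G}\bigr\|^{2}=2-2\,\mathrm{Re}\,\bigl\langle\Phi(R^{\prime}_v(\theta)\ket{\psi^{\prime}})\,\big|\,\ket{junk}R_v(\theta)\ket{G}\bigr\rangle.
\]
Substituting the decomposition of $R_v(\theta)\ket{G}$ from the previous step splits the inner product into two terms, one weighted by $\cos\theta$ and one by $\sin\theta$. For each term I would apply Theorem~\ref{theorem:graphstatetest} to replace $\ket{junk}Z^{\mathbf{A}1_v}\ket{G}$ by $\Phi(Z^{\prime\,\mathbf{A}1_v}\ket{\psi^{\prime}})$ and $\ket{junk}X_uZ^{\mathbf{A}1_u\oplus 1_v}\ket{G}$ by $\Phi(X^{\prime}_uZ^{\prime\,\mathbf{A}1_u\oplus 1_v}\ket{\psi^{\prime}})$, each at a cost of at most $\delta$ in the inner product (since $\Phi(R^{\prime}_v(\theta)\ket{\psi^{\prime}})$ is a unit vector, Cauchy–Schwarz gives the crude bound). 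Then isometry turns the inner products into $\bra{\psi^{\prime}}R^{\prime}_v(\theta)Z^{\prime\,\mathbf{A}1_v}\ket{\psi^{\prime}}$ and the analogous expression, whose weighted sum is precisely the left-hand side of \eqref{eq:xyplanecondition}.

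Combining the two approximations with the bound $|\cos\theta|+|\sin\theta|\le 2$ gives total slippage $2\delta$, so
\[
\mathrm{Re}\,\bigl\langle\Phi(R^{\prime}_v(\theta)\ket{\psi^{\prime}})\,\big|\,\ket{junk}R_v(\theta)\ket{G}\bigr\rangle \;\geq\; (1-\epsilon)-2\delta.
\]
Plugging this into the squared-norm expansion yields $\le 2(\epsilon+2\delta)$ and taking the square root gives the claim. No step is particularly delicate; the main thing to get right is the pedantic verification that $R_v(\theta)\ket{G}$ equals the stabilizer-rewritten expression (so that in the ideal limit the bound is exact), and the bookkeeping showing that the two separate graph-state approximations contribute additively rather than multiplicatively — which is where the factor of $2\delta$ (as opposed to $\sqrt{2}\delta$) comes from and which matches the statement exactly.
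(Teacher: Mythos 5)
Your proposal is correct and follows essentially the same route as the paper: both rewrite $R_v(\theta)\ket{G}$ via the stabilizers $S_u, S_v$ as $\cos\theta\,Z^{\mathbf{A}1_v}\ket{G}+\sin\theta\,X_uZ^{\mathbf{A}1_u\oplus 1_v}\ket{G}$, invoke Theorem~\ref{theorem:graphstatetest} once for each of the two primed operators (costing $(\cos\theta+\sin\theta)\delta\le 2\delta$), use the isometry property to turn the resulting inner products into the expectation value in \eqref{eq:xyplanecondition}, and convert the inner-product bound $1-\epsilon-2\delta$ into the norm bound $\sqrt{2(\epsilon+2\delta)}$. The only difference is the order of operations — you expand the squared norm first and substitute, while the paper bounds the vector difference first and then pairs it with the bra $\bra{\psi^\prime}R^\prime_v(\theta)$ — which is purely cosmetic.
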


\begin{proof}
From Theorem~\ref{theorem:graphstatetest} we obtain $\Phi$ and $\ket{junk}$ so that

\begin{equation}
	\norm{
		\Phi(M^{\prime}\ket{\psi^{\prime}}) 
	- 
		\ket{junk}M\ket{\psi}
	} \leq 
	\delta
\end{equation}
for $M^{\prime} \in \{ Z^{\prime \mathbf{A} 1_{v}} , X^{\prime}_{u}Z^{\prime \mathbf{A} 1_{u} \oplus 1_{v}} \}$ in particular.  From the stabilizer generators $S_u$ and $S_v$ we find $X_{u}Z^{\mathbf{A} 1_{u} \oplus 1_{v}} \ket{\psi} = Z_{v} \ket{\psi}$ and $Z^{\mathbf{A} 1_{v}}\ket{\psi} = X_{v}\ket{\psi}$, hence linearity of $\Phi$ and the triangle inequality give 
\begin{multline}
	\left|\left|
		\Phi
		\left(
			\left(
				\cos\theta\, 
				Z^{\prime \mathbf{A} 1_{v}} 
				+ 
				\sin\theta\, 
				X^{\prime}_{u}
				Z^{\prime \mathbf{A} 1_{u} \oplus 1_{v}}
			\right)
			\ket{\psi^{\prime}}
		\right)
	- 
	\right. \right.
	\\
	\left. \left.
		\ket{junk}
		\left(
			\cos\theta\, X_v
			+
			\sin \theta\, Z_v
		\right)
		\ket{G} 
	\right| \right|
	\\
	\leq
	(\cos\theta + \sin\theta) \delta
\end{multline}
Using  $
	\cos\theta\, X_v
	+
	\sin \theta\, Z_v
	= 
	R_v(\theta)
$
and $\cos \theta + \sin \theta \leq 2$ this becomes
\begin{equation}
	\norm{
		\Phi
		\left(
			\left(
				\cos\theta\, 
				Z^{\prime \mathbf{A} 1_{v}} 
				+ 
				\sin\theta\, 
				X^{\prime}_{u}
				Z^{\prime \mathbf{A} 1_{u} \oplus 1_{v}}
			\right)
			\ket{\psi^{\prime}}
		\right)
	- 
		\ket{junk}
		R_v(\theta)
		\ket{G} 
	}
	\leq
	2\delta.
\end{equation}
Now since $\norm{\bra{\psi^\prime} \Phi(R^\prime_v(\theta))}_\infty = 1$,  we have
\begin{multline}
	\left|
        \Phi\left(
            \bra{\psi^\prime}
                R^\prime_v (\theta)
        \right)
        \Phi\left(
            \left(
                \cos\theta\, 
                Z^{\prime \mathbf{A} 1_{v}} 
                + 
                \sin\theta\, 
                X^{\prime}_{u}
                Z^{\prime \mathbf{A} 1_{u} \oplus 1_{v}}
            \right)
            \ket{\psi^{\prime}}
        \right)
	\right. \\
	- 
	\left.
        \Phi\left(
			\bra{\psi^\prime}
			R^\prime_v (\theta)
        \right)
		\ket{junk}
		R_v(\theta)
		\ket{G} 
	\right|
	\leq
	2\delta.
\end{multline}
$\Phi$ preserves inner products, so this becomes
\begin{multline}
	\left|
        \bra{\psi^\prime}
            R^\prime_v (\theta)
            \left(
                \cos\theta\, 
                Z^{\prime \mathbf{A} 1_{v}} 
                + 
                \sin\theta\, 
                X^{\prime}_{u}
                Z^{\prime \mathbf{A} 1_{u} \oplus 1_{v}}
            \right)
        \ket{\psi^{\prime}}
	\right. \\
	- 
	\left.
        \Phi\left(
			\bra{\psi^\prime}
			R^\prime_v (\theta)
        \right)
		\ket{junk}
		R_v(\theta)
		\ket{G} 
	\right|
	\leq
	2\delta.
\end{multline}
Using the triangle inequality and \eqref{eq:xyplanecondition} we find
\begin{equation}
	\Phi(
		\bra{\psi^{\prime}} 
		R^{\prime}_{v}(\theta)
	) 
	\ket{junk}
	R_v(\theta)
	\ket{\psi}
	\geq 
	1 - \epsilon - 2 \delta.
\end{equation}
We now apply Lemma~\ref{lemma:normbounds} to obtain the desired bound.
\end{proof}
 
The lemma says that if we can estimate the expected value for a certain operator we can bound the error on $R^{\prime}_{u}(\theta)$.  Later in section~\ref{sec:oneshottest} we will show how we can estimate said expected value.

\subsection{Error bounds for measurement patterns}
Our bounds in the previous section show that we can bound the error when applying a measurement of the form $M_{1} \otimes \dots \otimes M_{n}$, which gives a single bit of output.  However, for graph state computation we need something much more substantial since we will need to measure the subsystems in a sequence, with each basis chosen as a function of the previous outcomes.  In fact we will prove something even stronger than this.

We will consider a stronger situation where instead of trusted classical computation and classical interaction, we have some trusted quantum computation and quantum interaction with the provers.  The provers allow the basis to be chosen quantumly and they similarly return the result coherently.  We can model this by specifying that, when queried with a quantum register, prover $j$ applies 
\begin{equation}
	V^\prime_j 
	= 
	\sum_{k=0}^{m_j} 
		\proj{k}{k} 
		\otimes 
		M_{j,k}^{\prime} 
\end{equation} 
where $M_{j,k}^\prime$ corresponds to the observable that prover $j$ uses when queried with input $k \in \{0 \dots m_j\}$.  The prover then passes the control register back to the verifier and the result of the query is stored as a $\pm 1$ phase.  We will require that the prover's actions are all of this form, although they are free to choose the $M^\prime_{j,k}$ as they like.  As well, the ideal operator $V_j$ has this form, using observables $M_{j,k}$.  

Assuming\footnote
{
	Our current self-test doesn't test whether the identity is performed correctly, since we measure the identity by just ignoring a prover.  We could easily extend the test by explicitly asking the prover to measure the identity and testing whether they return $1$, but this is unnecessary except in this imaginary case of quantum control.
}
$M_{j,0}^{\prime} = I$
we can retrieve the outcome for measurement $M^\prime_{j,k}$ by preparing the state 
$
	\frac{1}{\sqrt{2}}
	\left( 
		\ket{0} + 
		\ket{k} 
	\right)
$
and observing the relative phase change in the prover's response.  Hence this model includes the original classical behaviour as a particular case.

A general circuit for the verifier-prover interaction in this stronger model is given in figure~\ref{fig:Vjcircuit}.  The verifier first applies some unitary $U_0$ to prepare its initial state, and then performs the first query to prover 1, $V^\prime_{1}$.  The verifier then applies some unitary $U_1$ to its internal state and performs the second query to prover 2, $V^\prime_2$, and so on.  The combined operation is $U_n V^\prime_n \dots U_1 V^\prime_1 U_0$.  We require that each $V^\prime_j$ is applied at most once and for convenience we suppose that they are numbered in the order in which they are applied.  In this circuit we have always used the same the control wire, which is a q-dit with dimension equal to the maximum $m_j+1$.  This is not a limitation since we can always use the same control wire by incorporating swaps into the $U$'s if necessary.  

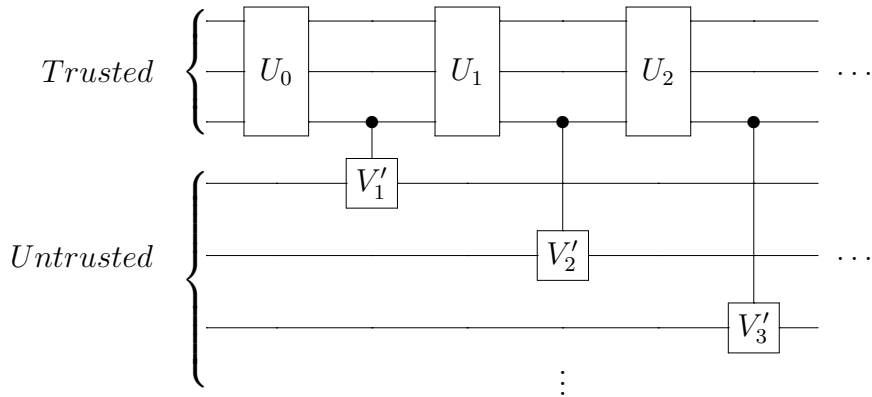
\begin{figure}[h]
\[
\Qcircuit @C=0.5cm @R=0.3cm{
&  & \multigate{2}{U_0} 	& \qw 	& \multigate{2}{U_1} & \qw & \multigate{2}{U_2} & \qw & \qw \\
	\lstick{Trusted} 	 \gategroup{1}{2}{3}{2}{.7em}{\{} & &  \ghost{U_1} & \qw 			 & \ghost{U_2} & \qw & \ghost{U_3} & \qw & \qw & \dots \\
				&	& \ghost{U_1}  			& \ctrl{1} & \ghost{U_2} & \ctrl{2} & \ghost{U_3} & \ctrl{3} & \qw \\
& &  \qw 					& \gate{V^\prime_1} & \qw & \qw & \qw & \qw & \qw \\
\lstick{Untrusted}  \gategroup{4}{2}{7}{2}{.7em}{\{} 	& &  \qw 					& \qw  & \qw	& \gate{V^\prime_2} & \qw & \qw & \qw & \dots \\
					& & \qw & \qw & \qw & \qw & \qw &  \gate{V^\prime_3} & \qw \\
					& & & & & \vdots \\ 
}
\]
\caption{Semi-trusted circuit incorporating untrusted measurements by the provers, $V^\prime_j$}
\label{fig:Vjcircuit}
\end{figure}

Let $W^\prime_0 = U_0$, and $W^\prime_j =  U_j V^\prime_{j} W^\prime_{j-1}$ for $j \geq 1$.  That is, $W^\prime_j$ represents running the circuit until the point after $U_j$ has been applied.  Similarly, let $W_j = U_{j}V_{j}W_{j-1}$ be the ideal circuit where we substitute in the ideal $V_j$ (constructed from the ideal $M_{j,k}$).

\begin{lemma}
\label{lemma:adaptivemeasurements}
Let $\ket{\psi}$, $\ket{\psi^\prime}$, $\ket{junk}$, and $\Phi = \Phi_1 \otimes \dots \otimes \Phi_n$ be given along with $M_{j,k}^\prime$ and $M_{j,k}$ ($k=0 \dots m$ and $M_{j,0}^\prime=I$) such that
\begin{equation}
\label{eq:adaptivemeasurementcondition}
\norm{\Phi\left( M_{j,k}^\prime \ket{\psi^\prime} \right)
-
\ket{junk} M_{j,k} \ket{\psi}
} \leq \delta.
\end{equation}
Further, let some $\ket{\phi}$ and $U_j$ be given where $\ket{\phi}$ contains a register of dimension at least $m+1$ and $U_j$ acts only on the $\ket{\phi}$ registers, and let, $V_j$, $V_j^\prime$, $W_j$ and $W_j^\prime$ be defined as above.  Then
\begin{equation}
	\norm{
		\Phi(W^\prime_n \ket{\psi^\prime} \ket{\phi}) 
	-
		\ket{junk} W_n\ket{\psi} \ket{\phi}
	} \leq
	(2nm+1)\delta. 	
\label{eq:adaptivemeasurementbound}
\end{equation}

\end{lemma}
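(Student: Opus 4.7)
The plan is to prove by induction on $j \in \{0, \dots, n\}$ that
\begin{equation*}
\norm{\Phi(W^\prime_j\ket{\psi^\prime}\ket{\phi}) - \ket{junk} W_j \ket{\psi}\ket{\phi}} \leq (2jm + 1)\delta,
\end{equation*}
from which \eqref{eq:adaptivemeasurementbound} is the case $j = n$. The base case $j = 0$ follows because $W^\prime_0 = W_0 = U_0$ acts only on the trusted $\ket{\phi}$ registers, so it commutes with $\Phi$ and preserves norms; the bound then reduces to $\norm{\Phi(\ket{\psi^\prime}) - \ket{junk}\ket{\psi}} \leq \delta$, which is exactly \eqref{eq:adaptivemeasurementcondition} at $k = 0$ since $M^\prime_{j,0} = I$.

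For the inductive step I would first strip $U_j$ off the outside in the same way, reducing the bound to $\norm{\Phi(V^\prime_j W^\prime_{j-1}\ket{\psi^\prime}\ket{\phi}) - \ket{junk}V_j W_{j-1}\ket{\psi}\ket{\phi}}$. Writing $V^\prime_j = \sum_{k=0}^m \proj{k}{k}_{ctrl}\otimes M^\prime_{j,k}$ and $V_j$ analogously, I would invoke Lemma~\ref{lemma:isometryproperites} to push $\Phi$ past the control projectors (which act only on trusted space) and past each local $M^\prime_{j,k}$, yielding $\Phi(V^\prime_j\ket{\xi}) = \tilde V^\prime_j \Phi(\ket{\xi})$ with $\tilde V^\prime_j = \sum_k \proj{k}{k}_{ctrl}\otimes \Phi_j(M^\prime_{j,k})$. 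Inserting the intermediate vector $\tilde V^\prime_j\ket{junk}W_{j-1}\ket{\psi}\ket{\phi}$ and applying the triangle inequality splits the bound in two: the first piece is at most $(2(j-1)m+1)\delta$ by the inductive hypothesis, together with the fact that $\tilde V^\prime_j$ has operator norm at most one on the image of $\Phi_j$.

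The second piece is $\norm{(\tilde V^\prime_j - V_j)\ket{junk}W_{j-1}\ket{\psi}\ket{\phi}}$. The control states $\ket{k}_{ctrl}$ are orthogonal, so the squared norm decomposes as a sum of $m+1$ terms indexed by $k$. For each $k$, the crucial observation is that $W_{j-1}$, $\ket{junk}$, and $\ket{\phi}$ all act on subsystems disjoint from ideal prover $j$ (since $W_{j-1}$ only touches provers $1,\dots,j-1$ and the trusted registers). Hence the reduced state on ideal prover $j$ of the relevant vector matches that of $\ket{junk}\ket{\psi}$, and a standard purification argument reduces each $k$-term to $\norm{(\Phi_j(M^\prime_{j,k}) - M_{j,k})\ket{junk}\ket{\psi}}$. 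Combining the hypothesis \eqref{eq:adaptivemeasurementcondition} at indices $k$ and $0$ via the triangle inequality bounds this quantity by $2\delta$, and summing the $m+1$ contributions and adding the first piece produces the required $(2jm+1)\delta$ bound.

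The main obstacle I expect is managing the non-unitarity of $\Phi_j$: the lifted operator $\tilde V^\prime_j$ acts on a strictly larger Hilbert space than $V_j$ and is only norm-preserving on the image of $\Phi_j$, so the triangle-inequality step requires checking that $\tilde V^\prime_j$ does not inflate norms when applied to vectors of the form $\ket{junk}\ket{\psi}\ket{\cdot}$ which are not a priori in that image. This is resolved by using the $k = 0$ instance of \eqref{eq:adaptivemeasurementcondition}, namely $M^\prime_{j,0} = I$, to certify that $\ket{junk}\ket{\psi}$ is approximately in the image of $\Phi$, absorbing any defect into the existing error budget. Everything else is routine bookkeeping of how the constants combine to give the stated $(2nm+1)\delta$.
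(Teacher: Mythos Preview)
Your proposal is correct and follows essentially the same inductive route as the paper. The only substantive difference is in how the second piece is handled. The paper first proves
\[
\norm{\Phi_j(M'_{j,k})\ket{junk}\ket{\psi} - \ket{junk}M_{j,k}\ket{\psi}} \leq 2\delta
\]
directly on the bare state (by multiplying the $k=0$ instance of \eqref{eq:adaptivemeasurementcondition} by $\Phi_j(M'_{j,k})$ and combining with the $k$-th instance via the triangle inequality), and only \emph{afterwards} applies $W_{j-1}$ and $\proj{k}{k}$, noting that these commute with the operators on subsystem $j$ and are contractions. This sidesteps any reduced-state/purification reasoning. Your route reaches the same per-$k$ bound from the other direction: since $W_{j-1}$ acts trivially on the $j$-th junk register and the $j$-th ideal qubit, and $\proj{k}{k}$ commutes with $(\Phi_j(M'_{j,k}) - M_{j,k})$ and is a contraction, your purification step is valid.

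Your stated worry about $\tilde V'_j$ inflating norms off the image of $\Phi_j$ is actually a non-issue: since $\Phi_j(M'_{j,k}) = \Phi_j M'_{j,k}\Phi_j^\dagger$ with $\Phi_j$ an isometry and $M'_{j,k}$ a $\pm 1$ observable, each $\Phi_j(M'_{j,k})$ has operator norm at most one on the \emph{whole} output space, and the orthogonal control projections then give $\norm{\tilde V'_j} \leq 1$ globally. So the first piece is bounded by the inductive hypothesis without any further adjustment, and the $k=0$ certification you mention is only needed for the second piece.
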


The intuition is that each $V^\prime_j$ is the sum of operators $\proj{k}{k} \otimes M^\prime_{j,k}$, each of which is close to the corresponding ideal operator.  We can then use the triangle inequality to say that $V^\prime_j$ as a whole is close to its ideal counterpart.  Inducting over the depth of the circuit gives the desired result.
\begin{proof}
The proof proceeds by induction.  For the case $n=0$ we have not yet applied any untrusted gates and the conclusion is true by taking inequality~\eqref{eq:adaptivemeasurementcondition} with $k=0$ and multiplying by the trusted gate $U_0$.

Now let us suppose that~\eqref{eq:adaptivemeasurementbound} holds for $n-1$.  We start by using the bound~\eqref{eq:adaptivemeasurementcondition} with $(j,k) = (1,0)$ to get
\begin{equation}
\norm{\Phi\left(\ket{\psi^\prime} \right)
-
\ket{junk}\ket{\psi}
} \leq \delta.	
\end{equation}
For each $k \neq 0$ we multiply by the bound on both sides by $\Phi_n (M^\prime_{n,k})$ to obtain inequalities
\begin{equation}
	\norm{
		\Phi_n (M^\prime_{n,k}) \ket{junk}\ket{\psi}
	-
		\Phi_n (M^\prime_{n,k}) \Phi(\ket{\psi^\prime})
	} \leq 
	\delta.
\end{equation}
By Lemma~\ref{lemma:isometryproperites} $\Phi_n(M^\prime_{n,k})\Phi(\ket{\psi^\prime}) = \Phi(M^\prime_{n,k} \ket{\psi^\prime})$, so the state on the right above is close to $\ket{junk} M_{n,k} \ket{\psi}$ by~\eqref{eq:adaptivemeasurementcondition} with $(j,k) = (n,k)$.  Using the triangle inequality we find
\begin{equation}
	\norm{
		\Phi_n(M^\prime_{n,k}) \ket{junk}\ket{\psi}
	-
		\ket{junk} M_{n,k} \ket{\psi}
	} \leq 
	2 \delta.
\end{equation}
We introduce the register $\ket{\phi}$ and apply the ideal unitary $W_{n-1}$ to both sides in the above estimation without increasing the distance. On the left, since $\Phi_n$ and $M^\prime_{n,k}$ only operate on the $n$th subsystem, $\Phi_n(M^\prime_{n,k})$ operates only on the $n$th subsystem of $\ket{junk}\ket{\psi}$ (i.e.\ on the $n$th subsystem of $\ket{junk}$ together with the $n$th subsystem of $\ket{\psi}$).  Then since $W_{n-1}$ operates only on the trusted system and the first $n-1$ subsystems of $\ket{\psi}$, it commutes with $\Phi_n (M^\prime_{n,k})$ and $M_{n,k}$ so
\begin{equation}
	\norm{
		\Phi_n(M^\prime_{n,k}) \ket{junk} W_{n-1} \ket{\psi} \ket{\phi}
	-
		\ket{junk} M_{n,k} W_{n-1} \ket{\psi} \ket{\phi}
	} \leq
	2 \delta
\end{equation}
Now we apply the projection $\proj{k}{k}$ (used in the expression for $V^\prime_n$) to both sides, again without increasing the distance.  Hence
\begin{equation}
	\norm{
		\proj{k}{k} \otimes \Phi_n(M^\prime_{n,k}) W_{n-1} \ket{junk}\ket{\psi}
	-
		\ket{junk} \proj{k}{k} \otimes M_{n,k} W_{n-1} \ket{\psi}
	} \leq 
	2 \delta.
\end{equation}
Summing over all $k$ using triangle inequality, we apply the definitions of $V_j$ and $V^\prime_j$ to arrive at
\begin{equation}
	\norm{
		\Phi(V^\prime_n) \ket{junk} W_{n-1} \ket{\psi} \ket{\phi}
	-
		\ket{junk} V_n W_{n-1} \ket{\psi} \ket{\phi}
	} \leq
	2m \delta.
\end{equation}
Note that it is $2m \delta$ and not $2(m+1) \delta$ since the case $k=0$ has no error by assumption.  The state on the right above is almost what we want.  Now we invoke the induction hypothesis~\eqref{eq:adaptivemeasurementbound} with $n-1$ and multiply through by $\Phi(V^\prime_n)$ to get
\begin{equation}
	\norm{
		\Phi(V^\prime_n W^\prime_{n-1} \ket{\psi^\prime} \ket{\phi})
		-
		\Phi(V^\prime_n) \ket{junk} W_{n-1} \ket{\psi} \ket{\phi}
	}
	\leq 2m(n-1) \delta
\end{equation}
and applying the triangle inequality to the above two estimates we get
\begin{equation}
	\norm{
		\Phi(V^\prime_n W^\prime_{n-1} \ket{\psi^\prime} \ket{\phi})
	-
		\ket{junk} V_n W_{n-1} \ket{\psi} \ket{\phi}
	} \leq
	2mn \delta.
\end{equation}
Multiplying by the trusted gate $U_n$ (which commutes with $\Phi$) finishes the proof.
\end{proof}

In order to use this lemma we do not need the full generality of Theorem~\ref{theorem:graphstatetest}.  We only need it to apply for individual measurements rather than the full set $X^{p}Z^{q}$.

For our purposes we do not need the full strength of the lemma.  We need only know that adaptive measurements give correct outcomes. 

\begin{corollary}
\label{cor:adaptivemeasurement}
Let $\ket{\psi}$, $\ket{\psi^\prime}$, $\ket{junk}$, and $\Phi = \Phi_1 \otimes \dots \otimes \Phi_n$ be given along with $M_{j,k}^\prime$ and $M_{j,k}$ ($k=0 \dots m$ and $M_{j,0}^\prime=I$) such that
\begin{equation}
\norm{\Phi\left( M_{j,k}^\prime \ket{\psi^\prime} \right)
-
\ket{junk} M_{j,k} \ket{\psi}
} \leq \delta.
\label{eq:measurementcondition}
\end{equation}
Then for any adaptive measurement made using the $M^\prime$s, the probability of a particular outcome differs from the ideal case by at most $2(2nm+1)\delta$.
\end{corollary}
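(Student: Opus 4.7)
The plan is to reduce the corollary directly to Lemma~\ref{lemma:adaptivemeasurements}. The key observation is that any adaptive measurement protocol using the $M^\prime_{j,k}$ fits the circuit template of that lemma: the verifier's classical processing (choosing the next measurement basis as a function of prior outcomes, recording outcomes) can be modelled by trusted unitaries $U_j$ acting on the register $\ket{\phi}$, each untrusted query is an application of $V^\prime_j = \sum_k \proj{k}{k} \otimes M^\prime_{j,k}$, and the final classical outcome is read off by projecting the trusted register onto some projector $P_o$. Since the lemma's quantum-control model subsumes classical adaptive measurements as the special case where the control register stays in a computational-basis state, this reduction is immediate.

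First I would write the probability of outcome $o$ in the physical experiment as $\norm{P_o W^\prime_n \ket{\psi^\prime}\ket{\phi}}^2$ and the ideal probability as $\norm{P_o W_n \ket{\psi}\ket{\phi}}^2$. Because $\Phi$ is a local isometry acting only on the provers' subsystems, it preserves norms and commutes with $P_o$ (which acts only on the trusted register); analogously $P_o$ acts trivially on $\ket{junk}$. Hence the physical probability equals $\norm{P_o \Phi(W^\prime_n \ket{\psi^\prime}\ket{\phi})}^2$, and the ideal probability equals $\norm{P_o \ket{junk} W_n \ket{\psi}\ket{\phi}}^2$.

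Next, invoking Lemma~\ref{lemma:adaptivemeasurements} directly gives
\begin{equation}
\norm{\Phi\bigl(W^\prime_n \ket{\psi^\prime}\ket{\phi}\bigr) - \ket{junk} W_n \ket{\psi}\ket{\phi}} \leq (2nm+1)\delta.
\end{equation}
Applying the projector $P_o$ to both sides does not increase the norm, and by the reverse triangle inequality we obtain
\begin{equation}
\Bigl| \norm{P_o \Phi(W^\prime_n \ket{\psi^\prime}\ket{\phi})} - \norm{P_o \ket{junk} W_n \ket{\psi}\ket{\phi}} \Bigr| \leq (2nm+1)\delta.
\end{equation}

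Finally, to convert from amplitude distance to probability distance, I would use the elementary inequality $|a^2 - b^2| = |a-b|\,|a+b| \leq 2|a-b|$ valid for $a,b \in [0,1]$ (which both norms satisfy, since they are norms of projections of normalized states). This produces the final factor of $2$ and yields the claimed bound $2(2nm+1)\delta$. There is no real obstacle: the only item requiring minor care is verifying that the projective readout $P_o$ commutes with both the isometry $\Phi$ and the register $\ket{junk}$, both of which follow from the fact that $P_o$ acts solely on the verifier's trusted register.
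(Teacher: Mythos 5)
Your proposal is correct and follows essentially the same route as the paper: both reduce the statement to Lemma~\ref{lemma:adaptivemeasurements} via the circuit $W_n$ and then convert the $(2nm+1)\delta$ amplitude bound into a probability bound with an extra factor of $2$. The only (cosmetic) difference is in that last conversion: you write the probability as $\norm{\Pi_o(\cdot)}^2$ and use $|a^2-b^2|\le 2|a-b|$ for $a,b\in[0,1]$, whereas the paper bounds $\left|\bra{u}\Pi_o\ket{u}-\bra{v}\Pi_o\ket{v}\right|$ by applying the norm estimate twice (once from each side) and adding with the triangle inequality --- both give the identical factor of $2$.
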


\begin{proof}
We can represent an adaptive measurement as a circuit $W_n$ as in Lemma~\ref{lemma:adaptivemeasurements}.  Hence
\begin{equation}
\label{eq:adaptivemeasurementbound2}
	\norm{
		\Phi(W^\prime_n \ket{\psi^\prime} \ket{\phi}) 
	-
		\ket{junk} W_n\ket{\psi} \ket{\phi}
	} \leq
	(2nm+1)\delta. 	
\end{equation}
To obtain the classical outcome we perform some measurement on one of the trusted subsystems.  Without loss of generality this can be a projective measurement, so let $\Pi_x$ be the projector for outcome $x$, which acts non-trivially only on the trusted subsystem.  The probability of outcome $x$ is then
\begin{equation}
	\bra{\psi^\prime} \bra{\phi} W^{\dagger\prime}_n 
	\Pi_x
	W^\prime_n \ket{\psi^\prime} \ket{\phi} .
\end{equation}

Now to estimate this probability we use~\eqref{eq:adaptivemeasurementbound2} above in two different ways.   First, multiplying on the left by $\Phi(\bra{\psi^\prime} \bra{\phi} W^{\dagger\prime}_n ) \Pi_{x}$ we get
\begin{multline}
	\left|
		\Phi(\bra{\psi^\prime} \bra{\phi} W^{\dagger\prime}_n ) 
		\Pi_x
		\Phi(W^\prime_n \ket{\psi^\prime} \ket{\phi}) 
	-
		\Phi(\bra{\psi^\prime} \bra{\phi} W^{\dagger\prime}_n ) 
		\Pi_x
		\ket{junk} W_n\ket{\psi} \ket{\phi}
	\right| \\
	\leq
	(2nm+1)\delta.
\end{multline}
Second, multiplying~\eqref{eq:adaptivemeasurementbound2} on the left by $\bra{junk} \bra{\psi} \bra{\phi} W^\dagger_n \Pi_{x}$ and then taking the adjoint of the resulting expression we obtain
\begin{equation}
	\left|
		\Phi(\bra{\psi^\prime} \bra{\phi} W^{\dagger\prime}_n ) 
		\Pi_x
		\ket{junk} W_n\ket{\psi} \ket{\phi}
	-
		\bra{\psi} \bra{\phi} W^{\dagger}_n 
		\Pi_x
		W_n \ket{\psi} \ket{\phi} 
	\right|
	\leq
	(2nm+1)\delta.
\end{equation}
Adding these together using the triangle inequality and invoking the fact that $\Phi$ preserves inner products we find
\begin{equation}
	\left|
		\bra{\psi^\prime} \bra{\phi} W^{\dagger\prime}_n 
		\Pi_x
		W^\prime_n \ket{\psi^\prime} \ket{\phi} 
	-
		\bra{\psi} \bra{\phi} W^{\dagger}_n
		\Pi_x
		W_n \ket{\psi} \ket{\phi} 	
	\right|
	\leq
	2(2mn+1)\delta.
\end{equation}
In other words, the probability of finding outcome $x$ differs from the ideal case by at most $2(2mn+1)\delta$.
\end{proof}

\subsection{A one-shot test}
\label{sec:oneshottest}
As stated, the self-testing results are not terribly useful to us.  They require knowledge of the expected value of various operators in order to draw any conclusions.  The obvious solution is to take some samples and estimate, but this would require either some independence assumptions or additional work with, for example, martingales as is done in \cite{Pironio:2010:Random-numbers-}.  Instead we will work with the contrapositive of the self-testing results:  if the state and/or some measurements are far away from the ideal, then some measurable expected value will also be far away from the ideal.  Although this is logically equivalent, instead of requiring lots of information about the various measurements, we instead are told that we just have to look for one measurement that is misbehaving.

As well, we are going to arrange our measurements in a particular way as a test for honesty.  For example, the stabilizer measurements will always return 1 for honest provers, so if we perform this measurement and we get a 1 the provers pass the test.  If result is -1 then they fail the test.  As the expected value gets close to 1, the provers will pass with probability close to 1.  If the expected value is far away from 1, the provers will fail the test with some probability.  

Now with the $R(\theta)$ measurements we do not have the same situation, but we do have something just as useful.  We can build a compound test so that the ideal honest provers pass with some probability, and no other provers can pass with a higher probability.  This is analogous to the CHSH test:  the ideal quantum strategy passes with probability $\approx 0.85$, and no other strategy achieves any higher success rate.  As well, cheating provers will pass the test with a probability that is bounded away from the quantum limit, and so we obtain a gap between the ideal and cheating strategies.  The honest provers will fail the test some of the time, but this is no problem:  we will later do some repetition so that the ideal provers will pass with an overall probability that can be made arbitrarily close to 1.

Now we give the construction for our one-shot test.  First, let $T$ be a set of triangles that covers $V$, i.e.\ each vertex in $V$ appears in at least one triangle in $T$.  The triangles will be specified by characteristic vectors $\tau$.  Let $N_{G} = 3|V| + |T|$.  Note that $N_{G} \leq 4n$ since we need no more than $n$ triangles to cover $V$.

For a graph state computation we need only two different measurement angles per vertex, $\pm \theta_{v}$.  As well, the measurement angle $\theta_{v} + \pi$ can be simulated by measuring with angles $\theta_{v}$ and flipping the outcome.  Hence there is no loss of generality by assuming that $0 \leq \theta_{v} \leq \pi$ so that $\cos\theta_{v} \geq 0$.

The test procedure is as follows:

\begin{procedure}[One-shot test for graph states and measurements]\ \\
\label{procedure:oneshottest}
\begin{enumerate}
	\item Randomly select either 
		``VERTEX'' with probability $\frac{|V|}{N_{G}}$, 
		``TRIANGLE'' with probability $\frac{|T|}{N_{G}}$, or
		``RTHETA'' with probability $\frac{2|V|}{N_{G}}$
	\item if ``VERTEX''
	\begin{enumerate}
		\item Select $v \in_{R} V$
		\item Query the provers with bases according to $S_{V} = X_v Z^{\mathbf{A} 1_v}$
		\item Accept if the product of the replies is 1, otherwise reject
	\end{enumerate} 
	\item if ``TRIANGLE''
	\begin{enumerate}
		\item Select $\tau \in_{R} T$
		\item Query the provers with bases according to $X^{\tau}Z^{A\tau}$
		\item Accept if the product of the replies is -1, otherwise reject
	\end{enumerate}
	\item if ``RTHETA''
	\begin{enumerate}
		\item Choose $t \in_{R} \{1,-1\}$ and $v \in_{R}V$ and 
			let $u$ be a vertex adjacent to $v$.  
			($u$ can be fixed ahead of time for each $v$.)
		\item Choose either 
			$X$ with probability $
				\frac{\cos\theta_{v}}
				{\cos\theta_{v} + |\sin\theta_{v}|}
			$ or
			$Z$ with probability $
				\frac{|\sin\theta_{v}|}
				{\cos\theta_{v} + |\sin\theta_{v}|}
			$
		\item if $X$
		\begin{enumerate}
			\item Query the provers with 
			$
				R_{v}(t \theta_{v})_{v}
				Z^{\prime \mathbf{A} 1_{v}} 
			$
			\item Accept if the product of the replies is 1, otherwise reject
		\end{enumerate}
		\item if $Z$
		\begin{enumerate}
			\item Query the provers with
			$
				t
				R_{v}(t \theta_{v})
				X^{\prime}_{u}
				Z^{\prime \mathbf{A} 1_{u} \oplus 1_{v}}
			$
			\item Accept if the product of the replies is 1, otherwise reject.
		\end{enumerate}
	\end{enumerate}
\end{enumerate}
\end{procedure}
To clarify, if the basis for a prover is $I$ then we simply ignore that prover, and its ``reply'' is taken to be 1.

The test is naturally grouped in to $N_{G}$ different subtests.  From the graph state test we have  $|V|$ subtests testing the ``physical stabilizers'', and  $|T|$ subtests testing the triangles.  Additionally, there are $2|V|$ ``RTHETA'' subtests, one for each choice of $v$ and $t$.  Each of these consists of two queries chosen according to some random coin.

\begin{lemma}
\label{lemma:oneshottest}
Let $n$ non-communicating quantum provers be given that each take one of four measurement bases, labelled $X$, $Z$ and $R_{v}(\pm \theta_{v})$ as inputs and measure joint state $\ket{\psi^{\prime}}$ according to operators in $\{X^{\prime}_{v}, Z^{\prime}_{v}, R^{\prime}_{v}(\theta_{v})\}$.  Then then procedure~\ref{procedure:oneshottest} accepts with probability at most
\begin{equation}
	c_{test}
	=
	\frac{
		2|V| + 
		|T| +
		\sum_{v}  
            \frac{1}{
				\cos\theta_{v} +
				|\sin\theta_{v}|
			}
        }{
		N_{G}
	}
	\quad \quad
	\text{(honest case)}
\end{equation}
and if there exist $v$ and $M \in \{X_{v}, Z_{v}, R_{v}\pm \theta_{v})_{v}\}$ such
\begin{equation}
	\label{eq:oneshottestcondition}
	\norm{
		\Phi\left(
			M^{\prime}
			\ket{\psi^{\prime}}
		\right) 
	- 
		\ket{junk} 
		M 
		\ket{G}
	} > 
	\delta
\end{equation}
then procedure~\ref{procedure:oneshottest} accepts with probability at most
\begin{equation}
	c_{test}
	-
	\frac{1}{2N_{G}}
	\left(
		\frac{
			\delta^{2}
		}{
			22 +
			25 \sqrt{n}
		}
	\right)^{4}
   	\quad \quad
	\text{(dishonest case)}
\end{equation}
\end{lemma}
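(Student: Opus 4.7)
The plan is to address the honest and dishonest cases separately. For the honest case, I would compute the acceptance probability subtest-by-subtest using the ideal graph state $\ket{G}$ and ideal observables. The VERTEX subtests pass with probability one by the stabilizer relation $S_v\ket{G} = \ket{G}$. For the TRIANGLE subtests, I would verify that $\bra{G} X^{\tau} Z^{\mathbf{A}\tau}\ket{G} = -1$ for any triangle $\tau = \{u, v, w\}$, by writing $S_uS_vS_w$ as $X^{\tau} Z^{\mathbf{A}\tau}$ times a sign: commuting each of the three $X$'s past the $Z$'s on the other two (adjacent) triangle vertices produces exactly three sign flips, giving $-1$. For RTHETA, using the stabilizer identities $Z^{\mathbf{A} 1_v}\ket{G} = X_v\ket{G}$ and $X_u Z^{\mathbf{A} 1_u \oplus 1_v}\ket{G} = Z_v\ket{G}$, the $X$-branch has expected value $\cos\theta_v$ and the $Z$-branch has expected value $|\sin\theta_v|$, so the weighted average over the branch coin gives contribution $\tfrac{1}{2} + \tfrac{1}{2(\cos\theta_v + |\sin\theta_v|)}$ per vertex. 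Multiplying by the selection probabilities and summing recovers the claimed $c_{test}$.

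For the dishonest case, I would proceed by contrapositive: assuming the rejection probability is at most some threshold $g$, show that $\norm{\Phi(M^\prime\ket{\psi^\prime}) - \ket{junk} M\ket{G}} \leq \delta$ for every $v$ and every $M \in \{X_v, Z_v, R_v(\pm\theta_v)\}$. Since each individual subtest is selected with probability $\tfrac{1}{N_G}$, a rejection probability of at most $g$ forces the failure probability of any single subtest to be at most $N_G g$. Translating failure probabilities into expected-value shortfalls supplies the $\epsilon$ parameters required by conditions \eqref{eq:stabilizercondition}, \eqref{eq:trianglecondition}, and \eqref{eq:xyplanecondition}. Applying Theorem~\ref{theorem:graphstatetest} then yields a bound of the form $(2\sqrt{p\cdot p} + 2\sqrt{2n} + \sqrt{|E|+n})(2\epsilon)^{1/4}$ on the graph-state deviation, and Lemma~\ref{lemma:xzplanemeasurements} converts RTHETA shortfalls into $R$-measurement deviations of the form $\sqrt{2(\epsilon_{RTHETA} + 2\delta_{XZ})}$.

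The technical heart is to combine these estimates so every deviation is at most $\delta$. For $M \in \{X_v, Z_v\}$ only Theorem~\ref{theorem:graphstatetest} is needed (with $p\cdot p \leq 1$), which gives a constraint of the form $\epsilon \lesssim \delta^4$. For $M = R_v(\pm\theta_v)$, one must secure both $\epsilon_{RTHETA} \leq \delta^2/4$ and $\delta_{XZ} \leq \delta^2/8$; the second constraint is binding because $\delta_{XZ}$ is only $(2\epsilon)^{1/4}$-small, forcing $\epsilon \lesssim \delta^8$. The constant $22 + 25\sqrt{n}$ appearing in the denominator arises from uniformly bounding $2\sqrt{p\cdot p}$ for the specific operators $Z^{\mathbf{A} 1_v}$ and $X_u Z^{\mathbf{A} 1_u \oplus 1_v}$ in Lemma~\ref{lemma:xzplanemeasurements} (where $p$ is a neighbourhood indicator of bounded size thanks to the triangular-lattice degree bound) together with $2\sqrt{2n} + \sqrt{|E|+n}$ from Theorem~\ref{theorem:graphstatetest} and the factors of $2$ and $8$ from the splitting described above.

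The hard part will be tracking the constants faithfully to land on the specific coefficient $22 + 25\sqrt{n}$; in particular, the RTHETA contribution $\tfrac{1}{2(\cos\theta_v + |\sin\theta_v|)}$ in the failure-probability translation introduces additional $\sqrt{2}$-type factors which must be absorbed, and one has to choose the split between $\epsilon_{RTHETA}$ and $\delta_{XZ}$ in the Lemma~\ref{lemma:xzplanemeasurements} bound so that the worst-case constraint matches the claimed gap rather than a slightly weaker one.
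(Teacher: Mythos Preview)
Your contrapositive strategy for the dishonest case and your identification of how Theorem~\ref{theorem:graphstatetest} and Lemma~\ref{lemma:xzplanemeasurements} chain together match the paper's approach. However, there is a real gap in your handling of the ``honest case'' that propagates into and breaks the dishonest case as well.

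The lemma claims that $c_{test}$ is an upper bound on the acceptance probability for \emph{all} quantum provers, not just that honest provers achieve it. For the VERTEX and TRIANGLE subtests this is automatic (acceptance cannot exceed~$1$), but for RTHETA you must show that no bipartite quantum strategy achieves an expected value exceeding $\frac{1}{\cos\theta_v + |\sin\theta_v|}$ on the combined $\pm1$ random variable. You only compute the honest value; the paper actually proves optimality via a Tsirelson-style vector argument: one rewrites the correlation as an inner product of unit vectors, optimises the $R'_v(\pm\theta_v)$-side vectors to align with $\cos\theta_v\ket{\phi_1}\pm\sin\theta_v\ket{\phi_2}$, and reduces to showing
\[
\sqrt{1+\sin 2\theta_v\,\mathrm{Re}\braket{\phi_1}{\phi_2}}+\sqrt{1-\sin 2\theta_v\,\mathrm{Re}\braket{\phi_1}{\phi_2}}\le 2,
\]
with equality at $\braket{\phi_1}{\phi_2}=0$.

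This omission also invalidates your deficit argument. You write that ``a rejection probability of at most $g$ forces the failure probability of any single subtest to be at most $N_G g$.'' That inequality is correct, but $g$ here is the \emph{total} rejection probability, which is at least $1-c_{test}$ and hence bounded away from zero whenever any $\theta_v\notin\{0,\tfrac{\pi}{2}\}$. Thus $N_G g$ is $\Theta(n)$ and gives no useful lower bound on the stabilizer expectations. What the paper actually uses is: if the acceptance probability is $c_{test}-\frac{\epsilon}{2N_G}$, then the deficit of each subtest from its \emph{own maximum} is at most $\epsilon/2$. This step requires knowing that the RTHETA maximum is $c^v_{test}$, i.e., precisely the Tsirelson bound you omitted. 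Without it, you cannot rule out that RTHETA subtests are above their honest value, compensating for large deficits on VERTEX/TRIANGLE subtests, and the contrapositive collapses.
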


\begin{proof}
{\bf Honest case.}
First let us derive the maximum probability of passing the test.  This is attained in the honest case.  The ``VERTEX'' and ``TRIANGLE'' subtests can all be passed simultaneously with probability 1 in the honest case since the observables are all in the stabilizer group of the graph state.

Let us now consider the ``RTHETA'' subtests.  First, we fix a vertex $v$.  The queries to the in the provers in this subtest can be seen as one large random variable taking values $\pm 1$ and having the expected value
\begin{multline}
 \frac{1}{2(\cos\theta_{v} + |\sin\theta_{v}|)} \\
 	\bra{\psi^{\prime}}
		R^{\prime}_{v}(\theta_{v})
		\left(
			\cos\theta_{v} 
				Z^{\prime \mathbf{A} 1_{v}} +
			\sin\theta_{v}
				X^{\prime}_{u}
				Z^{\prime \mathbf{A} 1_{u} \oplus 1_{v}}
			\right)
	\ket{\psi^{\prime}}
	+ \\
 	\bra{\psi^{\prime}}
		R^{\prime}_{v}(-\theta_{v})
		\left(
			\cos\theta_{v} 
				Z^{\prime \mathbf{A} 1_{v}} -
			\sin\theta_{v}
				X^{\prime}_{u}
				Z^{\prime \mathbf{A} 1_{u} \oplus 1_{v}}
		\right)
	\ket{\psi^{\prime}}
\end{multline}
Note the similarity to the CHSH correlation, which is obtained for $\theta_{v} = \frac{\pi}{4}$.

The honest provers (with 
$
    R^{\prime}_{v} (\theta_v)
    =
    R_{v} (\theta_v)
$)
will attain an expected value of $ \frac{1}{2(\cos\theta_{v} + |\sin\theta_{v}|)}$.  To see this, we notice that $Z^{\mathbf{A} 1_{v}} \ket{\psi} = X_v \ket{\psi}$ and $X_{u}Z^{\mathbf{A} 1_{u} \oplus 1_{v}} = Z_v \ket{\psi}$, which we obtain from the stabilizers $S_v$ and $S_u$, respectively.  Applying the definition of $R(\theta)$, the expected value becomes 
\begin{equation}
	\frac{1}{2(\cos\theta_{v} + |\sin\theta_{v}|)}
 	\bra{\psi}
	\left(
		R_v(\theta_v)^2
		+
		R_v(-\theta_v)^2
	\right)
	\ket{\psi}
	= 	\frac{1}{\cos\theta_{v} + |\sin\theta_{v}|}
\end{equation}
since $R^2_{v} (\theta_v) = I$.

Now we show that this is in fact the maximal quantum expected value.  Using a standard technique introduced by Cirel'son~\cite{Cirelson:1980:Quantum-general}, the maximum value is the same as
\begin{multline}
 	\frac{1}{2(\cos\theta_{v} + |\sin\theta_{v}|)} 
	\max_{
		\ket{\psi_{1}}, 
		\ket{\psi_{2}}, 
		\ket{\phi_{1}}, 
		\ket{\phi_{2}}
	}
\\
	\bra{\psi_{1}}
	\left(
		\cos\theta_{v} \ket{\phi_{1}} +
		\sin\theta_{v}\ket{\phi_{2}} 
	\right)+ 
\\
	\bra{\psi_{2}}
	\left(
		\cos\theta_{v} \ket{\phi_{1}} -
		\sin\theta_{v}\ket{\phi_{2}} 
	\right)
\end{multline}
where the maximization is taken over normalized states, all of dimension four.  Clearly the maximum is found when $\ket{\psi_{1}}$ is taken to be in the direction of $\cos\theta_{v}\ket{\phi_{1}} + \sin\theta_{v}\ket{\phi_{2}}$ and $\ket{\psi_{2}}$ is in the direction of $\cos\theta_{v}\ket{\phi_{1}} - \sin\theta_{v}\ket{\phi_{2}}$.  In this case the value becomes
\begin{multline}
 	\frac{1}{2(\cos\theta_{v} + |\sin\theta_{v}|)} 
	\max_{
		\ket{\phi_{1}}, 
		\ket{\phi_{2}} 
	}
\\
	\norm{
		\cos\theta_{v}
			\ket{\phi_{1}} + 
		\sin\theta_{v}
			\ket{\phi_{2}}
	} +
\\
	\norm{
		\cos\theta_{v}
			\ket{\phi_{1}} -
		\sin\theta_{v}
			\ket{\phi_{2}}
	}
\end{multline}
Expanding using the definition of $\norm{\cdot}$ we obtain

\begin{multline}
 	\frac{1}{2(\cos\theta_{v} + |\sin\theta_{v}|)} 
	\max_{
		\ket{\phi_{1}}, 
		\ket{\phi_{2}}
	}
\\
	\sqrt{
		1 + 
		2
		\cos\theta_{v}
		\sin\theta_{v}
		\text{Re}
		\braket{\phi_{1}}{\phi_{2}}
	} +
\\
	\sqrt{
		1 - 
		2
		\cos\theta_{v}
		\sin\theta_{v}
		\text{Re}
		\braket{\phi_{1}}{\phi_{2}}
	}
\end{multline}
We next use the identity $\cos\theta\sin\theta = \frac{1}{2} \sin 2\theta$ to get
\begin{multline}
 	\frac{1}{2(\cos\theta_{v} + |\sin\theta_{v}|)} 
	\max_{
		\ket{\phi_{1}}, 
		\ket{\phi_{2}}
	}
\\
	\sqrt{
		1 + 
		\sin 2 \theta_{v}
		\text{Re}
		\braket{\phi_{1}}{\phi_{2}}
	} +
\\
	\sqrt{
		1 - 
		\sin2\theta_{v}
		\text{Re}
		\braket{\phi_{1}}{\phi_{2}}
	}.
\end{multline}
Now, $\sqrt{1+a} + \sqrt{1-a} = \sqrt{(\sqrt{1+a} + \sqrt{1-a})^{2}} = \sqrt{2 + 2 \sqrt{1 - a^{2}}}$ so the above becomes
\begin{equation}
 	\frac{1}{2(\cos\theta_{v} + |\sin\theta_{v}|)} 
	\max_{
		\ket{\phi_{1}}, 
		\ket{\phi_{2}}
	}
	\sqrt{
		2 + 
		2\sqrt{
			1 -
			\left(
				\sin 2 \theta_{v} \,
				\text{Re}
				\braket{\phi_{1}}{\phi_{2}}
			\right)^{2}
		}
	}
\end{equation}
which attains the value of $\frac{1}{\cos\theta_{v} + |\sin\theta_{v}|}$ when $\braket{\phi_{1}}{\phi_{2}}= 0$.

We now have the expected value of the honest case and a matching upper bound.  The expected value of any $\pm 1$ valued random variable $X$ is related to the probability of obtaining 1 (i.e. the ``success'' probability) by
\begin{equation}
	\text{Prob}(X= 1) = 
	\frac{
		\left\langle 
			X 
		\right\rangle
	}{
		2
	} 
	+ 
	\frac{1}{2}.
\end{equation}
So the probability of success for the ``RTHETA'' portion of the test for a specific $v$ is bounded above by
\begin{equation}
	c_{test}^v := 
	\frac{1}	
	{	
		2
		(
			\cos\theta_{v} + 
			|\sin\theta_{v}|
		)
	}
	+
	\frac{1}{2}
\end{equation}

Combining this with the maximum probability of success for the ``VERTEX'' and ``TRIANGLE'' subtests, the overall maximum probability of success for any set of quantum provers, attained for honest provers, is
\begin{equation}
	c_{test} := 
	\frac{
		|V| + 
		|T| + 
		2\sum_{v}  
			c_{test}^v
	}{
		N_{G}
	} 
    = 	
    \frac{
		2|V| + 
		|T| +
		\sum_{v}  
            \frac{1}{
				\cos\theta_{v} +
				|\sin\theta_{v}|
			}
        }{
		N_{G}
	}.
\end{equation}
The factor 2 in front of the summation represents the fact that for each $v$ the ``RTHETA'' subtest occurs with probability $2/N_{G}$.

{\bf Dishonest case} 
Now that we have an upper bound, we translate the probability of success into an error bound on the expectation value of each subtest.

From now on fix a set of provers, which fixes the observables and state.  Suppose that the provers pass the test with probability $c_{test} - \frac{\epsilon}{2N_{G}}$.  Then each ``VERTEX'' or ``TRIANGLE'' subtest passes with probability at least $1-\epsilon/2$, which is obtained when all the error happens on a single subtest.  This means that the expected value for the corresponding random variable is $1-\epsilon$ and the conditions for Theorem~\ref{theorem:graphstatetest},~\eqref{eq:stabilizercondition} and~\eqref{eq:trianglecondition} are satisfied.  Hence for $M \in \{X_v, Z_v\}$ the left side of~\eqref{eq:oneshottestcondition} is bounded above by
\begin{eqnarray}
\delta_1 & := &
	\left(
		2 \sqrt{p \cdot p}
		+
		2 \sqrt{2 n}
        +
		\sqrt{|E| + n}
	\right)
	(2 \epsilon)^\frac{1}{4}
\\
	& \leq &
	2^{\frac{5}{4}}
    \left(
		1
		+
		(1 + \sqrt{2})\sqrt{n}
	\right)
	\epsilon^\frac{1}{4}
\\ & \leq &
	2.5 \left(
		1 + 
		2.5\sqrt{n}
	\right)
	\epsilon^\frac{1}{4}.
\end{eqnarray}
We have used the estimations $p \cdot p \leq 1$, since this is all we need to apply corollary~\ref{cor:adaptivemeasurement}, and $|E| \leq 3n$, since we are using a triangular cluster state which has a maximum degree of 6.  

For the ``RTHETA'' subtests, fix a $v$.  As above, the expected value for the corresponding $\pm 1$ random variable is at least $c_{test}^{v} - \epsilon$.  Hence the conditions of Lemma~\ref{lemma:xzplanemeasurements} are satisfied for each $v$ and $\pm \theta$.  Then for $M \in \{R_{v}\pm\theta_v)_v\}$ the left hand side of inequality~\eqref{eq:oneshottestcondition} is bounded above by 
\begin{eqnarray}
\delta_2 & := & 	
	\sqrt{
		2\epsilon 
		+
		10 \left(
			1 + 
			2.5\sqrt{n}
		\right)
		\epsilon^\frac{1}{4}
	}
\\	
	& \leq & 
	\sqrt{
		22 +
		25 \sqrt{n}
		}
	\epsilon^\frac{1}{8}	
\end{eqnarray}
where we have used $\epsilon \leq \epsilon^\frac{1}{4}$ for $0 \leq \epsilon \leq 1$.  When $\epsilon \leq 1$ the error for the $R^{\prime}_{v}(\pm \theta_v)$ will be larger then for $X$ or $Z$, so we will use 
\begin{equation}
\delta  = 
	\sqrt{
		22 +
		25 \sqrt{n}
		}
	\epsilon^\frac{1}{8}	
\end{equation}

We have just shown that if the provers pass the test with probability at least $c_{test} - \frac{\epsilon}{2N_{G}}$ then the left side of~\eqref{eq:oneshottestcondition} is bounded by $\delta$ as above (i.e.~\eqref{eq:oneshottestcondition} is false for all $M$).  This is the contrapositive of our desired result which is that, if~\eqref{eq:oneshottestcondition} is true for some $M$, then the probability of passing is at \emph{most} $c_{test} - \frac{\epsilon}{2N_{G}}$.  So we need only solve for $\epsilon$ in terms of $\delta$. We find
\begin{equation}
	\epsilon
	\geq 
	\left(
		\frac{
			\delta^{2}
		}{
			22 +
			25 \sqrt{n}
		}
	\right)^{4}.
\end{equation}
Now the probability of passing is at most $c_{test} - \frac{\epsilon}{2N_{G}}$, which is bounded above by
\begin{equation}
	\frac{
		2|V| + 
		|T| +
		\sum_{v}  
			\frac{1}{(
				\cos\theta_{v} +
				|\sin\theta_{v}|
			)}
	}{
		N_{G}
	} 
    -
	\frac{1}{2N_{G}}
	\left(
		\frac{
			\delta^{2}
		}{
			22 +
			25 \sqrt{n}
		}
	\right)^{4}
\end{equation}
\end{proof}

This one-shot test gives us an error bound on the states and measurements.  Combining this with Lemma~\ref{lemma:adaptivemeasurements} we can relate the probability of passing the test to the error in an adaptive measurement, i.e.\ our final measurement-based quantum computation.

\begin{corollary}
\label{cor:oneshottestadaptivemeasurements}
Let a set of quantum provers be given where prover $v$ takes inputs in $\{X_v, Z_v, R_{v}\pm (\theta_v)_v\}$ and outputs $\pm 1$.  For honest provers, procedure~\ref{procedure:oneshottest} accepts with probability 
\begin{equation}
	c_{test}
	=
	\frac{
		2|V| + 
		|T| +
		\sum_{v}  
			\frac{1}{(
				\cos\theta_{v} +
				|\sin\theta_{v}|
			)}
	}{
		N_{G}
	} 
\end{equation}
For general provers, if for any adaptive measurement pattern the probability of any outcome on the provers' final outcome differs from the ideal by more then $\delta$ then procedure~\ref{procedure:oneshottest} accepts with probability no more than
\begin{equation}
	s_{test} = 
	\frac{
		2|V| + 
		|T| +
		\sum_{v}  
			\frac{1}{2(
				\cos\theta_{v} +
				|\sin\theta_{v}|
			)}
	}{
		N_{G}
	} -
		\frac{
		\delta^{8}
		}{
		10^{17.7} n^{11}
	}
\end{equation}
\end{corollary}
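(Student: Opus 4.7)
The plan is to compose Lemma~\ref{lemma:oneshottest}, which controls the single-measurement error as a function of the test acceptance probability, with Corollary~\ref{cor:adaptivemeasurement}, which lifts single-measurement errors to errors on outcomes of adaptive measurement patterns. The honest case is immediate from Lemma~\ref{lemma:oneshottest}, while the dishonest case proceeds by contraposition.

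For the honest case, I would observe that the ``VERTEX'' and ``TRIANGLE'' subtests of Procedure~\ref{procedure:oneshottest} are stabilizer measurements on $\ket{G}$ (using $S_v\ket{G}=\ket{G}$ and, for any triangle $\tau=\{u,v,w\}$, the identity $X^{\tau}Z^{\mathbf{A}\tau}=-S_uS_vS_w$, so $X^{\tau}Z^{\mathbf{A}\tau}\ket{G}=-\ket{G}$), hence succeed with probability $1$. For each ``RTHETA'' subtest the honest choice $R'_v(\pm\theta_v)=R_v(\pm\theta_v)$, combined with the identifications $X_v\ket{G}=Z^{\mathbf{A}1_v}\ket{G}$ and $Z_v\ket{G}=X_u Z^{\mathbf{A}1_u\oplus 1_v}\ket{G}$, saturates the Tsirelson-type maximum $1/(\cos\theta_v+|\sin\theta_v|)$ computed in the proof of Lemma~\ref{lemma:oneshottest}. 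Summing these acceptance probabilities weighted by Procedure~\ref{procedure:oneshottest} yields exactly $c_{test}$.

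For the dishonest case I argue by contraposition. Any adaptive measurement pattern on the provers fits the framework of Corollary~\ref{cor:adaptivemeasurement} with $m=4$, corresponding to the four queries $X_v$, $Z_v$, $R_v(+\theta_v)$, $R_v(-\theta_v)$ (together with the implicit $M'_{j,0}=I$ for unqueried provers). Suppose some outcome probability of the pattern deviates from its ideal value by more than $\delta$; then the contrapositive of Corollary~\ref{cor:adaptivemeasurement} guarantees that for every local isometry $\Phi$ and every ancilla state $\ket{junk}$, there exist $v$ and $M\in\{X_v,Z_v,R_v(\pm\theta_v)\}$ with $\norm{\Phi(M'\ket{\psi'})-\ket{junk}M\ket{G}}>\delta_M$, where $\delta_M:=\delta/(2(8n+1))$. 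Feeding this $M$ into Lemma~\ref{lemma:oneshottest} bounds the acceptance probability by $c_{test}-\frac{1}{2N_G}\bigl(\delta_M^2/(22+25\sqrt{n})\bigr)^4$. Using $N_G\leq 4n$ and substituting $\delta_M$ produces an additive gap scaling as $\delta^8/n^{11}$.

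The only real obstacle is constant-tracking: one must propagate $1/(16n+2)$ through an eighth power, combine with the $(22+25\sqrt{n})^4$ denominator and the $1/(2N_G)$ weight, and verify the total constant is at most $10^{17.7}$. The crude bounds $(16n+2)^2\leq 324\,n^2$ (for $n\geq 1$) and $22+25\sqrt{n}\leq 47\sqrt{n}$ already push the denominator below about $5\times 10^{17}\,n^{11}$, comfortably within the stated $10^{17.7}\,n^{11}$, which completes the proof.
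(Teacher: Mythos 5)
Your proposal is correct and follows essentially the same route as the paper: the honest case is read off from Lemma~\ref{lemma:oneshottest}, and the dishonest case is the contrapositive obtained by composing Corollary~\ref{cor:adaptivemeasurement} (with $m=4$, $\delta'=\delta/(2(8n+1))$) with Lemma~\ref{lemma:oneshottest} and then tracking constants via $N_G\leq 4n$. Your constant bookkeeping ($(16n+2)^8(22+25\sqrt n)^4\cdot 8n\lesssim 4.3\times 10^{17}n^{11}\leq 10^{17.7}n^{11}$) matches the paper's estimate.
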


\begin{proof}
We will again prove the contrapositive of the desired statement.  We would like to show that the outcome of any measurement pattern differs from the ideal by no more than $\delta$.  By corollary~\ref{cor:adaptivemeasurement}, if we achieve error less than $\delta^{\prime} = \frac{\delta}{2(8n + 1)}$ on equation~\eqref{eq:measurementcondition} then we achieve our goal, since $m=4$ here.  Lemma~\ref{lemma:oneshottest} says that we can in turn achieve this level of error if the provers pass with probability no more than
\begin{equation}
	s_{test} = c_{test} - \epsilon
\end{equation}
with
\begin{eqnarray}
	\epsilon 
	 & = &
	\frac{1}{2N_{G}}
	\left(
		\frac{
			\delta^{\prime 2}
		}{
			22 +
			25 \sqrt{n}
		}
	\right)^{4}
\\	
	& \geq &
	\frac{1}{4n}
	\left(
		\frac{
			\delta^{2}
		}{
			8
			(
				8n + 1
			)^{2}
			(
				22 +
				25 \sqrt{n}
			)
		}
	\right)^{4}
\\
	& \geq &
	\frac{1}{8n}
	\left(
		\frac{
			\delta^{2}
		}{
			4
			(
				9n
			)^{2}
			(
				47 \sqrt{n}
			)
		}
	\right)^{4}
\\ 
	& \geq &
		\frac{
			\delta^{8}
		}{
		10^{17.7}
		n^{11}
		}
\end{eqnarray}
using the pessimistic bounds $N_{G} \leq 4n$ and $1 \leq \sqrt{n} \leq n$.  Hence if the provers pass with probability less than
\begin{equation}
    s_{test} = 
	\frac{
		|V| + 
		|T| +
		2 
		\sum_{v}  
			\frac{1}{(
				\cos\theta_{v} +
				|\sin\theta_{v}|
			)}
	}{
		N_{G}
	} -
	\frac{
		\delta^{8}
	}{
		10^{17.7}
		n^{11}
	}
\end{equation}
then any adaptive measurement will differ by no more than $\delta$ from our goal.  Taking the contrapositive, if some adaptive measurement differs by more than $\delta$ then the provers will pass the test with probability no more than $s_{test}$.

\end{proof}

\section{Interactive proofs}
\label{sec:interactiveproof}
We are now in a position to construct an interactive proof for any language in $\cclass{BQP}$.  To this end, let $L$ be a language in $\cclass{BQP}$.  Then from Theorem~\ref{theorem:graphstatecomputation} and the definition of $\cclass{BQP}$ for any input $x$ there exists an adaptive measurement\footnote{The adaptive measurement is a member of a uniformly generated set.
} on a polynomially sized triangular graph state such that 
\begin{itemize}
	\item If $x \in L$ then the measurement outputs ``ACCEPT'' with probability $c_{\text{calc}} \geq \frac{2}{3}$.
	\item If $x \notin L$ then the measurement outputs ``ACCEPT'' with probability $s_{\text{calc}} \leq \frac{1}{3}$.
\end{itemize}
The adaptive measurement supplies the measurements required for each vertex via angles $\theta_v$.  It also supplies the functions required for the adaptation.

The interactive proof is given by the following procedure.

\begin{procedure}
\label{procedure:interactiveproof} \
\begin{enumerate}
	\item Randomly choose ``CALCULATE'' with probability $q$ or ``TEST'' with probability $1-q$
	\item If ``CALCULATE'':
	\begin{enumerate}
		\item Query provers according to the measurement-based computation
		\item Accept if the computations accepts
	\end{enumerate}
	\item if ``TEST''
	\begin{enumerate}
		\item Perform the test for honesty given in procedure~\ref{procedure:oneshottest}
		\item Accept if the test accepts
	\end{enumerate}
\end{enumerate}
\end{procedure}

Now we calculate the optimal value of $q$.

\begin{lemma}
\label{lemma:interactiveproofoneshot}  Let $L$ be a language and $x$ in input.  Suppose we are given an adaptive measurement on a triangular cluster state on $n$ vertices which implements a measurement-based computation which decides whether $x \in L$ with error at most $\frac{1}{3}$ (i.e. $c_{calc} \geq \frac{2}{3}$ and $s_{calc} \leq \frac{1}{3}$).  Let $0 < \delta < \frac{1}{6}$ be given and set
\begin{equation}
	q 
	= 
	\frac{
		c_{test} - s_{test}
	}{
		1 + c_{test} - 
		s_{calc} - 
		s_{test} - 
		\delta
	}
\end{equation}
then
\begin{itemize}
	\item If $x \in L$ then for honest provers procedure~\ref{procedure:interactiveproof} accepts with probability at least $c_{ip}$
	\item If $x \notin L$ then for any set of provers procedure~\ref{procedure:interactiveproof} accepts with probability at most $s_{ip}$
\end{itemize}
where
\begin{equation}
	c_{ip} - s_{ip} \geq 
	\frac{
		\delta^{8}
	}{
		10^{18.8} n^{11}
	}
\end{equation}
\end{lemma}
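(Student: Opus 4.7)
The plan is to analyse completeness and soundness of procedure~\ref{procedure:interactiveproof} separately, then choose $q$ to balance the two soundness sub-cases that arise.

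For completeness, I would observe that when $x \in L$ the honest provers accept CALCULATE with probability at least $c_{calc} \geq 2/3$ and pass TEST with probability $c_{test}$ (the honest half of Corollary~\ref{cor:oneshottestadaptivemeasurements}), so
\begin{equation*}
c_{ip} \geq q\, c_{calc} + (1-q)\, c_{test}.
\end{equation*}

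For soundness, I would fix any strategy of the (possibly dishonest) provers; crucially, because the queries sent to any individual prover have the same marginal distribution in the CALCULATE and TEST branches, the very same observables are being applied in both. I would then split into two cases based on how far the resulting CALCULATE output distribution deviates from the ideal adaptive-measurement distribution. Case A (deviation greater than $\delta$): the dishonest half of Corollary~\ref{cor:oneshottestadaptivemeasurements} bounds the test-pass probability by $s_{test}$, so the overall acceptance is at most $f_A(q) := q + (1-q)\, s_{test}$. Case B (deviation at most $\delta$): when $x \notin L$, CALCULATE accepts with probability at most $s_{calc} + \delta$, and TEST pass probability is trivially at most $c_{test}$, so the overall acceptance is at most $f_B(q) := q(s_{calc} + \delta) + (1-q)\, c_{test}$. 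Since $f_A$ is increasing in $q$ and $f_B$ is decreasing, the max is minimised by setting $f_A(q) = f_B(q)$; solving this linear equation yields exactly the $q$ given in the lemma statement.

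At this balanced $q$ one has $s_{ip} = f_B(q)$ and the $c_{test}$ terms in $c_{ip} - s_{ip}$ cancel, leaving
\begin{equation*}
c_{ip} - s_{ip} \geq q\,(c_{calc} - s_{calc} - \delta).
\end{equation*}
The constraints $c_{calc} - s_{calc} \geq 1/3$ and $\delta < 1/6$ give $c_{calc} - s_{calc} - \delta > 1/6$. The denominator of $q$ is bounded above by $2$ (as $c_{test} \leq 1$ and the other terms are nonneg\-ative), and Corollary~\ref{cor:oneshottestadaptivemeasurements} supplies $c_{test} - s_{test} \geq \delta^{8}/(10^{17.7} n^{11})$, so $q \geq \delta^{8}/(2 \cdot 10^{17.7} n^{11})$. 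Combining,
\begin{equation*}
c_{ip} - s_{ip} \geq \frac{1}{12}\cdot\frac{\delta^{8}}{10^{17.7} n^{11}} \geq \frac{\delta^{8}}{10^{18.8} n^{11}},
\end{equation*}
where the last inequality uses $12 < 10^{1.1}$. I expect no conceptual obstacle; the only real work is the case split and verifying that the closed-form $q$ really does equalise $f_A$ and $f_B$. The cleanest pitfall to avoid is forgetting that the provers' observables must be the same in both branches, since it is this common structure that lets Case A rule out large deviations on CALCULATE via a bound on the test-pass probability.
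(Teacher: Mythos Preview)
Your proposal is correct and follows essentially the same approach as the paper. The only cosmetic difference is that you split the soundness analysis on the size of the deviation (Case A/B), whereas the paper splits on whether the test-pass probability is above or below $s_{test}$; by Corollary~\ref{cor:oneshottestadaptivemeasurements} these are contrapositives and yield the identical pair of bounds $f_A,f_B$, the same balancing equation for $q$, and the same final gap estimate $\tfrac{1}{12}(c_{test}-s_{test})$.
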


\begin{proof}
Let $c_{\text{test}}$ be the probability of honest provers passing the test, and let $s_{\text{test}}$ be the probability of dishonest provers passing the test, given in corollary~\ref{cor:oneshottestadaptivemeasurements}.  Here, by dishonest we mean that the probability of some outcome of an adaptive measurement made using the provers differs from the honest case by more than the given $\delta$.  

Then we have two cases:
\begin{itemize}
	\item The input is in the language:  then we only care about the honest case, in which the probability of accepting is $c_{ip} \geq qc_{\text{calc}} + (1-q) c_{\text{test}}$.
	\item The input is not in the language:  then there are two subcases:
	\begin{itemize}
		\item The provers pass the test with probability at least $s_{test}$.  Then by corollary~\ref{cor:oneshottestadaptivemeasurements} the probability of accepting on the calculation is at most $s_{calc} + \delta$ and the probability of accepting on the test is at most $c_{test}$ for an overall probability of at most $q(s_{calc} + \delta) + (1-q)c_{test}$.
		\item The provers pass the test with probability less than $s_{test}$.  Then the probability of accepting on the calculation could be as high as 1, since we gain no information from the test.  The overall probability of accepting is then less than $q + (1-q)s_{test}$.
	\end{itemize}
\end{itemize}
\noindent
The two different cases in the $x \notin L$ case give two different gaps which are, in the first case
\begin{eqnarray}
	 q c_{calc} + 
	 (1-q)c_{test} - 
	 q(s_{calc} + 
	 	\delta
	 ) 
	 - 
	 (1-q)c_{test} 
\\
	 = q(
	 	c_{calc} - 
		 s_{calc} - 
		 \delta
	)
\end{eqnarray}
and in the second case
\begin{eqnarray}
	q c_{calc} + 
	(1-q)c_{test} - 
	q 
	- 
	(1-q)s_{test} 
\\
	= q(
		c_{calc} - 
		1
	) + 
	(1-q)(
		c_{test} - 
		s_{test}
	) .
\end{eqnarray}
The overall gap is the minimum of these two.  We wish to find $q$ which gives maximizes the minimum.  The two equations are just lines in $q$ which cross each other.  At the point where they are equal we find the maximum overall gap.  The crossing point is easily found to be at
\begin{equation}
	q 
	= 
	\frac{
		c_{test} - s_{test}
	}{
		1 + c_{test} - 
		s_{test} - 
		s_{calc} - 
		\delta
	}
\end{equation}
which gives a gap of
\begin{eqnarray}
	c_{ip} - s_{ip}
	& = & 
	\frac{
		(c_{calc} -s_{calc} - \delta)(c_{test} - s_{test})
	}{
		1 + c_{test} - 
		s_{test} - 
		s_{calc} - 
		\delta
	} \\
	& \geq &
	\frac{1}{12}
		\frac{
			\delta^{8}
		}{
		10^{17.7}
		n^{11}
		}
\end{eqnarray}
On the first line, we see that the denominator can be no larger than 2, and the first factor in the numerator is at least $\frac{1}{6}$ (when $\delta = \frac{1}{6}$ and $c_{calc} - s_{calc} = \frac{1}{3}$). 
So we can lower bound the gap by $\frac{1}{12}(c_{test} - s_{test})$, which is estimated in corollary~\ref{cor:oneshottestadaptivemeasurements}.
\end{proof}

We are now in a position to prove our main claim.  This is obtained by applying a standard gap amplification procedure.

\begin{procedure}
\label{procedure:gapamplification}\ 
\begin{enumerate}
	\item Perform procedure~\ref{procedure:interactiveproof} $N$ times and let the number of times the procedure accepts be $M$.
	\item If $M > N\frac{c_{ip} - s_{ip}}{2}$ then accept.
	\item Otherwise reject.
\end{enumerate}
\end{procedure}

\begin{proof}[Proof of Theorem~\ref{theorem:bqpinteractiveproofs}]
Let $L \in \cclass{BQP}$ be given along with input $x$.  Theorem~\ref{theorem:graphstatecomputation} tells us that we can find an adaptive measurement on a triangular cluster state on $n = \text{poly}(|x|)$ vertices whose outcome tells us whether $x \in L$ with error less than $\frac{1}{3}$.  From Lemma~\ref{lemma:interactiveproofoneshot} we have an interactive proof such that if $x \in L$ then we accept with probability at least $c_{ip}$ for honest provers, and if $x \notin L$ we accept with probability at most $s_{ip}$.  

Now we amplify using procedure~\ref{procedure:gapamplification}.  Provided that we use fresh randomness on each run of the interactive proof, the $N$ trials are all independent, although not necessarily identically distributed.

Let us first consider the case $x \in L$.  Then we are interested in the case of honest provers, in which case we have $N$ independent and identical Bernoulli trials with some probability of accepting $p \geq c_{ip}$.  Using Hoeffding's inequality, the probability that we mistakenly reject is bounded by 
\begin{eqnarray}
	P
	\left(
		M 
		\leq
		N
		\frac{
			c_{ip} - s_{ip}
		}{2}
	\right)
	& \leq &
	\exp 
	\left(
	-2
	 \frac{
		\left(
			Np - 
			N
			\frac{
				c_{ip} - s_{ip}
			}{2}
		\right)^2
	}{
		N
	}
	\right)
\\	
	& \leq &
	\exp
	\left(
		-\frac{
			N
			(
				c_{ip} 
				-
				s_{ip}
			)^2	
		}{2}
	\right)
\end{eqnarray}
Setting this equal to $\frac{1}{3}$ we solve for $N$ to find the minimum number of trials to achieve our desired error rate.
\begin{equation}
	N 
	\geq
	\frac{
		2 \ln 3 
	}{
		(
			c_{ip} 
			-
			s_{ip}
		)^2		
	}.
\end{equation}
Subbing in for $c_{ip} - s_{ip}$ as estimated in Lemma~\ref{lemma:interactiveproofoneshot} we obtain
\begin{equation}
	N \geq
	\frac{
		10^{37.9} n^{22}
	}{
		\delta^{16}
	}.
\end{equation}

The same number of repetitions also suffices to bound the probability of accepting when $x \notin L$ to below $\frac{1}{3}$.  The analysis is similar, however if the provers are not honest they may vary their behaviour on each trial, so the trials are not necessarily identically distributed.  However, since the probability $p$ of accepting satisfies $p \leq s_{ip}$ for every trial we can still use Hoeffding's inequality.

Note that there is ambiguity in procedure~\ref{procedure:gapamplification}.  In particular, it does not mention whether the repetition of procedure~\ref{procedure:interactiveproof} should be done serially or in parallel.  In fact this does not matter.  We can add $N$ sets of $n$ provers and query each prover once, or use one set of $n$ provers and query each one $N$ times.  Either way the fact that we use fresh randomness means that the upper bound on the success probability applies for each trial, and any correlations between trials cannot increase this bound.  If $x \notin L$ then there are no provers, whether entangled with other provers in other trials, retaining memory of past trials, or otherwise, that will force the verifier to accept with probability more than $s_{ip}$.  Hence we can specify that the repetition is accomplished in parallel using $nN$ provers, each of which, in the honest case, performs a single measurement.
 
\end{proof}

\section{Discussion}
\subsection{Assumptions}
For our result to hold, we must assume that the provers behave according to quantum mechanics.  This is because we model the provers using the quantum formalism.  Currently this appears to be a reasonable assumption since quantum mechanics has been a very successful theory.  However, we run into a problem if we wish to use this result in certain circumstances. For example, we may wish to verify that quantum mechanics generates accurate predictions for very complex systems.  In this case it becomes infeasible to classically compute predictions from the quantum model.  Quantumly, this is still possible, and we might be tempted to use an interactive proof in order to verify that our quantum computer has done the computation correctly.  However, if we use the arguments in this paper we run into a problem of circularity since we must assume that quantum mechanics is correct in order for the argument to go through, but quantum mechanics is exactly what we want to verify!  Hence it remains an important open question whether it is possible to achieve interactive proofs for problems in BQP where the prover's actions are easy for quantum computers but for which we do not assume \emph{a~priori} that quantum mechanics holds.

\subsection{Time-space trade-offs}
As discussed in the proof of Theorem~\ref{theorem:bqpinteractiveproofs}, there are space-time trade-offs.  In particular we may perform the gap amplification by repeating in parallel or serially, or some mixture of the two.  Hence we could perform $N$ repetitions on $n$ provers, each of which then performs $N$ measurements, or we can repeat in parallel with $nN$ provers, each of which performs a single measurement.

Another factor, which we have not mentioned, is the time required to build the necessary graph states.  Graph states are built by applying $\ctl{Z}$ gates, one for each edge.  At worst this can take no more than $O(n^2)$ operations.  For triangular cluster states, such as we use here, the degree of each vertex in bounded above by 6 so at most $3n$ 2-qubit operations are required (plus $n$ single-qubit state preparations).  These can be parallelized in a constant depth circuit by exploiting the localized structure of triangular cluster states.  Regardless, the state preparation can be accomplished in a polynomial number of steps.

\subsection{Simplicity}
Our construction has a somewhat remarkable property.  Since triangular cluster states are universal, the state preparation depends only on the size of the calculation.  By choosing a discrete set of operations ($X$, $Z$ and $X\pm Z$ measurements are sufficient for universality) the quantum provers are also constant, requiring only the ability to measure in some fixed set of bases.  The self-testing portion of the interactive proof then also only depends on the size of the calculation, since it depends only on the state and the measurements.  Amazingly the classical verifier is also rather simple.  It can be given a circuit as its input from which it reads off what gates to perform and simply looks up what angle to measure for that gate.  The remaining calculations are simply XORs.  This is a clear example of where the simplicity of the measurement-based quantum computing model allows for a simple analysis.

Another interesting property is that no single quantum prover has enough power to convince even \emph{itself} whether the input is in the language.  Indeed, all the quantum parts together, including the state preparation, still cannot perform even simple calculations since they can only prepare and measure some fixed state.  They lack the capacity to perform the XORs required to perform a full measurement-based calculation.  It is only when we combine the verifier, provers, and state preparation together that we obtain enough power to perform any substantial calculations.

\subsection{$X$-$Y$ plane measurements}
It should be possible, although a bit more involved, to use the usual graph state computation model \cite{Raussendorf:2001:A-One-Way-Quant,Raussendorf:2003:Measurementbasedquantum} involving $X$-$Y$ plane measurements.  There is a complication since it is possible to simulate a complex measurements $M$  by using a $M^{*}$ measurements instead (i.e. complex conjugate everything).  In other words, the provers could complex conjugate all their states and operators, and this would preserve the expected values of all operators.  However, the complex conjugation cannot be ``undone'' through an isometry.  Hence the complex conjugated provers would not satisfy equation~\eqref{eq:isometrycondition}.  

As far as correctness of the calculation is concerned, complex conjugation poses no problem since the outcome of an adaptive measurement will be identical to the desired case.  A workaround is possible through a suitable relaxation of~\eqref{eq:isometrycondition}, and details are given for the exact case in \cite{McKague:2011:Generalized-Sel}.  The remaining points are to make the techniques of \cite{McKague:2011:Generalized-Sel} robust, and generalize to $X$-$Y$ plane measurements, which can be done analogously to how $X$-$Z$ plane measurements are handled here. 

\subsection{Future work}

Our error bounds are clearly suboptimal.  In many places we have made only loose estimates which suffice for our purposes of establishing a polynomial bound, but could be made more robust.  Hence one avenue of future improvement is to tighten these bounds.

Currently our construction uses many simple provers, providing a nice complement to Reichardt et al.'s result using a constant number provers.  Much of our result could easily be adapted to the case of two provers.  The most difficult part is the graph-state test.  Likely it is not possible to prove a self-testing theorem for two provers if there are any odd cycles in the graph since it would be necessary at some point to test the entanglement across an edge with both vertices held by a single prover.  However, bipartite graph states could yet be self-tested with two provers.

{\bf Acknowledgements} Thanks to Serge Massar, Stefano Pironio, Iordanis Kerenidis, Fr\'{e}d\'{e}ric Magniez, David Hutchinson and Michael Albert for helpful discussions.  This work is funded by the Centre for Quantum Technologies, which is funded by the Singapore Ministry of Education and the Singapore National Research Foundation, and by the University of Otago.

\bibliographystyle{halphamm}
\bibliography{Global_Bibliography}

\newcommand{\etalchar}[1]{$^{#1}$}
\begin{thebibliography}{MMMO06}
 \providecommand{\doi}[1]{{\sc doi}:\href{http://dx.doi.org/#1}{#1}}
 \providecommand{\urlprefix}{{\sc url} }
 \providecommand{\eprintprefix}{{\sc eprint} }

\bibitem[ABG{\etalchar{+}}07]{Acin:2007:Device-Independ}
Antonio Acin, Nicolas Brunner, Nicolas Gisin, Serge Massar, Stefano Pironio,
  and Valerio Scarani.
\newblock Device-independent security of quantum cryptography against
  collective attacks.
\newblock {\em Physical Review Letters}, {\bf 98}(23):230501, 2007.
\newblock \doi{10.1103/PhysRevLett.98.230501}.
\newblock
  \eprintprefix\href{http://arxiv.org/abs/quant-ph/0702152}{arXiv:quant-ph/0702152}.

\bibitem[ABOE08]{Aharonov:2008:Interactive-Pro}
Dorit Aharonov, Michael Ben-Or, and Elad Eban.
\newblock Interactive proofs for quantum computations, October 2008.
\newblock \eprintprefix\href{http://arxiv.org/abs/0810.5375}{arXiv:0810.5375}.

\bibitem[AV12]{Aharonov:2012:Is-Quantum-Mech}
Dorit Aharonov and Umesh Vazirani.
\newblock Is quantum mechanics falsifiable? a computational perspective on the
  foundations of quantum mechanics, June 2012.
\newblock \eprintprefix\href{http://arxiv.org/abs/1206.3686}{arXiv:1206.3686}.

\bibitem[BFK09]{Broadbent:2008:Universal-blind}
Anne Broadbent, Joseph Fitzsimons, and Elham Kashefi.
\newblock Universal blind quantum computation.
\newblock In {\em 50th Annual IEEE Symposium on Foundations of Computer
  Science, FOCS 2009, October 25-27, 2009, Atlanta, Georgia, USA}, pp.
  517--526, Los Alamitos, CA, USA, 2009. IEEE Computer Society.
\newblock \doi{10.1109/FOCS.2009.36}.
\newblock \eprintprefix\href{http://arxiv.org/abs/0807.4154}{arXiv:0807.4154}.

\bibitem[BLM{\etalchar{+}}09]{Bardyn:2009:Device-independ}
C.-E. Bardyn, T.~C.~H. Liew, S.~Massar, M.~McKague, and V.~Scarani.
\newblock Device-independent state estimation based on bell's inequalities.
\newblock {\em Physical Review A (Atomic, Molecular, and Optical Physics)},
  {\bf 80}(6):062327, 2009.
\newblock \doi{10.1103/PhysRevA.80.062327}.
\newblock \eprintprefix\href{http://arxiv.org/abs/0907.2170}{arXiv:0907.2170}.

\bibitem[CHSH69]{Clauser:1969:Proposed-Experi}
John~F. Clauser, Michael~A. Horne, Abner Shimony, and Richard~A. Holt.
\newblock Proposed experiment to test local hidden-variable theories.
\newblock {\em Phys. Rev. Lett.}, {\bf 23}(15):880--884, Oct 1969.
\newblock \doi{10.1103/PhysRevLett.23.880}.

\bibitem[Cir80]{Cirelson:1980:Quantum-general}
B.~S. Cirel'son.
\newblock Quantum generalizations of bell's inequality.
\newblock {\em Letters in Mathematical Physics}, {\bf 4}(2):93--100, 03 1980.
\newblock \doi{10.1007/BF00417500}.

\bibitem[Die10]{Diestel:2010:Graph-Theory}
Reinhard Diestel.
\newblock {\em Graph Theory}, {\em Heidelberg Graduate Texts in Mathematics},
  volume 173.
\newblock Springer-Verlag, 4 edition, 2010.

\bibitem[Got97]{Gottesman:1997:Stabilizer-Codes-and-Quantum-Error-Correction}
Danial Gottesman.
\newblock {\em Stabilizer Codes and Quantum Error Correction}.
\newblock PhD thesis, Caltech, 1997.
\newblock
  \eprintprefix\href{http://arxiv.org/abs/quant-ph/9705052}{arXiv:quant-ph/9705052}.

\bibitem[LFKN90]{Lund:1990:Algebraic-metho}
C.~Lund, L.~Fortnow, H.~Karloff, and N.~Nisan.
\newblock Algebraic methods for interactive proof systems.
\newblock In {\em Foundations of Computer Science, 1990. Proceedings., 31st
  Annual Symposium on}, pp. 2 --10 vol.1, oct 1990.
\newblock \doi{10.1109/FSCS.1990.89518}.

\bibitem[McK10a]{McKague:2010:Quantum-Informa}
Matthew McKague.
\newblock {\em Quantum Information Processing with Adversarial Devices}.
\newblock PhD thesis, University of Waterloo, June 2010.
\newblock \eprintprefix\href{http://arxiv.org/abs/1006.2352}{arXiv:1006.2352}.

\bibitem[McK10b]{McKague:2010:Self-testing-gr}
Matthew McKague.
\newblock Self-testing graph states.
\newblock In {\em To appear in proceedings of TQC 2011}, October 2010.
\newblock \eprintprefix\href{http://arxiv.org/abs/1010.1989}{arXiv:1010.1989}.

\bibitem[McK12]{McKague:2010:Interactive-pro}
Matthew McKague.
\newblock Interactive proofs with efficient quantum prover for recursive
  fourier sampling.
\newblock {\em Chicago Journal of Theoretical Computer Science}, {\bf 2012}(6),
  September 2012.
\newblock \eprintprefix\href{http://arxiv.org/abs/1012.5699}{arXiv:1012.5699}.

\bibitem[MM11]{McKague:2011:Generalized-Sel}
Matthew McKague and Michele Mosca.
\newblock Generalized self-testing and the security of the 6-state protocol.
\newblock In Wim van Dam, Vivien Kendon, and Simone Severini, editors, {\em
  Theory of Quantum Computation, Communication, and Cryptography}, {\em Lecture
  Notes in Computer Science}, volume 6519, pp. 113--130. Springer Berlin /
  Heidelberg, 2011.
\newblock \doi{10.1007/978-3-642-18073-6\_10}.
\newblock \eprintprefix\href{http://arxiv.org/abs/1006.0150}{arXiv:1006.0150}.

\bibitem[MMMO06]{Magniez:2006:Self-testing-of}
Fr{\'e}d{\'e}ric Magniez, Dominic Mayers, Michele Mosca, and Harold Ollivier.
\newblock Self-testing of quantum circuits.
\newblock In M~et~al. Bugliesi, editor, {\em Proceedings of the 33rd
  International Colloquium on Automata, Languages and Programming}, number 4052
  in Lecture Notes in Computer Science, pp. 72--83, 2006.
\newblock \doi{10.1007/11786986\_8}.
\newblock
  \eprintprefix\href{http://arxiv.org/abs/quant-ph/0512111}{arXiv:quant-ph/0512111v1
  }.

\bibitem[MP12]{Mhalla:2012:Graph-States-Pi}
Mehdi Mhalla and Simon Perdrix.
\newblock Graph states, pivot minor, and universality of (x,z)-measurements.
\newblock 02 2012.
\newblock \eprintprefix\href{http://arxiv.org/abs/1202.6551}{arXiv:1202.6551}.

\bibitem[MS12]{Miller:2012:Optimalrobustquantum}
Carl~A. Miller and Yaoyun Shi.
\newblock Optimal robust quantum self-testing by binary nonlocal xor games.
\newblock 2012.
\newblock \eprintprefix\href{http://arxiv.org/abs/1207.1819}{arXiv:1207.1819}.

\bibitem[MY98]{Mayers:1998:Quantum-Cryptog}
Dominic Mayers and Andrew Yao.
\newblock Quantum cryptography with imperfect apparatus.
\newblock In {\em FOCS}, pp. 503--509, September 1998.
\newblock
  \eprintprefix\href{http://arxiv.org/abs/quant-ph/9809039}{arXiv:quant-ph/9809039}.

\bibitem[MY04]{Mayers:2004:Self-testing-qu}
Dominic Mayers and Andrew Yao.
\newblock Self testing quantum apparatus.
\newblock {\em Quantum Information and Computation}, {\bf 4}(4):273--286, July
  2004.
\newblock
  \eprintprefix\href{http://arxiv.org/abs/quant-ph/0307205}{arXiv:quant-ph/0307205}.

\bibitem[MYS12]{McKague:2012:Robust-self-tes}
M~McKague, T~H Yang, and V~Scarani.
\newblock Robust self-testing of the singlet.
\newblock {\em Journal of Physics A: Mathematical and Theoretical}, {\bf
  45}(45):455304, 2012.
\newblock \doi{10.1088/1751-8113/45/45/455304}.
\newblock \eprintprefix\href{http://arxiv.org/abs/1203.2976}{arXiv:1203.2976}.

\bibitem[PAB{\etalchar{+}}09]{Pironio:2009:Device-independ}
Stefano Pironio, Antonio Acin, Nicolas Brunner, Nicolas Gisin, Serge Massar,
  and Valerio Scarani.
\newblock Device-independent quantum key distribution secure against collective
  attacks.
\newblock {\em New Journal of Physics}, {\bf 11}(4):045021 (25pp), 2009.
\newblock \doi{10.1088/1367-2630/11/4/045021}.
\newblock \eprintprefix\href{http://arxiv.org/abs/0903.4460}{arXiv:0903.4460}.

\bibitem[PAM{\etalchar{+}}10]{Pironio:2010:Random-numbers-}
S.~Pironio, A.~Acin, Antonio, S.~Massar, A.~Boyer de~la Giroday, D.~N.
  Matsukevich, P.~Maunz, S.~Olmschenk, D.~Hayes, L.~Luo, T.~A. Manning, and
  C.~Monroe.
\newblock Random numbers certified by bell's theorem.
\newblock {\em Nature}, {\bf 464}(7291):1021--1024, 04 2010.
\newblock \doi{10.1038/nature09008}.
\newblock \eprintprefix\href{http://arxiv.org/abs/0911.3427}{arXiv:0911.3427}.

\bibitem[RB01]{Raussendorf:2001:A-One-Way-Quant}
Robert Raussendorf and Hans~J. Briegel.
\newblock A one-way quantum computer.
\newblock {\em Phys. Rev. Lett.}, {\bf 86}(22):5188--5191, May 2001.
\newblock \doi{10.1103/PhysRevLett.86.5188}.
\newblock
  \eprintprefix\href{http://arxiv.org/abs/quant-ph/0010033}{arXiv:quant-ph/0010033}.

\bibitem[RBB03]{Raussendorf:2003:Measurementbasedquantum}
Robert Raussendorf, Daniel~E. Browne, and Hans~J. Briegel.
\newblock Measurement-based quantum computation on cluster states.
\newblock {\em Phys. Rev. A}, {\bf 68}:022312, Aug 2003.
\newblock \doi{10.1103/PhysRevA.68.022312}.

\bibitem[RUV12]{Reichardt:2012:A-classical-lea}
Ben~W. Reichardt, Falk Unger, and Umesh Vazirani.
\newblock A classical leash for a quantum system: Command of quantum systems
  via rigidity of chsh games, September 2012.
\newblock \eprintprefix\href{http://arxiv.org/abs/1209.0448}{arXiv:1209.0448}.

\bibitem[Sch01]{Schlingemann:2001:Stabilizer-code}
D.~Schlingemann.
\newblock Stabilizer codes can be realized as graph codes, November 2001.
\newblock
  \eprintprefix\href{http://arxiv.org/abs/quant-ph/0111080}{arXiv:quant-ph/0111080}.

\bibitem[Sha92]{Shamir:1992:IP--PSPACE}
Adi Shamir.
\newblock {IP = PSPACE}.
\newblock {\em Journal of the ACM}, {\bf 39}:869--877, October 1992.
\newblock \doi{10.1145/146585.146609}.

\bibitem[vMMS00]{van-Dam:1999:Self-Testing-of}
Wim {van Dam}, Frederic Magniez, Michele Mosca, and Miklos Santha.
\newblock Self-testing of universal and fault-tolerant sets of quantum gates.
\newblock In {\em STOC '00: Proceedings of the thirty-second annual ACM
  symposium on Theory of computing}, pp. 688--696, New York, NY, USA, 2000.
  ACM.
\newblock \doi{doi:10.1145/335305.335402}.
\newblock
  \eprintprefix\href{http://arxiv.org/abs/quant-ph/9904108}{arXiv:quant-ph/9904108
  }.

\end{thebibliography}

\appendix

\section{Technical lemmas for estimation}\label{appendix:technicallemmas}
We will need to make use of several easy technical results in our proofs.  We collect them here for convenience.

\begin{lemma}\label{lemma:normbounds}
Let $\ket{\psi}$ and $\ket{\phi}$ be normalized states.  Suppose $\bra{\psi} M\ket{\psi} \geq 1 - \alpha$ and $\bra{\psi} N \ket{\psi} \geq 1 - \beta$ where $M^{2} = N^{2} = I$.  Then
\begin{eqnarray}
	\norm{
		\ket{\psi} - M \ket{\psi}
	} & \leq & 
	\sqrt{2\alpha} 
\\
	\norm{
		\ket{\psi} - MN \ket{\psi}
	} & \leq & 
	\sqrt{2} 
	\left(
		\sqrt{\alpha} + 
		\sqrt{\beta}
	\right) 
\end{eqnarray}
Further, if $M$ is unitary and $\ket{\psi_{1}}$ and $\ket{\psi_{2}}$ are normalized states, then
\begin{equation}
\left| \bra{\phi} M \ket{\psi_{1}} - \bra{\phi} M \ket{\psi_{2}} \right| \leq \norm{\ket{\psi_{1}} - \ket{\psi_{2}}}.
\end{equation}
\end{lemma}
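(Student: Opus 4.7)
The proof is a routine computation, so the plan is essentially to expand norms in terms of inner products and then quote the hypotheses. For the first inequality, I would write
\begin{equation}
\norm{\ket{\psi} - M\ket{\psi}}^2 = \bra{\psi}(I - M)^\dagger (I - M)\ket{\psi} = 2 - 2\bra{\psi}M\ket{\psi},
\end{equation}
using that $M$ is a $\pm 1$-valued observable (Hermitian with $M^2 = I$, hence also unitary), so $M^\dagger M = I$ and the cross terms combine. The hypothesis $\bra{\psi}M\ket{\psi}\geq 1 - \alpha$ then gives $\norm{\ket{\psi} - M\ket{\psi}}^2 \leq 2\alpha$, and taking square roots yields the claim.

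For the second inequality, I would split via the triangle inequality,
\begin{equation}
\norm{\ket{\psi} - MN\ket{\psi}} \leq \norm{\ket{\psi} - M\ket{\psi}} + \norm{M\ket{\psi} - MN\ket{\psi}},
\end{equation}
bound the first term by $\sqrt{2\alpha}$ using part one, and for the second term use that $M$ is unitary to pull it out of the norm, giving $\norm{M(\ket{\psi} - N\ket{\psi})} = \norm{\ket{\psi} - N\ket{\psi}} \leq \sqrt{2\beta}$ by the first part applied to $N$. Summing the bounds gives $\sqrt{2\alpha} + \sqrt{2\beta} = \sqrt{2}(\sqrt{\alpha} + \sqrt{\beta})$.

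For the third inequality, I would combine linearity in the right slot with Cauchy--Schwarz:
\begin{equation}
\bigl| \bra{\phi} M \ket{\psi_1} - \bra{\phi} M \ket{\psi_2} \bigr|
= \bigl| \bra{\phi} M (\ket{\psi_1} - \ket{\psi_2}) \bigr|
\leq \norm{M^\dagger \ket{\phi}} \cdot \norm{\ket{\psi_1} - \ket{\psi_2}},
\end{equation}
and then use unitarity of $M$ together with $\norm{\ket{\phi}} = 1$ to reduce $\norm{M^\dagger \ket{\phi}}$ to $1$.

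There is no real obstacle here; the only subtlety worth flagging is that we are treating $M$ and $N$ as observables in the sense used elsewhere in the paper (Hermitian involutions with $\pm 1$ eigenvalues), so that $M^\dagger = M$ and $M$ is automatically unitary. Hermiticity is what makes $\bra{\psi}M\ket{\psi}$ real in the expansion of the first norm, and unitarity is what lets $M$ commute past the norm in the second bound and past the adjoint in the third.
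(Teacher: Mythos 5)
Your proposal is correct and follows essentially the same route as the paper, which proves the first inequality by expanding the norm via the inner product, the second by combining the first with the triangle inequality, and the third via the operator-norm bound $\norm{O\ket{\psi}} \leq \norm{O}_\infty \norm{\ket{\psi}}_2$ with $O = \bra{\phi}M$ (equivalent to your Cauchy--Schwarz step). Your remark that $M$ and $N$ must be read as Hermitian (hence unitary) involutions is a fair and necessary clarification of the hypothesis $M^2 = N^2 = I$, consistent with how these operators are used throughout the paper.
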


The first inequality is a straightforward applications of the definition of $\norm{\cdot}$.  The second inequality is an application of the first, along with the triangle inequality.  The last inequality is an application of the inequality $\norm{O \ket{\psi}} \leq \norm{O}_\infty \norm{\ket{\psi}}_2$ where we use the operator $O = \bra{\phi} M$.

\begin{lemma}\label{lemma:sumoverstrings}
Let $t$ be an $n$-bit string.  Then
\begin{equation}
\sum_{s \in \{0,1\}^{n}} (-1)^{s \cdot t} = 2^{n} \delta_{t}.
\end{equation}
\end{lemma}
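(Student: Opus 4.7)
The plan is to exploit the fact that the dot product $s \cdot t = \sum_{j=1}^n s_j t_j$ splits the exponent into a sum over bit positions, so that the character $(-1)^{s \cdot t}$ factorizes as a tensor product over coordinates. Concretely, I would first write
\begin{equation}
	(-1)^{s \cdot t} = \prod_{j=1}^n (-1)^{s_j t_j},
\end{equation}
which is valid regardless of whether the dot product is taken over $\mathbb{Z}$ or $\mathbb{Z}_2$, since $(-1)^k$ depends only on the parity of $k$.

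Next, since the sum is over the product set $\{0,1\}^n$, I would interchange sum and product (a finite-dimensional Fubini) to obtain
\begin{equation}
	\sum_{s \in \{0,1\}^n} (-1)^{s \cdot t} = \prod_{j=1}^n \left( \sum_{s_j \in \{0,1\}} (-1)^{s_j t_j} \right).
\end{equation}
Then I would evaluate each single-bit factor by cases: if $t_j = 0$, the inner sum is $1 + 1 = 2$; if $t_j = 1$, it is $1 + (-1) = 0$. Therefore the product is $2^n$ when every $t_j = 0$ (i.e.\ $t$ is the zero string) and $0$ otherwise, which is exactly $2^n \delta_t$.

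There is essentially no obstacle here; the statement is a standard character-orthogonality identity for the group $\mathbb{Z}_2^n$, and the only care required is to note that the factorization of $(-1)^{s \cdot t}$ does not depend on which integer representative of the exponent one chooses. An alternative proof by induction on $n$ (splitting off the last bit) would work equally well, but the factorization argument is cleaner and fits the notation already established in the paper.
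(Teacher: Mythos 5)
Your proof is correct. It differs slightly from the paper's own (very brief) justification: the paper argues by counting, observing that for $t \neq 0$ exactly half of the strings $s$ have $s \cdot t$ even and half have it odd, so the terms cancel in pairs; you instead factorize the character $(-1)^{s \cdot t} = \prod_j (-1)^{s_j t_j}$ and interchange sum and product, reducing everything to single-bit sums. The two routes are equally valid and equally short, but yours is arguably the more self-contained: the paper's "half and half" claim itself requires a small argument (e.g.\ pairing $s$ with $s \oplus 1_j$ for some position $j$ where $t_j = 1$), whereas in your version that cancellation falls out automatically from the vanishing of the factor $1 + (-1) = 0$ at any such position. Your remark that only the parity of the exponent matters, so the choice of integer representative of $s \cdot t$ is irrelevant, is also a worthwhile point given that the paper elsewhere switches between $\mathbb{Z}$ and $\mathbb{Z}_2$ interpretations of the dot product.
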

If $t = 0$ then the summand is always 1.  If $t \neq 0$ then half the strings $s$ have inner product $0$ with $t$ and the other half have inner product $1$, so we get a sum with half the summands 1 and the other half -1.

\begin{lemma}\label{lemma:averageinnerproduct}
Let $u \in \{0,1\}^{n}$ be given and let $\mathbf{A}$ be the adjacency matrix for a graph $G = (V,E)$.
\begin{eqnarray}
	\frac{1}{2^{n}}
	\sum_{s \in \{0,1\}^{n}} 
		s \cdot u
	 & = & 
	 \frac{u \cdot u}{2} 
 \\
	\frac{1}{2^{2n}}
	\sum_{s,t \in \{0,1\}^{n}} 
		s \cdot t 
	& = & 
	\frac{n}{4} 
\\
	\frac{1}{2^n}
	\sum_{t \in \{0,1\}^n}
		t \cdot \mathbf{A} t
	& = &
	\frac{|E|}{4}
\end{eqnarray}

\end{lemma}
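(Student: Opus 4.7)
The plan is to prove all three identities by direct counting, exploiting linearity and the fact that under the uniform distribution on $\{0,1\}^n$, each coordinate is independently $1$ with probability $1/2$. None of the identities requires any sophisticated tool beyond interchanging the order of summation.

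For the first identity, I would write $s \cdot u = \sum_v s_v u_v$ and interchange the outer sum over $s$ with the inner sum over $v$. The key computation is $\sum_{s \in \{0,1\}^n} s_v = 2^{n-1}$ for each fixed $v$, since exactly half of all $n$-bit strings have $s_v = 1$. Pulling this out gives $\sum_s s \cdot u = 2^{n-1} \sum_v u_v = 2^{n-1}(u \cdot u)$, and dividing by $2^n$ yields the claimed $(u \cdot u)/2$.

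The second identity is an immediate consequence of the first: fixing $t$ and using the first identity with $u = t$ yields $\frac{1}{2^n} \sum_s s \cdot t = (t \cdot t)/2$, so it remains to show that $\frac{1}{2^n} \sum_t (t \cdot t) = n/2$. But $t \cdot t$ is just the Hamming weight of $t$, and by the same counting argument as above $\sum_t (t \cdot t) = \sum_v \sum_t t_v = n \cdot 2^{n-1}$, which gives the desired $n/4$ after dividing.

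For the third identity I would expand $t \cdot \mathbf{A} t = \sum_{u,v} \mathbf{A}_{u,v}\, t_u t_v$ and interchange summations. Since $\mathbf{A}_{u,u}=0$ only the off-diagonal terms contribute, and for $u \neq v$ the count $\sum_t t_u t_v = 2^{n-2}$ is the number of strings with both coordinates equal to $1$. The remaining work is bookkeeping: combine this with the identity $\sum_{u \neq v} \mathbf{A}_{u,v} = 2|E|$ (the factor $2$ reflects that the symmetric adjacency matrix records each undirected edge twice) to collapse the double sum, then divide by $2^n$. There is no substantive obstacle; the only thing to watch is the factor of $2$ arising from edges being counted as ordered pairs, which is exactly the same double-counting phenomenon that produced the $\tfrac{1}{2}$ in the exponent of the graph-state definition earlier in the paper.
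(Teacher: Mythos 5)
Your treatment of the first two identities is correct and is essentially the paper's own argument: fix a coordinate, count that exactly half (resp.\ a quarter) of the strings contribute, and interchange sums. Nothing to add there.

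The third identity is where your proposal breaks down, and the problem is precisely the ``bookkeeping'' you wave off at the end. Carry out your own computation: you have
\begin{equation}
	\sum_{t \in \{0,1\}^n} t \cdot \mathbf{A} t
	\;=\;
	\sum_{u \neq v} \mathbf{A}_{u,v} \sum_{t} t_u t_v
	\;=\;
	\bigl(2|E|\bigr)\cdot 2^{n-2}
	\;=\;
	|E|\,2^{n-1},
\end{equation}
so dividing by $2^n$ gives $|E|/2$, not the $|E|/4$ claimed in the lemma. Your two ingredients ($2^{n-2}$ strings per ordered pair, and $2|E|$ ordered pairs) are both right, but they do not ``collapse'' to the stated answer --- they are off from it by exactly the factor of $2$ you flagged as the thing to watch. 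A sanity check on a single edge ($n=2$, $|E|=1$) confirms this: only $t=11$ contributes, with $t\cdot\mathbf{A}t = 2$, so the average is $2/4 = 1/2 = |E|/2$. So either you have silently dropped a factor of $2$ in asserting the identity holds, or you should have concluded that the identity as stated is false. (In fact the latter is the case: the paper's own justification treats $t\cdot\mathbf{A}t$ as the number of edges in the induced subgraph on $\{v : t_v=1\}$, whereas --- as the paper itself stresses in its discussion of $\tfrac{1}{2}x\cdot\mathbf{A}x$ --- the quadratic form counts each such edge twice. The correct right-hand side is $|E|/2$, which propagates to a harmless change of constant, $\sqrt{|E|+n} \mapsto \sqrt{2|E|+n}$, in Theorem~\ref{theorem:graphstatetest}.) Your write-up needs to either correct the statement or explicitly exhibit where the missing factor of $2$ would come from; as it stands the final step fails.
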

For the first one, the average inner product of a vector with $u$ is half the number of 1's in $u$.  The second computes this for an average $u$, which has $n/2$ 1's.  For the last one, $t \cdot \mathbf{A}t$ counts the number of edges in the induced subgraph on $S_t = \{v \in V | t_v = 1\}$.  Consider an edge $(u,v)$.  Then $(u,v)$ appears in the induced subgraph on $S_t$ whenever both ends are in $S_t$, i.e. when $t_u = t_v = 1$.  This happens for a quarter of all bit strings $t$.  Hence each edge is counted $2^{n-2}$ times for a total of $2^{n-2}|E|$.

\begin{lemma}
\label{lemma:hadamardn}
Let $x \in \{0,1\}^{n}$.  Then
\begin{equation}
	H^{\otimes n} 
	\ket{x} 
	= 
	\frac{1}{\sqrt{2^{n}}} 
	\sum_{y \in \{0,1\}^{n}} 
		(-1)^{x \cdot y} 
		\ket{y}
\end{equation}

\end{lemma}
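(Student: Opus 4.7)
The plan is to reduce the $n$-qubit identity to the one-qubit Hadamard formula and then distribute the tensor product over the resulting sums. First I would recall the single-bit version: for $b \in \{0,1\}$, direct computation from the matrix form of $H$ gives
\begin{equation}
    H\ket{b} = \frac{1}{\sqrt{2}}\left( \ket{0} + (-1)^{b} \ket{1} \right) = \frac{1}{\sqrt{2}} \sum_{c \in \{0,1\}} (-1)^{bc} \ket{c}.
\end{equation}
This is the only ingredient; everything else is bookkeeping.

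Next I would write $\ket{x} = \ket{x_1} \otimes \cdots \otimes \ket{x_n}$ and use $H^{\otimes n} = H \otimes \cdots \otimes H$ to apply the single-bit formula factor-by-factor:
\begin{equation}
    H^{\otimes n} \ket{x} = \bigotimes_{j=1}^{n} \frac{1}{\sqrt{2}} \sum_{y_j \in \{0,1\}} (-1)^{x_j y_j} \ket{y_j}.
\end{equation}
Distributing the outer tensor product across the $n$ inner sums produces a single sum over vectors $y = (y_1, \dots, y_n) \in \{0,1\}^n$, with summand $\frac{1}{\sqrt{2^n}} \prod_{j=1}^{n} (-1)^{x_j y_j} \ket{y_j}$. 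Collapsing $\prod_j (-1)^{x_j y_j} = (-1)^{\sum_j x_j y_j} = (-1)^{x \cdot y}$ and recombining $\bigotimes_j \ket{y_j} = \ket{y}$ gives exactly the stated expression.

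Equivalently one could run a one-line induction on $n$, with the $n=1$ case as the base and the inductive step peeling off one Hadamard at a time; the combinatorics of the inner product $x \cdot y$ falls out immediately from $(x_1, x') \cdot (y_1, y') = x_1 y_1 + x' \cdot y'$. There is no substantive obstacle here — the lemma is a completely mechanical consequence of the one-qubit identity, and the only thing to be careful about is the factorization of the tensor product through the sums.
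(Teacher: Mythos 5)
Your proof is correct; the paper offers no written proof at all, merely remarking that the lemma is standard and follows by induction on $n$, which is exactly the alternative you note at the end. Your direct factor-by-factor expansion is just that induction unrolled, so the two approaches coincide.
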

This is a standard result in quantum computing, and can be shown using induction on $n$.

\section{Local complementation and self-testing stabilizer states}
\textbf{Local Complementation.}
Let $G$ be a graph.  We may form a new graph $G^{\prime}$ by \emph{local complementation} on a vertex $v \in V$.  This operation complements all the edges in the neighbourhood of $v$, meaning that if $a,b \in V$ are neighbours of $v$ and $(a,b) \in E$ then $(a,b)$ is removed in $G^{\prime}$ and, conversely, if $(a,b) \notin E$ then $(a,b)$ is added in $G^{\prime}$.

Local complementation is relevant to graph states because $\ket{G}$ and $\ket{G^{\prime}}$ will be related by local Clifford operations, which simply relabel Pauli measurements and outcomes.  The standard stabilizer generators for $\ket{G^{\prime}}$ are found from those of $\ket{G}$ by replacing $S_{u}$ with $S_{u}S_{v}$ for each neighbour $u$ of $v$, and exchanging $Z_{v}$ with $Y_{v}$, and $X_{u}$ with $Y_{u}$.  

\textbf{Self-testing stabilizer states.}
First, note that an arbitrary stabilizer state is equivalent to a graph state under local Clifford operations \cite{Schlingemann:2001:Stabilizer-code}.  The local Clifford operations just relabel the Pauli operators (up to $\pm 1$), so the problem of testing stabilizer states reduces to that of testing graph states.

To self-test an arbitrary graph we first divide it into connected components.  The graph state will a product state where the overall graph state is the product of the graph states on the connected components.  We may thus test connected components individually.  If a component has only one vertex then the test is trivial.  If it has two vertices then the graph state on that component is an EPR pair (up to local unitaries), which can be tested using the Mayers-Yao test \cite{Mayers:2004:Self-testing-qu} or the CHSH test \cite{Clauser:1969:Proposed-Experi,McKague:2012:Robust-self-tes}.

Now we consider a connected component with three or more vertices.  We need to show that equation~\eqref{eq:xzanticommutebound} in Lemma~\ref{lem:xzanticommute} holds for each vertex, i.e.\ the $X^\prime_v$ and $Z^\prime_v$ operators approximately anti-commute on the state.  It is shown in \cite{McKague:2010:Self-testing-gr} (Lemma 2) that if equation~\eqref{eq:xzanticommutebound} holds on vertex $v$ then it also holds (with a larger bound) on any neighbour $u$ of $v$.  Hence as long as~\eqref{eq:xzanticommutebound} holds for at least one vertex in a connected component, then it also holds for all other vertices in the component by inducting along paths.

Let us return to our component with three or more vertices.  If it contains a triangle then we test using Lemma~\ref{lem:xzanticommute}.  If not, then let $u,v \in V$ be two vertices in the component that are not adjacent.  Since they are in the same component, there is a shortest path between them of length at least 2.  Look at the first three vertices in this path.  The first and third are not adjacent, otherwise the path would be shorter.  Hence we have an induced path of length 2.  We then locally complement on the middle vertex, obtaining a triangle.  The corresponding graph state is equivalent to the original graph state under local Clifford operations, which can be absorbed into the definition of the local isometry $\Phi$.  Now we have a graph state with a triangle and we can apply Lemma~\ref{lem:xzanticommute} and \cite{McKague:2010:Self-testing-gr} (Lemma 2) to obtain the anti-commuting relation for all vertices.

Having tested all components and obtained local isometries $\Phi_j$ for each component $j$, we simply form $\Phi = \Phi_1 \otimes \dots \otimes \Phi_k$ and apply triangle equalities to obtain the final isometry and bound.
\end{document}